\newtheorem{theorem}{Theorem}
\newtheorem{proposition}[theorem]{Proposition}%
\newtheorem{lemma}[theorem]{Lemma}
\newtheorem{corollary}{Corollary}
\newtheorem{conjecture}{Conjecture}
\theoremstyle{remark}%
\newtheorem{example}{Example}%
\newtheorem{remark}{Remark}%
\theoremstyle{definition}%
\newtheorem{definition}{Definition}%
\newtheorem{construction}{Construction}
\title{In search of maximum non-overlapping codes}
\date{}
\author[1]{Lidija Stanovnik}
\author[1]{Miha Moškon}
\author[1]{Miha Mraz}
\affil[1]{University of Ljubljana, Faculty of Computer and Information Science, Večna pot 113, Ljubljana, Slovenia}
\begin{document}
\maketitle

\abstract{Non-overlapping codes are block codes that have arisen in diverse contexts of computer science and biology.
Applications typically require finding non-overlapping codes with large cardinalities,
    but the maximum size of non-overlapping codes has been determined only for cases where the codeword length divides the size of the alphabet,
    and for codes with codewords of length two or three.
For all other alphabet sizes and codeword lengths no computationally feasible way to identify non-overlapping codes that attain the maximum size has been found to date.
Herein we characterize maximal non-overlapping codes. 
We formulate the maximum non-overlapping code problem as an integer optimization problem and determine necessary conditions for optimality of a non-overlapping code.
Moreover, we solve several instances of the optimization problem to show that the hitherto known constructions do not generate the optimal codes for many alphabet sizes and codeword lengths.
We also evaluate the number of distinct maximum non-overlapping codes.
}

\paragraph{Keywords: }non-overlapping code, cross-bifix-free code, mutually uncorrelated code, strong comma-free code

\section{Introduction}\label{intro}
\emph{Non-overlapping codes}, also known under the terms \emph{strongly regular codes}~\cite{Levenshtein:1964},
\emph{cross-bifix-free codes}~\cite{Bajic:2004,Chee:2013}, \emph{mutually uncorrelated codes} (MU-codes)~\cite{Yazdi:2018,Levy:2018},
and \emph{strong comma-free codes}~\cite{Fimmel:2017}, are block codes with the property that no prefix of a codeword
occurs as a suffix of any not necessarily distinct codeword.
Over the past sixty years, they have provided solutions to various problems in different fields.
They construct a special class of finite automata~\cite{Levenshtein:1970}, provide codes for frame
synchronization~\cite{Bajic:2004}, and even build addresses for DNA-storage~\cite{Yazdi:2018,Levy:2018}.
For most purposes, non-overlapping codes with large cardinalities are favorable and therefore researchers
have developed several constructions of such codes~\cite{Bajic:2014,Blackburn:2015,Barcucci:2016,Chee:2013,Bilotta:2012}.

Identifying non-overlapping codes with the largest cardinality, however, appears to be a hard nut to crack and little progress has been done in this direction.
Chee et al.~\cite{Chee:2013} searched through binary codes of lengths up to sixteen to show that
known constructions of non-overlapping codes obtain the maximum size, but for larger alphabet sizes and code lengths,
their approach cannot be used due to its double exponential time complexity and exponential space complexity.
Later, Blackburn~\cite{Blackburn:2015} provided formulas for the size of maximum two- and three-letter non-overlapping codes,
and for the sizes of maximum $q$-ary $n$-letter non-overlapping codes where $n$ divides $q$.
The number of maximal two- and three-letter, and maximum three-letter non-overlapping codes was recently computed~\cite{Fimmel:2023}.

Fimmel et al.~\cite{Fimmel:2023} also proposed an algorithm that they believe generates all maximal non-overlapping codes.
After proving the statement holds, it could theoretically be used to find the maximum $q$-ary $n$-letter non-overlapping codes, but the approach is impractical due to its double exponential time and exponential space complexity.
Nevertheless, by studying some equivalence relations we show that the time complexity of their algorithm can be reduced to exponential and the
space complexity of their algorithm to polynomial in $n$.
We further prove some necessary conditions for optimality.
We employ the results to compute optimal values for the size
and number of maximum non-overlapping codes that were not yet known, and demonstrate that other existing
constructions fail to generate a maximum code for many alphabet sizes and codeword lengths.

This article is arranged as follows:
Section~\ref{definitions} introduces definitions and briefly presents existing methods to construct non-overlapping codes.
In Section~\ref{construction} we prove that the algorithm proposed by Fimmel et al. generates non-overlapping codes only, determine the sizes of the constructed codes, show that the algorithm produces all maximal non-overlapping codes and characterize when a code is maximal.
We formulate an integer optimization problem for non-overlapping codes of maximum size and determine necessary conditions for its optimality in Section~\ref{sec:optimization}.
Section~\ref{formula:4} provides exact formulas for the number of maximal four-letter non-overlapping codes, and the size and the number of maximum four-letter non-overlapping codes.
In Section~\ref{results} we compare the largest codes produced by existing constructions and the newly computed optimal solutions for small parameter values.

\section{Definitions and constructions}\label{definitions}
Throughout this paper, we let $\Sigma$ be a finite alphabet with $q \geq 2$ elements and $n$ a positive integer.

\begin{definition}
Words $c_1 = x_1\cdots x_n \in \Sigma^n$ and $c_2 = y_1\cdots y_n \in \Sigma^n$ are \emph{non-overlapping} if
\begin{align*}
    \forall k \in \{1,\dots,n-1\}:\quad &x_{n+1-k}\cdots x_n \neq y_1\cdots y_k \\
    \text{and } \quad &y_{n+1-k}\cdots y_n \neq x_1\cdots x_k.
\end{align*}
In other words, no prefix of $c_1$ occurs as a suffix of $c_2$ and vice versa.
\end{definition}
\begin{example}
The word VRV is not non-overlapping because V occurs as a prefix and a suffix.
The words VRT and KRT are non-overlapping because the prefixes V, VR, K and KR never occur as suffixes.
\end{example}

\begin{definition}
$X \subseteq \Sigma^n$ is a \emph{non-overlapping code} if all (not necessarily distinct) elements $c_1, c_2 \in X$ are non-overlapping.
\end{definition}

\begin{definition}
A non-overlapping code $X$ is \emph{maximal}, if it cannot be expanded, meaning that every $n$-letter word over $\Sigma$ that is not in $X$
either has a prefix that overlaps with a suffix in $X$ or a suffix that overlaps with a prefix in $X$.

\end{definition}
\begin{example}
The non-overlapping code $\{\text{VRT}, \text{KRT}\}$ is not maximal, because it could be expanded by adding the word RRT.
The non-overlapping code $\{\text{VRT}, \text{VVT}, \text{RVT}, \text{RRT}\}$ is maximal over the alphabet $\{\text{V}, \text{R}, \text{T}\}$,
because no other three-letter word exists that starts with V or R, ends with T, and includes neither prefix VT nor prefix RT.
\end{example}
\begin{definition}
A non-overlapping code $X$ is \emph{maximum} if for all non-overlapping codes
$Y \subseteq \Sigma^n$: \[\lvert X \rvert \geq \lvert Y \rvert.\]
We denote this greatest cardinality with $S(q,n)$ and define $N(q,n)$ to be the number of codes that achieve it.
\end{definition}
\begin{remark}
Every maximum non-overlapping code is maximal.
\end{remark}

Non-overlapping codes were first defined by Levenshtein~\cite{Levenshtein:1964} who found that they coincide with
a class of codes for which there exists a decoding automaton that correctly decodes a sequence of input letters
independently of the choice of the initial state of the automaton~\cite{Levenshtein:1970}.
This property ensures that errors in the input word and random transitions between states of the automaton do not affect the decoding of subsequent input words.
He proposed Construction~\ref{construction:levenshtein} (below) that generates non-overlapping codes of not necessarily optimal sizes.
He also established a lower bound $S(q,n) \gtrsim \frac{q-1}{qe} \frac{q^n}{n}$~\cite{Levenshtein:1964} and an upper bound
$S(q,n) \leq \left(\frac{n-1}{n}\right)^{n-1}\frac{q^n}{n}$~\cite{Levenshtein:1970} for the size of a maximum non-overlapping code.

\begin{construction}
    \label{construction:levenshtein}
    Let $n > 1$ and $q > 1$  be integers and $1 \leq k \leq n - 1$. Denote by $C$ the set of all codewords $s = (s_1,s_2,\dots,s_n) \in \mathbb{Z}_q^n$ such that $s_1 \neq 0$,  $s_{n-k} \neq 0$, $s_{n-k+1} = s_{n-k+2} = \cdots = s_n = 0$, and $(s_{1},\dots,s_{n-k})$ does not contain $k$ consecutive 0's. Then $C$ is a non-overlapping code.
\end{construction}

His construction was later rediscovered by Chee et al.~\cite{Chee:2013}, who reversed the codewords.
They showed that the size of a code $C$ obtained using this construction and parameter $ 2 \leq k \leq n - 2$ equals
$\lvert C \rvert = (q-1)^2 F_{k,q}(n-k-2)$ with initialization $F_{k,q} (i) = q^i\; \forall i \in \{0, \dots, k-1\}$
and a $(q-1)$-weighted $k$-generalized Fibonacci recurrence relation $F_{k,q}(n) = (q-1) \sum_{l=1}^k F_{k,q}(n-l)$.
Additionally, they searched for maximum binary non-overlapping codes with $n \leq 16$ by determining a maximum clique
in a graph with vertices that correspond to words in $\mathbb{Z}_2^n$ that are non-overlapping and edges that correspond
to pairs of non-overlapping words.
Note that this approach cannot be efficiently applied to larger alphabets and code lengths because computing a maximum
clique of a graph with $m$ vertices requires $O(2^{\frac{m}{4}})$ time~\cite{Robson:2001} and the graph has $O(q^n)$ vertices.

Another construction that generates binary non-overlapping codes was provided by Bilotta et al.~\cite{Bilotta:2012}.
It uses Dyck sequences as described in Construction~\ref{construction:bilotta}, and for both even length $2n+2$ and odd
length $2n+1$ it generates codes with $n$-th Catalan number of codewords, $\lvert C_{2n+2} \rvert = \lvert C_{2n+1} \rvert = \frac{1}{n+1}\binom{2n}{n}$.

\begin{construction}
    \label{construction:bilotta}
    Let $\mathcal{D}$ be a set of Dyck sequences of even length $2n$, i.e. binary sequences composed of $n$ zeros and
    $n$ ones such that no prefix of the sequences has more zeros than ones.
    The set of sequences of even length $C_{2n+2} = \{1a0: a \in \mathcal{D}\}$ and the set of sequences of odd length
    $C_{2n+1} = \{1a: a \in \mathcal{D}\}$ are non-overlapping codes.
\end{construction}

This class of codes is not restricted to binary codes as Wang and Wang~\cite{Wang:2021} showed that any binary
non-overlapping code can be generalized to $q$-ary non-overlapping codes using Construction~\ref{construction:wang} (below).
The size of the generalized codes was, however, not further theoretically analysed.
\begin{construction}
    \label{construction:wang}
    Let $S$ be a binary non-overlapping code and $(I, J)$ a partition of $\mathbb{Z}_q$ with $q \geq 2$.
    For $w = w_1\cdots w_n \in S$ we define a mapping $\Phi(w) = \{y_1\cdots y_n \mid y_i \in I\text{ if } w_i = 0\text{ and }y_i \in J\text{ if } w_i= 1\}$.
    The set $\Phi(S) = \bigcup_{w \in S} \Phi(w)$ is a non-overlapping code.
\end{construction}

Blackburn~\cite{Blackburn:2015} provided the class of non-overlapping codes described in Construction~\ref{construction:blackburn} (below)
and proved that it contains a maximum non-overlapping code when $n$ divides $q$, as a code with $k=n-1$ and $S=I^k$
reaches the upper bound set by Levenshtein.
He also proved that $S(q,2) = \lfloor \frac{q}{2} \rfloor \lceil \frac{q}{2} \rceil $ and
$S(q,3) = \left[\frac{2q}{3}\right]^2 \left(q - \left[\frac{2q}{3}\right]\right)$,
where $\left[x\right]$ denotes the rounding of $x$ to the closest integer.
He conjectured that for $n > 2$ and sufficiently large $q$, a maximum non-overlapping code is given by Construction~\ref{construction:blackburn}
with $k=n-1$ and some value of $l$, but this conjecture is yet to be (dis)proven.

\begin{construction}
    \label{construction:blackburn}
    Let $k$ and $l$ be such integers that $1 \leq k \leq n-1$ and $1 \leq l \leq q - 1$.
    Let $\Sigma = I \cup J$ be a partition of a set $\Sigma$ of cardinality $q$ into two parts $I$ and $J$ of cardinalities $l$ and $q-l$ respectively.
    Let $S \subseteq I^k \subseteq \Sigma^k$.
    Let $C_{S,J}(n)$ be the set of all words $c \in \Sigma^n$ such that $c_1 c_2 \cdots c_k \in S$, $c_{k+1} \in J$, $c_n \in J$
    and no subword of the word $c_{k+2}c_{k+3}\cdots c_{n-1}$ lies in $S$.
    Then $C_{S,J}(n)$ is a non-overlapping code.
\end{construction}
The size of a general code obtained using Blackburn's construction is still an open problem, but Wang and Wang~\cite{Wang:2021}
provided a formula for the case $S = I^k$.
It states that for $k \in \{n-1, n-2\}:\;$ $\lvert C_{I^k,J}(n)\rvert = \lvert I \rvert^k \lvert J \rvert^{n-k}$ and for
$k \leq n - 3:\;$ $\lvert C_{I^k,J}(n) \rvert = q \cdot \lvert C_{I^k,J}(n-1)\rvert - \lvert I \rvert^k \lvert J\rvert \cdot \lvert C_{I^k,J}(n-k-1)\rvert$.

Barcucci et al.~\cite{Barcucci:2016} defined a set of non-overlapping codes given in Construction~\ref{construction:barcucci} (below)
that are generated using colored Motzkin words and showed they are all maximal.
They did not compute the size of the codes.

\begin{definition}
    A sequence in $ \mathbb{Z}_{q+2}^n$ that contains the same number of zeros and ones, such that no prefix of the sequences
    has more zeros than ones is called a \emph{$q$-colored Motzkin word} of length $n$.
    We denote the set of all such sequences with $\mathcal{M}_q(n)$.
\end{definition}

\begin{definition}
An \emph{elevated $q$-colored Motzkin word} of length $n$ is a sequence $1\alpha0$, where $\alpha \in \mathcal{M}_{q}(n-2)$.
We denote the set of all such words with $\hat{\mathcal{M}}_{q}(n)$.
\end{definition}

\begin{construction}
    \label{construction:barcucci}
    $CBFS_q(n) = A_q(n) \cup B_q(n) \cup C_q(n)$ is a maximal non-overlapping code where
    $A_q(n) = \{\alpha\beta: \alpha\in \mathcal{M}_{q-2}(i), \beta \in \hat{\mathcal{M}}_{q-2}(n-i)\} \setminus \{\alpha\beta: \alpha, \beta \in \hat{\mathcal{M}}_{q-2}(\frac{n}{2})\}$,
    $B_q(n) = \{1\alpha\beta: \beta \in \mathcal{M}_{q-2}(i), \beta \in \hat{\mathcal{M}}_{q-2}(n-i-1)\}$, and
    $C_q(n) = \{\gamma 0: \gamma \in \mathcal{M}_{q-2}(n-1), \gamma \neq u\beta v, \beta \in \hat{\mathcal{M}}_{q-2}(j) \}$.
\end{construction}

Fimmel, Michel and Str{\"u}ngmann~\cite{Fimmel:2017} proposed that three-letter non-overlapping codes emerged naturally.
In living cells the ribosome behaves as a decoding automata that decodes a sequence of codons to a sequence of amino acids
using the genetic code as a decoding function.
Data suggests that ancestral genetic code was indeed a non-overlapping code.
Inspired by Blackburn's work, they recently showed that the set of maximal two-letter non-overlapping codes over $\Sigma$ is exactly the set of partitions of $\Sigma$ into two non-empty parts and counted that there are $2^q - 2$ such codes~\cite{Fimmel:2023}.
Moreover, they characterized maximal three-letter non-overlapping codes (see Proposition~\ref{proposition:fimmel} below) and counted that there are $\sum_{m=1}^{q-1} \binom{q}{m}2^{m(q-m)}$ such codes. They also determined $N(q,3) = 2\binom{q}{\left[\frac{2q}{3}\right]}$.

\begin{definition}
    Let $L$ and $R$ be sets of strings. \\
        (i)     $\left(LR\right)$ denotes the concatenation of sets $L$ and $R$, i.e. the set of all strings of the form $lr$, where $l \in L$ and $r \in R$, \\
        (ii) $L^i$ denotes the concatenation of the set $L$ with itself $i$ times, \\
        (iii) the Kleene star denotes the smallest superset that is closed under concatenation and includes the empty set $L^* = \bigcup_{i \geq 0} L^i$ and \\
        (iv) the Kleene plus denotes the smallest superset that is closed under concatenation and does not include the empty set  $L^+ = \bigcup_{i > 0} L^i$.
\end{definition}

\begin{proposition}[\cite{Fimmel:2023}, Theorem 5.3]\label{proposition:fimmel}
    The set of maximal three-letter non-overlapping codes is exactly the set of all three-letter codes $X_3 = (L_1R_2) \cup (L_2R_1)$, where \\ 
        (i) $(L_1, R_1)$ is a partition of $\Sigma$ into two non-empty parts, and \\
        (ii) $(L_2, R_2)$ is a partition of $(L_1 R_1)$.
\end{proposition}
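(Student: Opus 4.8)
The plan is to exploit that for $n=3$ the non-overlapping condition between $x_1x_2x_3$ and $y_1y_2y_3$ reduces to $x_3\neq y_1$ and $x_2x_3\neq y_1y_2$ together with the symmetric pair; running over all ordered pairs of codewords, a code is non-overlapping precisely when no first letter of a codeword equals a last letter of a codeword and no \emph{leading pair} $x_1x_2$ of a codeword equals a \emph{trailing pair} $y_2y_3$ of a codeword. I would prove the two inclusions of the ``exactly'' claim separately, phrasing everything through first/last letters and leading/trailing pairs.

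For the inclusion ``every such $X_3$ is a maximal non-overlapping code'', note first that by construction every word of $X_3=(L_1R_2)\cup(L_2R_1)$ starts in $L_1$ and ends in $R_1$, so $L_1\cap R_1=\varnothing$ kills all length-$1$ overlaps at once. For length $2$ I would classify pairs by the $L_1/R_1$-membership of their two letters: leading pairs lie in $L_1L_1$ or in $L_2$, while trailing pairs lie in $R_2$ or in $R_1R_1$; matching letter-types, a pair that is simultaneously leading and trailing must lie in $L_2\cap R_2=\varnothing$, so $X_3$ is non-overlapping. For maximality I would take $w=w_1w_2w_3\notin X_3$ and split on $w_1\in R_1$, on $w_3\in L_1$, and on $w_1\in L_1\wedge w_3\in R_1$. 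In the first two cases I produce a length-$1$ overlap after checking that every letter of $L_1$ (resp.\ $R_1$) really occurs as a first (resp.\ last) letter of a codeword. In the last case $w\notin X_3$ forces $w_2w_3\in L_2$ when $w_2\in L_1$ and $w_1w_2\in R_2$ when $w_2\in R_1$, and either way $L_2R_1\subseteq X_3$ or $L_1R_2\subseteq X_3$ supplies a codeword whose leading or trailing pair collides with that of $w$.

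For the inclusion ``every maximal non-overlapping code has this form'', let $L_1$ and $R_1$ be the sets of first and last letters of codewords. The length-$1$ condition gives $L_1\cap R_1=\varnothing$ immediately. The delicate point is coverage $L_1\cup R_1=\Sigma$: if $a\notin L_1\cup R_1$, then for any $r\in R_1$ the word $aar$ does not self-overlap and overlaps no codeword — no codeword ends in $a$, so the leading pair $aa$ meets no trailing pair, and no codeword starts with $a$, so the trailing pair $ar$ meets no leading pair — contradicting maximality. With $(L_1,R_1)$ now a partition into non-empty parts, I set $L_2$ to be the pairs of $L_1R_1$ occurring as a leading pair of some codeword and $R_2=(L_1R_1)\setminus L_2$. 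The length-$2$ condition shows no pair is both leading and trailing, so every trailing pair lies in $R_2$; sorting codewords by their middle letter then gives $X\subseteq(L_1R_2)\cup(L_2R_1)$. Conversely, any $\ell pq$ with $pq\in R_2$ and any $pqr$ with $pq\in L_2$ overlaps no codeword, since a collision would force $pq$ into $L_2\cap R_2=\varnothing$; maximality then places both words in $X$, yielding the reverse inclusion.

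I expect the converse direction to be the main obstacle, concentrated in two places. The first is coverage: the naive attempt to adjoin a word assembled from existing first and last letters can always be blocked, and the working idea is instead the self-consistent witness $aar$, whose repeated unused letter appears on neither end of any codeword. The second is the reverse inclusion $X\supseteq(L_1R_2)\cup(L_2R_1)$, where it is \emph{maximality}, not mere non-overlap, that compels the absent words $\ell pq$ and $pqr$ to belong to $X$; everything else is bookkeeping with leading and trailing pairs made routine by the length-$1$/length-$2$ reduction.
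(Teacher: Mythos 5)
Your outline is correct in substance, but there is nothing in the paper to compare it against line by line: the paper does not prove this proposition at all — it imports it as Theorem 5.3 of Fimmel et al.~\cite{Fimmel:2023}. The closest in-paper analogue is Section~\ref{construction}, which generalizes the statement to arbitrary $n$: Proposition~\ref{proposition:z} and Theorem~\ref{theorem:c6} show that every code $\bigcup_i(L_iR_{n-i})$ built from such partitions is non-overlapping, an unnumbered proposition shows every maximal code lies in $\mathcal{M}_{q,n}$, and Theorem~\ref{theorem:characterization} determines which partition collections give maximal codes (for $n=3$ its conditions (i)--(ii) are automatic and (iii)--(iv) are vacuous, which is why the proposition carries no side conditions). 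Your argument is precisely the elementary $n=3$ instance of this machinery: your leading/trailing-pair letter-type bookkeeping plays the role of Proposition~\ref{proposition:z}; your definition of $L_2$ as the leading pairs inside $(L_1R_1)$ mirrors the paper's construction of $L_i$ from prefixes of codewords; and your witness-adjunction arguments (the words $aar$, $\ell pq$, $pqr$) replace the paper's shortcut ``$X\subseteq Z$, $Z$ non-overlapping, $X$ maximal, hence $X=Z$''. One genuine difference: the paper sets $R_1\coloneqq\Sigma\setminus L_1$ by fiat, so coverage of $\Sigma$ is free there, whereas you define $R_1$ as the set of last letters and must prove coverage — your $aar$ witness does this correctly, and it is a nice self-contained device. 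What your route buys is a fully elementary, checkable proof at $n=3$; what the paper's route buys is uniformity in $n$.

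The only gap, and it is small, is in the converse direction: you pick $r\in R_1$ to build the witness $aar$, and the statement requires $L_1$ and $R_1$ to be non-empty, but you never rule out $X=\emptyset$ (for the empty code $L_1=R_1=\emptyset$ and the witness cannot be formed). Close it with one line: for $q\geq 2$ pick distinct $a,b\in\Sigma$; the code $\{abb\}$ is non-overlapping, so the empty code is not maximal, hence any maximal $X$ is non-empty and $L_1,R_1\neq\emptyset$ follows because every codeword has a first and a last letter. With that added, your plan goes through.
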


\section{Constructing maximal non-overlapping codes}\label{construction}
Fimmel et al.~\cite{Fimmel:2023} show that a straightforward generalisation of Proposition~\ref{proposition:fimmel} does not hold.
The underlying construction can, however, be generalized as described in Construction~\ref{construction:fimmel} (below). They posit that it is composed of non-overlapping codes only and that it contains all maximal non-overlapping codes.

\begin{construction}
    \label{construction:fimmel}
    Let $n \geq 3$. We define $\mathcal{M}_{q,n}$ to be the set of all codes
    $C = \bigcup_{i=1}^{n-1} \left(L_i R_{n-i} \right) \subseteq \Sigma^n$, where \\
        (i) $(L_1, R_1)$ is a partition of $\Sigma$ into two non-empty parts, and \\
        (ii) $(L_i, R_i)$ is a partition of $\bigcup_{j=1}^{i-1} \left(L_j R_{i-j} \right)$ for every $i \in \{2,\dots, n-1\}$.
\end{construction}

Before proving both statements, let us observe that the mapping from the partitions of Construction~\ref{construction:fimmel} to the set of codes $C \subseteq \Sigma^n$ is not injective. Two distinct collections of partitions $(L_i,R_i)_{i=1,\dots,n-1}$ can determine the same code as demonstrated by Example~\ref{example:binary4}.
\begin{example} \label{example:binary4}
    Set $n=4$ and take a partition $(L_1,R_1)$ of $\Sigma$.
    If $L_2 = (L_1R_1)$, $R_2 =\emptyset$ and $L_3 = \emptyset$, $R_3 = (L_1R_2) \cup (L_2R_1) = (L_1R_1^2)$,
    we get a code $C_1 = (L_1R_3) \cup (L_2R_2) \cup (L_3R_1) = (L_1^2 R_1^2)$.
    Now observe  $L_2 = \emptyset$, $R_2 =(L_1R_1)$ and $L_3 = (L_1R_2) \cup (L_2R_1) = (L_1^2R_1)$, $R_3 = \emptyset$.
    We get a code $C_2 = (L_1R_3) \cup (L_2R_2) \cup (L_3R_1) = (L_1^2 R_1^2)$.
    In particular, $C_1 = C_2$.
\end{example}

\begin{proposition}
  \label{proposition:z}
    Let $C=\bigcup_{i=1}^{n-1} \left(L_i R_{n-i} \right) \in \mathcal{M}_{q,n}$.
    Define $X \coloneqq \bigcup_{i=1}^{2n+1} X_i$, such that
    \begin{align*}
        X_{2i-1} &\coloneqq L_i, \\
        X_{2i} &\coloneqq R_i, \\
        X_{2n+1} &\coloneqq C.
    \end{align*}
    No proper prefix in $X$ occurs as a proper suffix in $X$.
\end{proposition}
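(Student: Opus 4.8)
The plan is to prove the equivalent statement that no nonempty string is simultaneously a proper prefix of a word of $X$ and a proper suffix of a word of $X$; since equal strings have equal length, it suffices to rule out such a collision length by length. First I would record the elementary structure. An easy induction on $i$ shows that the words of $L_i$ and $R_i$ all have length exactly $i$, that $W_i \coloneqq L_i \cup R_i = \bigcup_{j=1}^{i-1}(L_j R_{i-j})$ is a \emph{disjoint} union, and that $C = W_n$, so that $X = \bigcup_{i=1}^{n} W_i$. Writing $\mathbb{L} \coloneqq \bigcup_i L_i$ and $\mathbb{R} \coloneqq \bigcup_i R_i$, every word of $X$ of length at least two factors as $lr$ with $l \in \mathbb{L}$ and $r \in \mathbb{R}$, though not uniquely (cf. Example~\ref{example:binary4}). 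A second induction shows every word of length $\geq 2$ begins with a letter of $L_1$ and ends with a letter of $R_1$, which already settles the case of length-one prefixes and suffixes via $L_1 \cap R_1 = \emptyset$.

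The key device is to prove a slightly stronger, symmetric claim. Let $\mathcal{P}$ be the set of nonempty proper prefixes of words of $X$ and $\mathcal{S}$ the set of nonempty proper suffixes, and put $\mathcal{P}^+ \coloneqq \mathcal{P} \cup \mathbb{L}$ and $\mathcal{S}^+ \coloneqq \mathcal{S} \cup \mathbb{R}$. I would establish $\mathcal{P}^+ \cap \mathcal{S}^+ = \emptyset$, which implies the proposition. The argument is by minimal counterexample on the length $\ell$ of a hypothetical $u \in \mathcal{P}^+ \cap \mathcal{S}^+$. Suppose $u \notin \mathbb{L}$; then $u \in \mathcal{P}$, so $u$ is a proper prefix of some $w \in X$, which I take of minimal length and factor as $w = lr$ with $l \in L_j$ and $r \in \mathbb{R}$. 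Minimality forces $\ell \geq j$ (else $l$ would be a shorter witness), and $\ell = j$ would give $u = l \in \mathbb{L}$, a contradiction; hence $\ell > j$ and $u = l\,u_R$ with $u_R$ a nonempty proper prefix of $r \in X$, so $u_R \in \mathcal{P}$. Since $u \in \mathcal{S}^+$, either $u$ is a proper suffix of some word of $X$ or $u \in \mathbb{R} \subseteq X$; in both cases $u_R$, being a proper suffix of $u$, is a nonempty proper suffix of a word of $X$, i.e. $u_R \in \mathcal{S}$. Thus $u_R \in \mathcal{P}^+ \cap \mathcal{S}^+$ with $|u_R| < \ell$, contradicting minimality. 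Therefore $u \in \mathbb{L}$, and since $u$ has length $\ell$ this means $u \in L_\ell$.

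Running the mirror-image argument on the suffix side (peeling a trailing right-factor $r'$ and passing to the shorter string $u_L$) forces $u \in R_\ell$ as well, so $u \in L_\ell \cap R_\ell$. Because $u$ is a proper prefix or suffix of a word of length at most $n$ we have $\ell \le n-1$, so $(L_\ell, R_\ell)$ is a genuine partition of $W_\ell$ and $L_\ell \cap R_\ell = \emptyset$, the final contradiction. The step I expect to be most delicate is exactly the one where the non-uniqueness of the factorization $w = lr$ (Example~\ref{example:binary4}) would wreck a naive induction on a fixed parse of $u$; the remedy is never to commit to a factorization of $u$ itself but to descend to the strictly shorter $u_R$ (resp. $u_L$), and to strengthen $\mathcal{P}, \mathcal{S}$ to $\mathcal{P}^+, \mathcal{S}^+$ so that the boundary case in which $u_R$ coincides with a whole right-factor (whence $u_R \in \mathbb{R}$ rather than $\mathcal{S}$) is still absorbed by the induction.
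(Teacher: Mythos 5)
Your proof is correct and is essentially the paper's argument in different packaging: both take a shortest string that occurs simultaneously as a proper prefix and a proper suffix in $X$, factor the containing word as $lr$ with $l \in L_j$ and $r \in \mathbb{R}$, descend to a strictly shorter collision unless the string coincides with the left factor, and then run the symmetric suffix-side argument to land in $L_\ell \cap R_\ell = \emptyset$. Your strengthened sets $\mathcal{P}^+,\mathcal{S}^+$ together with choosing the containing word of minimal length play exactly the role of the paper's auxiliary minimal indices $i$ and $j$, so the mechanism is the same.
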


\begin{proof}
    Assume, for sake of contradiction, that the statement does not hold.
    Let $p$ be the length of the shortest proper prefix in $X$ that occurs as a proper suffix in $X$.
    Let $i$ be the smallest index such that there exists a word $w \in X_i$ with a prefix $x_1\cdots x_p$ that occurs as a suffix in $X$.
    Let $j$ be the smallest index such that $X_j$ contains a word $w'$ that ends in $x_1\cdots x_p$.
    Clearly $i,j > 2$, as $X_1$ and $X_2$ have no proper prefixes nor suffixes.
    Therefore $w \in (L_k R_{\lceil \frac{i}{2} \rceil - k}) \subseteq X_i$ for some $k < \lceil \frac{i}{2}\rceil$.
    
    If $k < p$, then $x_{k+1}\cdots x_p$ is a prefix in $R_{\lceil \frac{i}{2} \rceil - k} = X_{2{\lceil \frac{i}{2} \rceil} - 2k}$ and suffix in $X_j$ that is shorter than $p$.
    Since $\lceil \frac{i}{2} \rceil - k > 1$ (otherwise $x_1\cdots x_p = w$ is not a proper prefix of $w$), this contradicts the minimality of $p$.
    If $p < k$, then $x_1\cdots x_p$ is a prefix in $L_k = X_{2k-1}$.
    Since for $l > 2$ every one-letter prefix in $X$ is from $L_1$ and every one-letter suffix from $R_1$, $p > 1$ and $k > 2$.
    Therefore $2k - 1 > 1$ and $2k - 1 < i$. This contradicts minimality of $i$.
    So $p = k$ and hence $x_1\cdots x_p \in L_k = L_p$. Repeat the symmetric procedure on $X_j$ to obtain a word $w' \in (L_{\lceil \frac{j}{2} \rceil - l} R_l)$ ending in $x_1\cdots x_p$ and $p=l$.
    So $x_1\cdots x_p \in R_l = R_p$. Then $x_1\cdots x_p \in L_p \cap R_p$, but $L_p \cap R_p = \emptyset$ by definition.
\end{proof}

\begin{theorem}
  \label{theorem:c6}
    $C \in \mathcal{M}_{q,n}$ is a non-overlapping code.
\end{theorem}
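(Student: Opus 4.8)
The plan is to derive the theorem directly from Proposition~\ref{proposition:z}, which was set up precisely to carry the combinatorial weight. First I would unfold the definition of a non-overlapping code: for $C$ to be non-overlapping I must show that for every pair of (not necessarily distinct) codewords $c_1, c_2 \in C$ and every length $k \in \{1,\dots,n-1\}$, the length-$k$ prefix of $c_1$ differs from the length-$k$ suffix of $c_2$, and symmetrically. Since all codewords have length exactly $n$, these prefixes and suffixes are precisely the \emph{proper} prefixes and \emph{proper} suffixes of words in $C$.

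Next I would observe that $C$ sits inside the word set $X$ of Proposition~\ref{proposition:z}: by construction $C = X_{2n+1} \subseteq X = \bigcup_{i=1}^{2n+1} X_i$. Consequently every proper prefix of a codeword $c_1 \in C$ is a proper prefix occurring in $X$, and every proper suffix of a codeword $c_2 \in C$ is a proper suffix occurring in $X$.

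The conclusion then follows immediately. Proposition~\ref{proposition:z} asserts that no proper prefix in $X$ coincides with any proper suffix in $X$, so in particular no proper prefix of $c_1$ can equal any proper suffix of $c_2$. Because $c_1$ and $c_2$ were arbitrary elements of $C$, this is exactly the non-overlapping condition, and the argument is complete.

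I do not expect a genuine obstacle here, since Proposition~\ref{proposition:z} already packages the recursive structure of the partitions $(L_i,R_i)$ into a single statement about prefixes and suffixes of $X$. The only point that needs care is the bookkeeping that the non-overlapping condition ranges exactly over proper prefixes and suffixes, i.e.\ lengths $1$ through $n-1$, of length-$n$ words, so that the degenerate case where a ``prefix'' or ``suffix'' would be the entire word is correctly excluded and matches the \emph{proper} qualifier in the proposition.
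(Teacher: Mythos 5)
Your proof is correct and takes exactly the paper's route: the paper's entire proof is the one-line remark that the theorem follows directly from Proposition~\ref{proposition:z}, and your write-up simply makes explicit the bookkeeping (that $C = X_{2n+1} \subseteq X$ and that the non-overlapping condition ranges over proper prefixes and suffixes) which the paper leaves implicit.
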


\begin{proof}
    The theorem follows directly from Proposition~\ref{proposition:z}.
\end{proof}

\begin{corollary}
 \label{remark:l1r1}
    (i) $L_k \cup R_k$ is a non-overlapping code for all $1 \leq k \leq n - 1$.\\
    (ii) If $w_1\cdots w_k \in L_k \cup R_k$ for some $k > 1$, then $w_1 \in L_1$ and $w_k \in R_1$.
\end{corollary}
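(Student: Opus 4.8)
The plan is to read off both parts from Proposition~\ref{proposition:z} together with the defining partition structure of Construction~\ref{construction:fimmel}. For part (i), the key observation is that $L_k \cup R_k = X_{2k-1} \cup X_{2k}$ is a subset of the set $X$ appearing in Proposition~\ref{proposition:z}, and that all of its elements share the common length $k$. Given (not necessarily distinct) $c_1 = x_1\cdots x_k$ and $c_2 = y_1\cdots y_k$ in $L_k \cup R_k$, being non-overlapping means precisely that for each $m \in \{1,\dots,k-1\}$ the prefix $y_1\cdots y_m$ differs from the suffix $x_{k+1-m}\cdots x_k$, and symmetrically with the roles of $c_1$ and $c_2$ exchanged. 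Since $m < k$, each of these prefixes is a proper prefix in $X$ and each of these suffixes is a proper suffix in $X$; Proposition~\ref{proposition:z} guarantees that they never coincide. Hence every pair of words in $L_k \cup R_k$ is non-overlapping, which establishes (i).

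For part (ii), I would proceed by induction on $k$. In the base case $k=2$, the definition gives $L_2 \cup R_2 = (L_1R_1)$, so any $w_1w_2 \in L_2 \cup R_2$ satisfies $w_1 \in L_1$ and $w_2 \in R_1$ immediately. For the inductive step, suppose the claim holds below $k$ and take $w = w_1\cdots w_k \in L_k \cup R_k$. Since $L_k \cup R_k \subseteq \bigcup_{j=1}^{k-1}(L_jR_{k-j})$, I may write $w = l r$ with $l = w_1\cdots w_j \in L_j$ and $r = w_{j+1}\cdots w_k \in R_{k-j}$ for some $1 \le j \le k-1$. If $j=1$ then $w_1 = l \in L_1$; if $j>1$ then $l \in L_j \subseteq L_j \cup R_j$ and the induction hypothesis yields $w_1 \in L_1$. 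The symmetric argument applied to $r \in R_{k-j}$ shows $w_k \in R_1$, which completes the step.

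I expect no deep obstacle here: part (i) is essentially a specialisation of Proposition~\ref{proposition:z}, and part (ii) is a routine unwinding of the recursive partitions. The two points that require care are matching the formal definition of non-overlapping words (for equal-length words it reduces exactly to forbidding a proper prefix of one from equalling a proper suffix of the other, and it must be applied to the self-overlap case $c_1=c_2$ as well), and arranging the induction in (ii) so that the decomposition $w=lr$ feeds both the first-letter and the last-letter claim into the hypothesis. It is worth emphasising that (ii) genuinely relies on the partition definitions rather than on Proposition~\ref{proposition:z} alone, since the latter constrains only coincidences of prefixes and suffixes and says nothing a priori about which side of $(L_1,R_1)$ a boundary letter belongs to.
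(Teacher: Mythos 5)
Your proposal is correct, and it is essentially a fleshed-out version of what the paper asserts in one line (``follows directly from Construction~\ref{construction:fimmel} and Theorem~\ref{theorem:c6}''), but the route you take for part (i) differs in a small, noteworthy way. The paper's citation of Theorem~\ref{theorem:c6} suggests the intended argument is to recognise $L_k \cup R_k = \bigcup_{j<k}\left(L_j R_{k-j}\right)$ as itself a code of the form given by Construction~\ref{construction:fimmel} at length $k$, and then invoke Theorem~\ref{theorem:c6} for that shorter code. You instead stay at length $n$ and apply Proposition~\ref{proposition:z} directly: every word of $L_k \cup R_k$ lies in $X$, every length-$m$ prefix with $m<k$ is a proper prefix in $X$ and every length-$m$ suffix a proper suffix in $X$, so Proposition~\ref{proposition:z} forbids any coincidence. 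This buys you uniformity: $\mathcal{M}_{q,k}$ is only defined for $k \geq 3$, so the re-instantiation route strictly requires special-casing $k \in \{1,2\}$, whereas your argument covers all $1 \leq k \leq n-1$ at once. Your part (ii) — induction on $k$ using the decomposition $w = lr$ with $l \in L_j$, $r \in R_{k-j}$ — is exactly the ``direct from the construction'' content the paper leaves implicit, and your closing observation that (ii) genuinely needs the partition structure rather than Proposition~\ref{proposition:z} alone is accurate: that proposition only rules out prefix--suffix coincidences and says nothing about which side of $(L_1,R_1)$ a boundary letter falls on.
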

\begin{proof}
    The corollary follows directly from Construction~\ref{construction:fimmel} and Theorem~\ref{theorem:c6}.
\end{proof}

\subsection{Size of $C \in \mathcal{M}_{q,n}$}
Now that we know that every $C \in \mathcal{M}_{q,n}$ is non-overlapping, we want to determine the size of $C$. 
Theorem~\ref{prop:size} (below) shows that it depends on the sizes of the sets partitions $(L_i,R_i)_{i=1,\dots,n-1}$ only.
Before stating and proving the formula, we define a finite sequence of decompositions due to Fimmel et al.~\cite{Fimmel:2019} and prove some of its properties.

\begin{definition}
Let $w \in \bigcup_{i=1}^{n-1} \left(L_i R_{n-i} \right)$ and let us set $L \coloneqq \bigcup_{i<n} L_i$ and $R \coloneqq \bigcup_{i<n} R_i$.
    Define $\{p_k(w)\}_{k \geq 0}$ to be a sequence of decompositions of $w$ in $\left(L\left(L\cup R\right)^* R\right)$ encoded with a binary word ${p_k}$ over the alphabet $\{\text{l},\text{r}\}$ such that \\
        (i) $p_0(w) \in \left(\text{l}\{\text{l},\text{r}\}^{n-2}\text{r}\right)$ with $p_0(w)_i = \text{l}$ if $w_i \in L_1$ and $p_0(w)_i=\text{r}$ if $w_i \in R_1$,\\
        (ii) if $p_k(w) \in \left(\text{l}\{\text{l},\text{r}\}^{+}\text{r}\right)$, then $p_{k+1}(w) \in \left(\text{l}\{\text{l},\text{r}\}^* \text{r}\right)$ is obtained by replacing each occurrence of lr in $p_{k}(w)$ by l if lr corresponds to a subword $w'$ of $w$ in $(L_i R_j) \subseteq L_{i+j}$, or by r if it corresponds to a subword $w'$ of $w$ in $(L_i R_j) \subseteq R_{i+j}$ for some positive integers $i$ and $j$.
\end{definition}

\begin{proposition}\label{xy:LR}
Let $x \in L$ and $y \in R$.
If the length of $xy$ is at most $n-1$, then $xy \in L \cup R$.
\end{proposition}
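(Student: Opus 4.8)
The plan is to exploit the graded structure of Construction~\ref{construction:fimmel}: each set $L_k$ and $R_k$ consists of words of a single, fixed length, and concatenating an element of $L_a$ with an element of $R_b$ produces a word that is, by the very definition of the partition at level $a+b$, one of the building blocks of $L_{a+b}\cup R_{a+b}$. Once the length bookkeeping is pinned down, the statement falls out by choosing the correct index in the union.

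First I would record the length invariant: every word in $L_k\cup R_k$ has length exactly $k$. This follows by a short induction on $k$. For $k=1$ we have $L_1\cup R_1=\Sigma\subseteq\Sigma^1$. For the inductive step, the construction gives $L_k\cup R_k=\bigcup_{j=1}^{k-1}\left(L_jR_{k-j}\right)$, and each block $\left(L_jR_{k-j}\right)$ concatenates a length-$j$ word with a length-$(k-j)$ word, yielding a word of length $k$. Hence $L_k,R_k\subseteq\Sigma^k$, and in particular each word belongs to at most one level.

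Now take $x\in L$ and $y\in R$. By the length invariant, $x$ lies in a unique $L_a$ with $a=\lvert x\rvert$ and $y$ in a unique $R_b$ with $b=\lvert y\rvert$; both satisfy $a,b\geq1$ and $a,b<n$. The hypothesis $\lvert xy\rvert\leq n-1$ reads $a+b\leq n-1$, so the level $a+b$ lies in $\{2,\dots,n-1\}$ and the partition $(L_{a+b},R_{a+b})$ of $\bigcup_{j=1}^{a+b-1}\left(L_jR_{a+b-j}\right)$ is defined. Choosing the index $j=a$, which is admissible since $1\leq a\leq a+b-1$ (using $b\geq1$), shows $\left(L_aR_b\right)\subseteq L_{a+b}\cup R_{a+b}$. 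As $xy\in\left(L_aR_b\right)$, we conclude $xy\in L_{a+b}\cup R_{a+b}\subseteq L\cup R$, as claimed.

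I do not expect a genuine obstacle here; the argument is essentially bookkeeping. The only points requiring care are verifying that the concatenation $xy$ lands at exactly the level $a+b$ (which is why the length invariant must be established first) and confirming that the chosen index $j=a$ falls inside the admissible range $\{1,\dots,a+b-1\}$, so that $L_{a+b}\cup R_{a+b}$ is genuinely defined and contains the block $\left(L_aR_b\right)$.
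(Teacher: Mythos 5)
Your proof is correct and follows essentially the same route as the paper's: locate $x\in L_a$ and $y\in R_b$, observe that $xy\in\left(L_aR_b\right)$ lies in the union $\bigcup_{j<a+b}\left(L_jR_{a+b-j}\right)$, which for $a+b<n$ is partitioned into $(L_{a+b},R_{a+b})$, hence $xy\in L\cup R$. The only difference is that you make explicit (via induction) the length invariant that the paper leaves implicit when passing from the hypothesis on $\lvert xy\rvert$ to the index bound $a+b<n$; this is a harmless, slightly more careful rendering of the same argument.
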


\begin{proof}
    Since $x \in L$ there exists some $i$, $0 < i < n$, such that $x \in L_i$.
    Since $y \in R$ there exists some $j$, $0 < j < n$, such that $y \in R_j$.
    Therefore $xy \in (L_iR_j) \subseteq \bigcup_{k < i + j} (L_k R_{i+j-k})$.
    If $i + j < n$, the latter is partitioned into $(L_{i+j}, R_{i+j})$, so either $xy \in L_{i+j} \subseteq L$ or $xy \in R_{i+j} \subseteq R$.
\end{proof}

\begin{proposition}\label{proposition:pk}
    The sequence $\{p_k(w)\}_{k \geq 0}$ is \\
    (i) well-defined,\\
    (ii) finite and its last element is {\normalfont{lr}}.
\end{proposition}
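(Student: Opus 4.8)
The plan is to prove both parts by tracking, alongside each label word $p_k(w)$, the induced factorisation $w = u_1 u_2 \cdots u_m$ in which $u_t \in L$ precisely when the $t$-th letter of $p_k(w)$ is $\mathrm{l}$ and $u_t \in R$ precisely when it is $\mathrm{r}$. First I would record three facts that make a single reduction step unambiguous. Since $L_i \subseteq \Sigma^i$ and $R_j \subseteq \Sigma^j$, the sets $\{L_i\}$ are pairwise disjoint, the sets $\{R_j\}$ are pairwise disjoint, and $L \cap R = \emptyset$ (a common element would lie in some $L_i \cap R_i$, which is empty by Construction~\ref{construction:fimmel}); hence every factor lies in a unique $L_i$ or $R_j$, fixed by its length, so its label is determined. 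Next, two occurrences of $\mathrm{lr}$ in a binary word cannot overlap, so the simultaneous replacement is meaningful. Finally, when $p_k(w) \in (\mathrm{l}\{\mathrm{l},\mathrm{r}\}^{+}\mathrm{r})$ has length at least three, merging an $\mathrm{lr}$ at factors $u_t \in L_i$, $u_{t+1} \in R_j$ leaves at least one further factor, so $\lvert u_t u_{t+1}\rvert = i+j \le n-1$ and Proposition~\ref{xy:LR} gives $u_t u_{t+1} \in L_{i+j} \cup R_{i+j}$, determining the new label and showing that all factors of $p_{k+1}(w)$ again lie in $L \cup R$. For the base case, each letter of $w$ lies in exactly one of $L_1, R_1$, and because $w \in (L_a R_{n-a})$ for some $a$, Corollary~\ref{remark:l1r1}(ii) forces $w_1 \in L_1$ and $w_n \in R_1$, so $p_0(w) \in (\mathrm{l}\{\mathrm{l},\mathrm{r}\}^{n-2}\mathrm{r})$.

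The hard part will be showing that a step again returns a word in $(\mathrm{l}\{\mathrm{l},\mathrm{r}\}^{*}\mathrm{r})$, i.e. that the first factor of $p_{k+1}(w)$ stays in $L$ and the last stays in $R$. Since $\lvert w \rvert = n$ while every word of $L \cup R$ is shorter than $n$, no single factor can equal $w$, so $p_{k+1}(w)$ has at least two factors and its first factor is a proper prefix of $w$ lying in $L \cup R$. I would therefore isolate the key claim: every proper prefix $v$ of $w$ with $v \in L \cup R$ in fact lies in $L$ (and, symmetrically, every proper suffix in $L\cup R$ lies in $R$). To prove it I would argue by contradiction using Proposition~\ref{proposition:z}. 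Suppose $v \in R_m$ with $1 \le m \le n-1$. Choosing any $x \in L_1$ (non-empty by Construction~\ref{construction:fimmel}), the word $xv$ lies in $(L_1 R_m)$; if $m \le n-2$ this set is contained in $L_{m+1} \cup R_{m+1}$, and if $m = n-1$ it is the $i=1$ term $(L_1 R_{n-1})$ of $C$, so in either case $xv \in X$. Thus $v$ is a proper suffix of a word in $X$, whereas $v$ is a proper prefix of $w \in C \subseteq X$, contradicting Proposition~\ref{proposition:z}. Hence $v \in L$; the suffix statement is the mirror image (append $y \in R_1$ instead). Applying the claim to the first and last factors of $p_{k+1}(w)$ yields the required invariant, completing part (i).

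With the invariant in hand, finiteness is routine. Each $p_k(w)$ begins with $\mathrm{l}$ and ends with $\mathrm{r}$, hence has length at least two. A binary word over $\{\mathrm{l},\mathrm{r}\}$ containing no factor $\mathrm{lr}$ must have the shape $\mathrm{r}^{*}\mathrm{l}^{*}$, and such a word cannot simultaneously start with $\mathrm{l}$ and end with $\mathrm{r}$ once its length exceeds one; therefore every $p_k(w)$ of length at least three contains an $\mathrm{lr}$, and the reduction strictly decreases the length. The lengths thus form a strictly decreasing sequence bounded below by two, so the sequence $\{p_k(w)\}_{k\ge 0}$ is finite and terminates exactly when the length reaches two, which---being of the form $\mathrm{l}\{\mathrm{l},\mathrm{r}\}^{*}\mathrm{r}$---can only be $\mathrm{lr}$. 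The single genuine obstacle is the prefix/suffix claim of the middle paragraph; everything else is bookkeeping built on Proposition~\ref{xy:LR} and the disjointness of the $L_i$ and $R_j$.
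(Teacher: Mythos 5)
Your proof is correct and takes essentially the same route as the paper's: membership of merged factors in $L \cup R$ via Proposition~\ref{xy:LR}, the forced labels l at the start and r at the end via a contradiction with Proposition~\ref{proposition:z} (prepending a letter of $L_1$, respectively appending a letter of $R_1$, to produce a word of $X$ ending, respectively beginning, in the offending factor), and finiteness from the strictly decreasing lengths. If anything, your write-up is slightly more careful than the paper's, e.g.\ in explicitly treating the case where the merged prefix has length $n-1$ (so the auxiliary word lies in $C$ rather than in $L \cup R$) and in noting the disjointness facts that make the replacement unambiguous.
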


\begin{proof}
    (i) Every letter in $w$ belongs to $\Sigma = L_1 \cup R_1$.
    If lr occurs as a proper substring of $p_k(w)$, then by Proposition~\ref{xy:LR} it corresponds to some $k$-letters long substring $w'$ of $w$, such that $w' \in L \cup R$.
    Therefore lr can be replaced by either l or r.
    If lr occurs at the beginning of $p_k(w)$, then it is replaced by l. Otherwise $w' \in R_i$ for some $i < n - 1$ and there is a word in $(L_1R_i) \in L \cup R$ that ends in $w'$ which contradicts Proposition~\ref{proposition:z}.
    A symmetric observation guarantees that a lr at the end of $p_k(w)$ is replaced by r.

    (ii) If $p_k(w) \in (\text{l}\{\text{l},\text{r}\}^+\text{r})$, then the length of $p_{k+1}(w)$ is strictly smaller then the length of $p_k(w)$.
    If $p_k(w) \not\in (\text{l}\{\text{l},\text{r}\}^+\text{r})$, then $p_{k+1}$ is not defined and $p_k(w)$ is the last element of the sequence.
    The previous step reveals that then $p_k(w) = \text{lr}$.
\end{proof}

\begin{corollary}\label{corrolary:pk}
    Let $1 < l < n$ and $w \in L_l \cup R_l$. \\
    The sequence $\{p_k(w)\}_{k \geq 0}$ is finite. The last element of the sequence is {\normalfont{lr}}.
\end{corollary}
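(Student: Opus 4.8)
Rather than completing $w$ to a length-$n$ codeword and invoking Proposition~\ref{proposition:pk} directly---which can fail, since the complementary block $R_{n-l}$ (when $w\in L_l$) or $L_{n-l}$ (when $w\in R_l$) may be empty---the plan is to observe that the proof of Proposition~\ref{proposition:pk} uses $|w|=n$ only through the length hypothesis of Proposition~\ref{xy:LR}, and hence transfers verbatim to any word of length $l\le n-1$. First I would confirm that the sequence $\{p_k(w)\}_{k\ge 0}$ is set up exactly as before: because $(L_l,R_l)$ partitions $\bigcup_{j=1}^{l-1}(L_j R_{l-j})$, the word $w$ decomposes as $w=xy$ with $x\in L$ and $y\in R$, and Corollary~\ref{remark:l1r1}(ii) gives $w_1\in L_1$ and $w_l\in R_1$, so $p_0(w)\in(\text{l}\{\text{l},\text{r}\}^{l-2}\text{r})$ is well-defined.

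The decisive point is that every reduction stays within the hypothesis of Proposition~\ref{xy:LR}. Whenever $p_k(w)\in(\text{l}\{\text{l},\text{r}\}^{+}\text{r})$ it has length at least three, so each occurrence of lr corresponds to two adjacent blocks whose concatenation $w'$ is a \emph{proper} substring of $w$; thus $|w'|<l\le n-1$, and Proposition~\ref{xy:LR} gives $w'\in L\cup R$, so lr may be replaced by l or r. For the boundary cases I would repeat the argument of Proposition~\ref{proposition:pk}: were a leading lr to have its block $w'\in R_m$, then choosing any $a\in L_1$ (nonempty, as $(L_1,R_1)$ partitions $\Sigma$) gives $aw'$ of length $1+m\le l\le n-1$, so $aw'\in L\cup R\subseteq X$ by Proposition~\ref{xy:LR}; but then $w'$ is simultaneously a proper prefix of $w\in X$ and a proper suffix of $aw'\in X$, contradicting Proposition~\ref{proposition:z}. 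Hence a leading lr reduces to l and, symmetrically, a trailing lr reduces to r, so $p_{k+1}(w)$ again starts with l and ends with r.

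Finiteness and the last element then follow as in Proposition~\ref{proposition:pk}: each admissible step strictly shortens the encoding, so the sequence terminates, and since every $p_k(w)$ starts with l and ends with r the process can only halt once the length has dropped to two, i.e. at $p_k(w)=\text{lr}$. The sole place where $l$ intervenes is the bound $l\le n-1$ needed to apply Propositions~\ref{xy:LR} and~\ref{proposition:z}; since $w$ is shorter than a codeword this bound is only easier to satisfy, so I expect no genuine obstacle beyond transcribing the earlier proof with $n$ replaced by $l$ in the passive estimates.
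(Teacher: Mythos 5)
Your proposal is correct and takes essentially the same route as the paper: the paper's one-line proof simply applies Proposition~\ref{proposition:pk} to the code $L_l \cup R_l = \bigcup_{j<l}\left(L_j R_{l-j}\right)$, i.e.\ at parameter $l$ in place of $n$, which is exactly the transfer you carry out explicitly, using Proposition~\ref{xy:LR} for the reduction step and Proposition~\ref{proposition:z} for the leading/trailing lr cases just as the proof of Proposition~\ref{proposition:pk} does. Your version merely inlines that argument (and rightly notes that one should not instead try to extend $w$ to a length-$n$ codeword, since the complementary sets may be empty).
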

\begin{proof}
    If $w$ belongs to $L_l$ (alternatively to $R_l$), then it also belongs to $L_l \cup R_l$. The latter is a non-overlapping code as noted in Corollary~\ref{remark:l1r1}, so the statement follows from Proposition~\ref{proposition:pk}.
\end{proof}

\begin{theorem}\label{prop:size}
    $\lvert \bigcup_{i=1}^{n-1} \left(L_i R_{n-i} \right) \rvert = \sum_{i=1}^{n-1} \lvert L_i\rvert \lvert R_{n-i}\rvert$.
\end{theorem}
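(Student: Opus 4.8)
The plan is to reduce the identity to two independent facts: that the cardinality of each concatenation block $L_i R_{n-i}$ equals the product $|L_i|\,|R_{n-i}|$, and that the $n-1$ blocks $(L_i R_{n-i})_{i=1,\dots,n-1}$ are pairwise disjoint. Granting both, the union is a disjoint union, so its cardinality is $\sum_{i=1}^{n-1} |L_i R_{n-i}| = \sum_{i=1}^{n-1} |L_i|\,|R_{n-i}|$, which is exactly the claim. The second fact is the real content; the first is bookkeeping.

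For the product formula, I would first record, by induction on $i$ using Construction~\ref{construction:fimmel}, that every word in $L_i$ and in $R_i$ has length exactly $i$. The base case $i=1$ is the partition of $\Sigma$ into one-letter words, and the inductive step follows because $(L_i,R_i)$ partitions $\bigcup_{j<i}(L_j R_{i-j})$, a set of words of length $i$ by the induction hypothesis. Consequently $L_i R_{n-i} \subseteq \Sigma^n$, and the concatenation map $(x,y)\mapsto xy$ on $L_i \times R_{n-i}$ is injective: from a word $w$ of length $n$ the length-$i$ prefix recovers $x \in L_i$ and the remaining length-$(n-i)$ suffix recovers $y \in R_{n-i}$. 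Hence $|L_i R_{n-i}| = |L_i|\,|R_{n-i}|$.

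The crux is disjointness, and here Proposition~\ref{proposition:z} does the work. Suppose, for contradiction, that some $w$ lies in $L_i R_{n-i} \cap L_{i'} R_{n-i'}$ with $1 \le i < i' \le n-1$, so that $w = xy = x'y'$ with $x \in L_i$, $y \in R_{n-i}$, $x' \in L_{i'}$, $y' \in R_{n-i'}$. Because $|x| = i < i' = |x'|$, the word $x$ is a prefix of $x'$; writing $x' = xz$ with $|z| = i' - i \ge 1$ forces $y = zy'$. Thus the nonempty word $z$ is a proper suffix of $x' \in L_{i'} \subseteq X$ (proper since $i \ge 1$) and at the same time a proper prefix of $y \in R_{n-i} \subseteq X$ (proper since $i' \le n-1$). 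This is precisely a proper prefix in $X$ that occurs as a proper suffix in $X$, contradicting Proposition~\ref{proposition:z}. Therefore the blocks are pairwise disjoint and the theorem follows.

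The main obstacle is exactly this disjointness step: it must rule out a second split point for any codeword, and that is the no-overlap rigidity encoded in the construction. I expect the decomposition sequence $\{p_k(w)\}_{k\ge 0}$ to be useful for exhibiting, for each $w \in C$, one canonical split into some $L_i R_{n-i}$ (its terminal value $\mathrm{lr}$), guaranteeing existence of a decomposition; but it is Proposition~\ref{proposition:z} that forbids a competing split and thereby pins down the cardinality. If that proposition were unavailable, one would have to reprove the same uniqueness by hand, which is where all the difficulty would concentrate.
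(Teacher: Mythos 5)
Your proof is correct, but it reaches the key disjointness fact by a genuinely shorter route than the paper. Both arguments reduce the theorem to showing that no word $w$ can lie in $(L_iR_{n-i})\cap(L_jR_{n-j})$ for $i<j$ (plus the length/injectivity bookkeeping, which you spell out and the paper leaves implicit). The paper, however, establishes this by re-running the decomposition-sequence machinery: it shows that $\{p_k(w_{i+1}\cdots w_j)\}_{k\geq 0}$ is well-defined and finite with last element lr, and concludes in a three-step argument that the middle segment $w_{i+1}\cdots w_j$ lies in $L_{j-i}\cap R_{j-i}=\emptyset$. You instead observe that this same middle segment $z=w_{i+1}\cdots w_j$ is already a proper suffix of $w_1\cdots w_j\in L_j\subseteq X$ (proper since $i\geq 1$) and a proper prefix of $w_{i+1}\cdots w_n\in R_{n-i}\subseteq X$ (proper since $j\leq n-1$), so Proposition~\ref{proposition:z} applies verbatim and yields the contradiction in one stroke — no $p_k$ sequences needed. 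Your application of that proposition is legitimate: it covers proper prefixes and suffixes of all the sets $L_k$, $R_k$, not just of $C$. What the paper's longer route buys is a sharper intermediate conclusion (the middle segment is pinned into $L_{j-i}\cap R_{j-i}$, not merely shown to violate the prefix/suffix property) and a rehearsal of the $\{p_k\}$ apparatus that the authors reuse immediately afterwards in the proofs of maximality; what your route buys is brevity and a clearer view of where the actual rigidity lies, namely entirely in Proposition~\ref{proposition:z}.
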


\begin{proof}
    It is sufficient to explain that for every word $w \in \bigcup_{i=1}^{n-1} \left(L_i R_{n-i} \right)$ there is a unique pair of sets $L_i$ and $R_{n-i}$ such that
    $w \in \left(L_i R_{n-i} \right)$.

    Assume, for sake of contradiction, that there exists a word $w = w_1\cdots w_n \in \bigcup_{i=1}^{n-1} \left(L_i R_{n-i} \right)$ and indices $i < j$ such that $w \in (L_iR_{n-i})$ and $w \in (L_jR_{n-j})$.
    Observe the sequence $\{p_k(w_{i+1}\cdots w_j)\}_{k \geq 0}$.\\
    \textbf{STEP 1:} The sequence $\{p_k(w_{i+1}\cdots w_j)\}_{k \geq 0}$ is well-defined, i.e. $\forall k \geq 0 :\; p_k \in (\text{l}\{\text{l},\text{r}\}^*\text{r})$.

    \textbf{(i)} $p_0 (w_{i+1} \cdots w_j) \in (\text{l}\{\text{l},\text{r}\}^{j-i-2}\text{r})$.\\
        Since $1 \leq i < j$ and  $w_1 \cdots w_j \in L_j$, Corollary~\ref{remark:l1r1} implies $w_{j} \in R_1$. 
        $i+1 \leq j < n$ and $w_{i+1} \cdots w_n \in R_{n-i}$, so Corollary~\ref{remark:l1r1} implies $w_{i+1} \in L_1$.

    \textbf{(ii)} If $p_k(w_{i+1}\cdots w_j) \in (\text{l}\{\text{l},\text{r}\}^+\text{r})$, then $p_{k+1}(w_{i+1}\cdots w_j) \in (\text{l}\{\text{l},\text{r}\}^*\text{r})$.\\
    Suppose lr occurs in a decomposition $p_k(w_{i+1}\cdots w_j) \in (\text{l}\{\text{l},\text{r}\}^+\text{r})$ and corresponds to a substring $w'$. Then the same lr occurs in the part of the decomposition $p_k(w_1 \cdots w_j)$ that corresponds to the substring $w'$ of $w_{i+1}\cdots w_j$. Therefore,  $w' \in L \cup R$.
    If $w'$ is a suffix of $w_{i+1}\cdots w_j$, then lr is also a suffix of $p_k(w_1 \cdots w_j)$ and $w' \in R$.
        At the same time lr occurs in the part of the decomposition $p_{k}(w_{i+1}\cdots w_n)$ that corresponds to $w'$. 
        If $w'$ is a prefix of $w_{i+1}\cdots w_j$, then lr occurs as a prefix of $p_{k}(w_{i+1}\cdots w_n)$ and $w' \in L$.
        Therefore $p_{k+1}(w_{i+1}\cdots w_j) \in (\text{l}\{\text{l},\text{r}\}^*\text{r})$.\\
    \textbf{STEP 2:} The sequence $\{p_k(w_{i+1}\cdots w_j)\}_{k \geq 0}$ is finite, the last element $p_{\hat{k}}(w_{i+1}\cdots w_j) = \text{lr}$. \\
    We noticed in step 1 that every decomposition $p_k(w_{i+1}\cdots w_j)$ is a substring of $p_k(w_{i+1} \cdots w_n)$.
    Since $\{p_k(w_{i+1} \cdots w_n)\}_{k \geq 0}$ is finite due to Corollary~\ref{corrolary:pk}, $\{p_k(w_{i+1}\cdots w_j)\}_{k \geq 0}$ is also finite. 
    From part (ii) of step 1 it follows that the last element $p_{\hat{k}}(w_{i+1}\cdots w_j)$ equals lr. \\
    \textbf{STEP 3:} $w_{i+1}\cdots w_j \in L_{j-i} \cap R_{j-i}$. \\
    The process described in part (ii) of Step 1 shows that $p_{\hat{k}}(w_{1}\cdots w_j) \in (\text{l}\{\text{l},\text{r}\}^*\text{lr})$ and the last lr corresponds to $w_{i+1}\cdots w_j$, so $w_{i+1}\cdots w_j \in R_{j-i}$.
    Due to the same argument $w_{i+1}\cdots w_j$ corresponds to the first lr of $p_{\hat{k}}(w_{i+1}\cdots w_n) \in (\text{lr}\{\text{l},\text{r}\}^*\text{r})$, so $w_{i+1}\cdots w_j \in L_{j-i}$.
    $(L_{j-i}, R_{j-i})$ is a partition, so $L_{j-i} \cap R_{j-i} = \emptyset$ and $w_{i+1}\cdots w_j = \epsilon$, but this contradicts the assumption that $i < j$.
\end{proof}

\subsection{Characterization of maximal non-overlapping codes}
We will now show that $\mathcal{M}_{q,n}$ contains all maximal non-overlapping codes and determine which collections of partitions $(L_i, R_i)_{i=1,\dots,n-1}$ generate maximal non-overlapping codes.

\begin{proposition}
    Every maximal non-overlapping code is contained in $\mathcal{M}_{q,n}$.
\end{proposition}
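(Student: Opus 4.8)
The plan is to construct, from a given maximal non-overlapping code $X\subseteq\Sigma^n$, an explicit collection of partitions realising $X$ as a code in $\mathcal{M}_{q,n}$. Let $R_1$ be the set of letters that occur as the last letter of some codeword of $X$, and put $L_1:=\Sigma\setminus R_1$. Proceeding inductively, write $A_i:=\bigcup_{j=1}^{i-1}\left(L_jR_{i-j}\right)$ and set $L_i:=\{w\in A_i : w \text{ is a proper prefix of some codeword of } X\}$ and $R_i:=A_i\setminus L_i$, for $i=2,\dots,n-1$. I would first check that this is an admissible instance of Construction~\ref{construction:fimmel}: each $(L_i,R_i)$ partitions $A_i$ by definition, and $(L_1,R_1)$ partitions $\Sigma$ into two non-empty parts, since every first letter of a codeword lies in $L_1$ (a length-one prefix cannot be a length-one suffix, as $X$ is non-overlapping) while every last letter lies in $R_1$, and both kinds occur because $X\neq\emptyset$. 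Hence the associated code $C:=\bigcup_{i=1}^{n-1}\left(L_iR_{n-i}\right)$ belongs to $\mathcal{M}_{q,n}$ and is non-overlapping by Theorem~\ref{theorem:c6}. It then remains to prove $X=C$.

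For the inclusion $X\subseteq C$ I would show that every codeword $c=c_1\cdots c_n\in X$ admits a split $c=\ell\rho$ with $\ell\in L_i$ and $\rho\in R_{n-i}$ for some $i$. The tool is the reduction sequence $\{p_k(c)\}$: since $c_1\in L_1$ and $c_n\in R_1$, the word $p_0(c)$ begins with l and ends with r, and I would maintain the invariant that each symbol of $p_k(c)$ corresponds to a contiguous block of $c$ lying in $L$ when the symbol is l and in $R$ when it is r. The leading block is always a prefix of $c$ and hence lies in $L$ by the definition of the sets $L_i$; the trailing block is a suffix of $c$ and therefore, since $X$ is non-overlapping, is not a prefix of any codeword, so it lies in $R$; thus $p_k(c)$ keeps beginning with l and ending with r. Any such binary word of length greater than two contains an occurrence of lr spanning two consecutive blocks of total length at most $n-1$, which by Proposition~\ref{xy:LR} corresponds to a subword in $L\cup R$ and is therefore contractible; so each reduction step strictly shortens $p_k(c)$, and the process can only halt at $p=\mathrm{lr}$, because the single occurrence of lr that spans all of $c$ has length $n$ and is not contractible. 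The two blocks of this final lr give exactly the desired split, whence $c\in\left(L_iR_{n-i}\right)\subseteq C$. This parallels Propositions~\ref{proposition:z} and~\ref{proposition:pk} but is re-derived for a length-$n$ codeword, which lies in no $L_l\cup R_l$.

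The reverse inclusion is then immediate: since $X\subseteq C$ and $C$ is non-overlapping, for any $c\in C$ the set $X\cup\{c\}\subseteq C$ is non-overlapping as well; were some $c\in C$ not in $X$, this would be a non-overlapping code strictly larger than $X$, contradicting maximality. Hence $C\subseteq X$, and therefore $X=C\in\mathcal{M}_{q,n}$.

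I expect the main obstacle to be the inclusion $X\subseteq C$, and specifically the termination-at-lr argument: the naive split $c_1\mid c_2\cdots c_n$ fails precisely when $c_2\in R_1$, so one genuinely needs the reduction together with a careful verification that the start-l/end-r and block-membership invariant is preserved under simultaneous contraction, where the non-overlap of $X$ and Proposition~\ref{xy:LR} do the work. By contrast, the reverse inclusion and the admissibility of the construction are comparatively routine. One subtlety worth flagging is that a word of $A_i$ may be neither a prefix nor a suffix of any codeword; such \emph{free} words are harmlessly placed in $R_i$, since they affect neither the reduction argument (which only invokes the leading and trailing blocks) nor the maximality argument.
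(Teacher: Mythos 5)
Your proposal is correct and follows essentially the same route as the paper: build the partitions from $X$ (prefixes of codewords go to the $L$-side, with non-overlapping forcing suffixes to the $R$-side), run the reduction sequence $\{p_k\}$ on an arbitrary codeword to show it lands in some $\left(L_iR_{n-i}\right)$, and invoke maximality for the reverse inclusion. The only deviation is cosmetic — you seed level one with $R_1$ as the set of last letters (placing free letters in $L_1$) whereas the paper takes $L_1$ as the set of first letters (placing free letters in $R_1$) — and both choices support the same invariant, so the arguments coincide.
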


\begin{proof}
    Let $X$ be a maximal $q$-ary $n$-letter non-overlapping code.
    Now construct the sets $(L_i,R_i)$ corresponding to $X$ for $i < n$ as follows
    \begin{align*}
      L_1 &\coloneqq \{x \in \Sigma \mid \text{there exists a word in $X$ beginning with $x$}\},   \\
      R_1 &\coloneqq \Sigma \setminus L_1,
    \end{align*}
    and for $1 < i < n$
    \begin{align*}
        L_i &\coloneqq \{x_1\cdots x_i \in \bigcup_{j < i} (L_j R_{i-j}) \mid \text{there exists a word in $X$ beginning with $x_1\cdots x_i$}\}, \\
        R_i &\coloneqq \bigcup_{j < i} (L_j R_{i-j}) \setminus L_i. 
    \end{align*}
    Note that if a word in $X$ ends in $x_1\cdots x_i \in \bigcup_{j < i} (L_j R_{i-j})$, then $x_1\cdots x_i \in R_i$ (otherwise $x_1\cdots x_i \in L_i$ and there exists a word in X that starts in $x_1 \cdots x_i$ by definition of $L_i$ which contradicts the fact that $X$ is non-overlapping).
    Define $Z \coloneqq \bigcup_{j < i} (L_j R_{i-j})$.
    We will prove that $X \subseteq Z$. Since $X$ is maximal and $Z$ is non-overlapping by Theorem~\ref{theorem:c6}, it immediately follows that $X = Z$.
    
    Let $w \in X$.
    We will show that $\{p_k(w)\}_{k\geq 0}$ is well-defined.
    Every letter in $w$ belongs to $L_1 \cup R_1 = \Sigma$ since $w\in \Sigma^n$. The first letter of $w$ belongs to $L_1$ by definition of $L_1$ and we explained earlier that the last letter belongs to $R_1$. Decomposition $p_0(w)$ is therefore well-defined.
    Now suppose $p_k(w) \in \{\text{l}\{\text{l},\text{r}\}^*\text{r}\}$ for some $k \geq 0$. If $p_k(w) = \text{lr}$, then there exists some $i$ such that $w \in (L_iR_{n-i})$ and $w \in Z$.
    Otherwise $p_k(w) \in \{\text{l}\{\text{l},\text{r}\}^+\text{r}\}$ and there exists some lr in $p_k(w)$ that corresponds to a proper substring $w'$ of $w$.
    The length of $w'$ is at most $n-1$ and $w' \in L\cup R$, so lr can be replaced by either l or r in $p_{k+1}(w)$.
    If $p_k(w)$ starts with lr that corresponds to a $k$-letters long substring of $w$, $w_L$, then by definition of $L_k$ $w_L \in L_k$ and lr is replaced by l in $p_{k+1}(w)$.
    If $p_k(w)$ ends with lr that corresponds to a $k$-letters long substring of $w$ $w_R$, then $w_R\not\in L_k$ since $X$ is a non-overlapping code. So $w_R \in R_k$ and lr is replaced by r in $p_{k+1}(w)$.
    The length of $p_{k+1}(w)$ is strictly smaller than the length of $p_k(w)$ so the sequence $\{p_k(w)\}_{k \geq 0}$ is indeed finite and its last element is lr.
    As shown earlier this implies that $w \in Z$.
\end{proof}

\begin{theorem}\label{theorem:characterization}
    $C \in \mathcal{M}_{q,n}$ is maximal if and only if all the following statements hold. \\[1em]
        (i) If $L_i$ is non-empty and $R_{n-i} = \emptyset$, then every $x \in L_i$ is a prefix in some $L_j$ such that $i < j < n$ and $R_{n-j}$ is non-empty or $q=2$, $i = \frac{n}{2}$ and $\lvert L_i\rvert = 1$.\\[.5em]
        (ii) If $R_i$ is non-empty and $L_{n-i} = \emptyset$, then every $x \in R_i$ is a suffix in some $R_j$ such that $i < j < n$ and $L_{n-j}$ is non-empty or $q=2$, $i = \frac{n}{2}$ and $\lvert R_i\rvert = 1$. \\[.5em]
        (iii) If $q=2$ and $L_{\frac{n}{2}} = \{u\}$ such that $u$ is not a prefix in $C$, then for every non-empty $L_j$, $2 \leq j \leq \frac{n}{2}-2$, the word in $(L_jL_{\frac{n}{2}})$ is a prefix in $C$. Moreover, the word in $(L_1L_{\frac{n}{2}})$ is a prefix in $C$ or $L_{\frac{n}{2}-1}$ is empty and for every non-empty $L_j$, $1 \leq j \leq n/2 - 2$, the word in $(L_jL_1L_{\frac{n}{2}})$ is a prefix in C.\\[.5em]
        (iv) If $q=2$ and $R_{\frac{n}{2}} = \{u\}$ such that $u$ is not a suffix in $C$, then for every non-empty $R_j$, $2 \leq j \leq \frac{n}{2}-2$, the word in $(R_{\frac{n}{2}}R_j)$ is a suffix in $C$. Moreover, the word in $(R_{\frac{n}{2}}R_1)$ is a suffix in $C$ or $R_{\frac{n}{2}-1}$ is empty and for every non-empty $R_j$, $1 \leq j \leq n/2 - 2$, the word in $(R_{\frac{n}{2}}R_1R_j)$ is a suffix in C.
\end{theorem}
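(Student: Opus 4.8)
The plan is to recast maximality operationally and then read it off from the left/right decomposition underlying $\mathcal{M}_{q,n}$. By definition $C$ is non-maximal exactly when some word $w \in \Sigma^n \setminus C$ is \emph{addable}, i.e.\ $w$ is self-non-overlapping, no proper prefix of $w$ is a proper suffix of a codeword, and no proper suffix of $w$ is a proper prefix of a codeword. The theorem thus reduces to the equivalence: conditions (i)--(iv) hold if and only if no addable word exists. I would prove both implications in contrapositive form, so that each amounts to matching a failed condition with the existence of an addable word.

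Before either direction I would build a dictionary between the sets $L_i,R_i$ and the proper prefixes and suffixes of codewords, using Proposition~\ref{proposition:z}, Theorem~\ref{theorem:c6}, Corollary~\ref{remark:l1r1} and the decomposition $\{p_k\}$ of Section~\ref{construction}. The facts I need are: the splitting of a left element as (left part)(right part) is unique; the proper prefixes and suffixes of codewords can be located from the $L_i$, $R_i$ via $\{p_k\}$; and, since $C$ is non-overlapping, no word is simultaneously a prefix of one codeword and a suffix of another. These show that $x\in L_i$ is a proper prefix of some codeword exactly when $R_{n-i}\neq\emptyset$, or when $x$ is a prefix of a longer left element $x'\in L_j$ ($j>i$) with $R_{n-j}\neq\emptyset$ --- precisely the alternative named in condition (i) --- and the reflected statement yields condition (ii). The same dictionary constrains the first and last letters of an addable word, which drives the converse.

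For the forward direction I would argue that if a condition fails, an addable word exists. If the main clause of (i) fails there is a dangling $x\in L_i$ (so $R_{n-i}=\emptyset$) that is a prefix of no completing longer left element, hence, by the dictionary, a prefix of no codeword; I would extend it to a length-$n$ word $w=x\,r$ by appending a right block $r$ selected so that, again via the dictionary, $w\notin C$, $w$ is self-non-overlapping, and $w$ overlaps no codeword on either side, making $w$ addable. Failure of (ii) is symmetric. For the converse I would assume (i)--(iv) and take an arbitrary $w\in\Sigma^n\setminus C$, tracing it through $\{p_k(w)\}$: since $w\notin C$ the reduction cannot terminate in $\mathrm{lr}$, and I would show that the place where it stalls (or the first/last letter, in the degenerate cases) exposes a dangling left or right element whose prefix/suffix behaviour, forced by (i)--(iv), obliges $w$ to overlap $C$ or itself.

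I expect the genuine difficulty to lie in the binary midpoint controlled by (iii) and (iv). When $q=2$ and the middle set $L_{n/2}$ is a single word $u$, the obvious candidate $uu$ is self-overlapping and hence never addable, so the generic obstruction of (i) disappears --- this is exactly the escape clause $q=2,\ i=n/2,\ \lvert L_i\rvert=1$ --- and must be replaced by subtler candidates assembled from $(L_jL_{n/2})$. Pinning down which of these concatenations can be completed to an addable word, and therefore which must already appear as prefixes (respectively suffixes) of $C$, requires a delicate hand analysis of the two-letter alphabet with its extra self-overlaps; this is where the bulk of the casework, and the main risk of error, resides.
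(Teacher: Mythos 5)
Your overall architecture does match the paper's: maximality is recast as the non-existence of an addable word, the forward direction is argued contrapositively by exhibiting an addable word whenever a clause fails, and the converse traces an arbitrary $w \in \Sigma^n \setminus C$ through a reduction sequence (the paper in fact has to replace $\{p_k\}$ by a modified sequence $\{\hat{p}_k\}$ defined on all of $(L \cup R)^+$, since $p_k$ only makes sense for words beginning in $L_1$ and ending in $R_1$ --- the ``degenerate cases'' you allude to). The genuine gap is that your witness construction in the forward direction goes in the wrong direction. When clause (i) fails for a dangling $x \in L_i$ (so $R_{n-i} = \emptyset$, and $x$ is a prefix of no codeword, hence by Proposition~\ref{proposition:z} also a suffix of no codeword), you propose $w = xr$, i.e.\ a witness having $x$ as a \emph{prefix}. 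But no addable word with prefix $x$ need exist: any extension $x r_1$ with $r_1 \in R_j$ lands in $L_{i+j} \cup R_{i+j}$, and whenever it lands in an $R$-set it is a suffix of a codeword, which blocks $w$. The paper instead completes $x$ on the \emph{left}: the witness set is $(L_{n-i}\,x)$ when $i \neq \frac{n}{2}$, and a single word $yx$ with $y \in L_{n/2}\setminus\{x\}$ when $i = \frac{n}{2}$ (which is precisely where the escape clause $q=2$, $\lvert L_{n/2}\rvert = 1$ arises: then no such $y$ exists, and $xx$ overlaps itself).

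Concretely, take $q=3$, $n=4$, $L_1 = \{0\}$, $R_1 = \{1,2\}$, $L_2 = \{01,02\}$, $R_2 = \emptyset$, $L_3 = \{011\}$, $R_3 = \{012,021,022\}$, so $C = \{0012,0021,0022,0111,0112\}$. Clause (i) fails at $x = 02 \in L_2$, so $C$ is not maximal; yet \emph{no} addable word begins with $02$: in $w = 02ab$, taking $a \in \{1,2\}$ makes the prefix $02a \in \{021,022\}$ a suffix of $0021$ or $0022$; taking $a = 0$ forces $b \neq 0$ (suffix $b$ versus prefix $0$) and $b \neq 1$ (suffix $0b$ versus prefix $01$ of $0111$), and the only remaining word $0202$ overlaps itself. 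The only way to exploit $x = 02$ is the paper's: $0102 \in (L_2\,x)$ is addable. The same reversal infects your ``symmetric'' treatment of clause (ii): a dangling $x \in R_i$ must be completed on the right by $(x\,R_{n-i})$, not on the left. So while your dictionary (``$x \in L_i$ is a prefix of a codeword iff $R_{n-i} \neq \emptyset$ or $x$ is a prefix of some $x' \in L_j$ with $R_{n-j} \neq \emptyset$'') is correct and your converse sketch follows the paper's route, the central constructive step of the forward direction, as stated, fails and must be replaced by prepending left elements of length $n-i$.
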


\begin{proof}
    \textbf{STEP 1: If $C$ is maximal, then statements (i) to (iv) hold.}\\
    Suppose $C = \bigcup_{i<n} (L_i R_{n-i})$ is maximal.\\
    \textbf{(i)}
    Let $x \in L_i$. If for every $j$, $i < j < n$, $x$ is not a prefix in $L_j$ or $R_{n-j}$ is empty, then $x$ is not a prefix in 
    $\bigcup_{j = i}^{n-1} (L_j R_{n-j}) \subseteq C$.
    It is also not a prefix in $\bigcup_{j < i} (L_j R_{n-j})$, otherwise $R_{n-j}$ has a prefix that is a suffix of $x \in L_i$.
    So $x$ is not a prefix in $C$.
    It is also not a suffix in $C$ due to Proposition~\ref{proposition:z}.
    
    \textbf{CASE 1: $i \neq \frac{n}{2}$} \\
    $R_{n-i}$ is empty, therefore $L_{n-i} = \bigcup_{j < n-i} (L_j R_{n-i-j}) \setminus R_{n-i} = \bigcup_{j < n-i} (L_j R_{n-i-j})$ is non-empty. 
    Observe the set $(L_{n-i}x)$. If $yx$ is a suffix in $(L_{n-i}x)$, then $y$ is a suffix in $L_{n-i}$, and it is neither a prefix in $C$ nor in $L_{n-i}$ by Proposition~\ref{proposition:z}.
    Since $n - i \neq \frac{n}{2}$, then $x$ itself is not a prefix in $(L_{n-i}x)$, and we already showed that it is not a prefix in $C$.
    If $y$ is a suffix of length $k$ in $(L_{n-i}x)$ that is shorter than $x$, then it is not a prefix in $C$ by Proposition~\ref{proposition:z}. If $k \leq n-i$ then $y$ is not a prefix in $(L_{n-i}x)$ as it cannot be a prefix in $L_{n-i}$ by Proposition~\ref{proposition:z}. If $k > n -i$ then $y$ can be a prefix in $(L_{n-i}x)$ only if it has a suffix that is a prefix in $x$, but $x$ is itself non-overlapping. 
    Therefore $C \cup (L_{n-i}x)$ is a non-overlapping code. This contradicts the maximality of $C$.
    
    \textbf{CASE 2: $i = \frac{n}{2}$} \\
    Suppose there exists $y \in L_{\frac{n}{2}} \setminus \{x\}$.
    Observe the word $yx$.
    If $z$ is a suffix of $x$, then it is not a prefix of $y$ nor of any word in $C$ by Proposition~\ref{proposition:z}.
    If $z$ is a suffix of $yx$ longer than $\frac{n}{2}$, then $z$ is not a prefix of $yx$ nor of any word in $C$, otherwise $y$ is not non-overlapping or it has a suffix that is a prefix of a word in $C$ which contradicts Proposition~\ref{proposition:z}.
    So $C \cup \{yx\}$ is a non-overlapping code, but this contradicts the maximality of $C$.
    Now notice that for $j > 2$, $\lvert L_j \cup R_j \rvert = \sum_{k < j} \lvert L_k \rvert \lvert R_{j-k} \rvert \geq \lvert L_1 \rvert \lvert R_{j-1} \rvert + \lvert L_{j-1} \rvert \lvert R_1 \rvert  \geq  \lvert L_{j-1}\rvert + \lvert R_{j-1}\rvert = \lvert L_{j-1} \cup R_{j-1}\rvert$.
    If $j = 2$, then $\lvert L_j \cup R_j \rvert = \lvert L_1 \rvert  \lvert R_1 \rvert$.
    In both cases $\lvert L_\frac{n}{2} \rvert = 1$ is only possible if $\lvert L_1 \cup R_1 \rvert = 2$.\\[.5em]
    \textbf{(ii)} The proof is symmetric to the proof of statement (i).\\[.5em]
    \textbf{(iii)} The case $\lvert L_{\frac{n}{2}}\rvert$ is only possible if either $L_2 = \cdots = L_{\frac{n}{2}-2} = \emptyset$ or $R_2 = \cdots = R_{\frac{n}{2}-2} = \emptyset$. Suppose that for some $k \in \{1,\dots,\frac{n}{2}-2\}$ the word in $(L_kL_\frac{n}{2})$ is not a prefix in X. Then if $L_{\frac{n}{2}-k}$ is non-empty, $(L_{\frac{n}{2}-k}L_kL_{\frac{n}{2}}) \cup C$ is a non-overlapping code larger than C. If $k > 1$, $L_k$ is non-empty if and only if $L_{\frac{n}{2}-k}$ is non-empty. If $k=1$, $L_{\frac{n}{2} -1} = \emptyset$ and $L_l$ non-empty for some $l < \frac{n}{2}-1$ it follows that $(L_{\frac{n}{2}-l-1}L_lL_1L_{\frac{n}{2}}) \cup C$ is a non-overlapping code larger than C if the word in $(L_lL_1L_{\frac{n}{2}})$ is not a prefix in C.\\[.5em]
    \textbf{(iv)} The proof is symmetric to the proof of statement (iii).\\[1em]
    \textbf{STEP 2: If statements (i) to (iv) hold, then $C$ is maximal.}\\
    Let $C = \bigcup_{i<n} (L_i R_{n-i}) \in \mathcal{M}_{q,n}$ that satisfies (i) and (ii).
    Suppose $C$ is not maximal. Then there exists a word $w \in \Sigma^n \setminus C$ such that no proper prefix of $w$ is a suffix in $C$ and no proper suffix of $w$ is a prefix in $C$.
    Let $\{\hat{p}_k(w)\}_{k \geq 0}$ be a sequence of decompositions of $w \in \Sigma^n \setminus C$ in $(L\cup R)^+$ encoded with a binary word $\hat{p}_k$ over the alphabet $\{\text{l},\text{r}\}$ such that \\
        (a) $\hat{p}_0(w) \in \{\text{l},\text{r}\}^n$ with $\hat{p}_0(w)_i = \text{l}$ if $w_o \in L_1$ and $\hat{p}_0(w) = \text{r}$ if $w_i \in R_1$, \\
        (b) if $\hat{p}_k(w)$ contains lr as a proper substring, then obtain $\hat{p}_{k+1}(w) \in \{\text{l},\text{r}\}^+$ by replacing each occurrence of lr in $\hat{p}_k(w)$ by l if lr corresponds to a subword $w'$ of $w$ in $(L_iR_j) \subseteq L_{i+j}$ or by r if it corresponds to a subword $w'$ of $w$ in $(L_iR_j) \subseteq R_{i+j}$ for some positive integers $i$ and $j$.

    Decomposition $\hat{p}_0(w)$ is well-defined as every letter of $w$ is an element of $L_1 \cup R_1 = \Sigma$.
    If $\hat{p}_k(w)$ contains a substring lr that corresponds to a substring $w'$ of $w$ then there exist integers $i$ and $j$ such that $w' \in L_{i+j} \cup R_{i+j}$.
    The sequence is therefore well-defined.
    The length of decompositions is strictly decreasing, so the sequence $\{\hat{p}_k(w)\}_{k \geq 0}$ is finite.
    Denote the last element with $\hat{p}_{\hat{k}}(w)$.
    The sequences in $\{\text{l},\text{r}\}^+$ that have no lr as a proper substring are of the forms lr, r$^+$l$^+$, l$^+$, r$^+$.
    Decomposition $\hat{p}_{\hat{k}}(w)$ cannot be lr, otherwise $w \in C$.

    If $n$ is odd or every word in $L_{\frac{n}{2}}$ is a prefix in $C$ and every word in $R_{\frac{n}{2}}$ is a suffix in $C$, it follows from (i) that for every $x \in L$ there exists a word $w'\in C$ that starts in $x$ and from (ii) that for every $x \in R$ there exists a word $w' \in C$ that ends in $x$. Therefore $C \cup \{w\}$ is not non-overlapping if $\hat{p}_{\hat{k}}(w) \in \text{r}^+\text{l}^+ \cup \text{r}^+ \cup \text{l}^+$.
    
    Otherwise $q = 2$ and $\lvert L_\frac{n}{2} \cup R_\frac{n}{2} \rvert = 1$. The latter is only possible if either $L_2=\cdots=L_{\frac{n}{2}-2} = \emptyset$ or $R_2=\cdots=R_{\frac{n}{2}-2} = \emptyset$. Since every word in $L$ of length distinct from $\frac{n}{2}$ is a prefix in $C$ by (i) and every word in $R$ of length distinct from $\frac{n}{2}$ is a suffix in $C$ by (ii), $\hat{p}_{\hat{k}}(w) \in  \text{r}^+\text{l}^+$ implies $\hat{p}_{\hat{k}}(w) = \text{rl}$ and $w \in (R_{\frac{n}{2}}L_{\frac{n}{2}}) = \emptyset$. Therefore $\hat{p}_{\hat{k}}(w) \in \text{r}^+ \cup \text{l}^+$ and $w \in (L^+LL_{\frac{n}{2}}) \cup (R_{\frac{n}{2}}RR^+)$ since $w$ itself is non-overlapping.
    Suppose $R_{\frac{n}{2}}$ is empty (the proof when $L_{\frac{n}{2}}$ is empty is symmetric using statement (iv) instead of (ii)). The set $(R_\frac{n}{2}RR^+)$ is empty and hence $w \in (L^+LL_{\frac{n}{2}})$.
    If $(L_kL_\frac{n}{2})$ is non-empty for some $k \in \{2,\dots,\frac{n}{2}-2\}$, then $w \in (L_kL_\frac{n}{2})$ is a prefix in $C$ by (iii). Also the word in $(L_1L_{\frac{n}{2}})$ is a prefix in $C$ or $L_{\frac{n}{2}-1}$ is empty and for every nonempty $L_k$, $k \in \{1,\dots,\frac{n}{2}-2\}$, the word in $(L_kL_1L_\frac{n}{2})$ is a prefix in $C$. Hence $w \in (L_1L_{\frac{n}{2}-1}L_\frac{n}{2})$ and $L_{\frac{n}{2}-1}$ is non-empty, but the word in $(L_{\frac{n}{2}-1}L_\frac{n}{2})$ is a prefix in $(L_{\frac{n}{2}-1}R_{\frac{n}{2}+1}) \subseteq C$ since $L_{\frac{n}{2}}$ is not a prefix in C.
    No word can therefore be added to $C$, so $C$ is maximal.
\end{proof}

We showed earlier that the mapping from collections $(L_i, R_i)_{i=1,\dots,n-1}$ to non-overlapping codes is not injective. Nevertheless, we will show that almost every maximal non-overlapping code corresponds to exactly one collection of partitions.
Unfortunately, a characterization of maximal non-overlapping codes in terms of partition sizes cannot be given as demonstrated by Example~\ref{example:n6q3}, and we will therefore not use it directly when computing $S(q,n)$.

\begin{proposition}\label{proposition:maximal}
Let $C = \bigcup_{i < n} (L_iR_{n-i}) = \bigcup_{i < n} (\hat{L}_i\hat{R}_{n-i}) \in \mathcal{M}_{q,n}$ be a maximal non-overlapping code.
Then 
\begin{align*}
\forall i < \frac{n}{2}:\; L_i &= \hat{L}_i \text{ and } R_i = \hat{R}_i.\\
\end{align*}
Moreover, exactly one of the following statements holds.\\
    (i) $\forall i \geq \frac{n}{2}:\; L_i = \hat{L}_i$ \text{ and } $R_i = \hat{R}_i$, \\
    (ii) $q = 2$, $L_{\frac{n}{2}} = \hat{R}_\frac{n}{2}$ and $\hat{L}_{\frac{n}{2}} = R_\frac{n}{2} = \emptyset$, \\
    (iii) $q = 2$, $L_{\frac{n}{2}} = \hat{R}_\frac{n}{2} = \emptyset$ and $\hat{L}_{\frac{n}{2}} = R_\frac{n}{2}$.
\end{proposition}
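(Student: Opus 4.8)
The plan is to pin down each $L_i,R_i$ using data that depends on the code $C$ alone, so that two representations are forced to agree. For $1\le i<n$ let $P_i$ be the set of length-$i$ prefixes of codewords of $C$ and $S_i$ the set of length-$i$ suffixes; these depend only on $C$. Since every codeword has length $n>i$, each element of $P_i$ is a proper prefix in the set $X$ of Proposition~\ref{proposition:z} and each element of $S_i$ a proper suffix in $X$, so that proposition yields $P_i\cap S_i=\emptyset$. I also abbreviate $U_i:=\bigcup_{j<i}(L_jR_{i-j})$ (and $U_1:=\Sigma$), so $U_i=L_i\sqcup R_i$ for $i<n$; the crux is that $U_i$ is representation-independent.

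The core is an induction showing that for every index $i\neq n/2$, once $L_j,R_j$ are fixed for $j<i$, the sets $L_i,R_i$ are determined by $C$. Assuming $L_j=\hat L_j$, $R_j=\hat R_j$ for all $j<i$ gives $U_i=\hat U_i$. First I would show $L_i\subseteq P_i$: if $R_{n-i}\neq\emptyset$ then $x\in L_i$ gives $(xR_{n-i})\subseteq C$, so $x\in P_i$; if $R_{n-i}=\emptyset$ then, since $i\neq n/2$, the non-exceptional branch of Theorem~\ref{theorem:characterization}(i) makes each $x\in L_i$ a prefix of some $y\in L_j$ with $R_{n-j}\neq\emptyset$, and $(yR_{n-j})\subseteq C$ again puts $x$ in $P_i$. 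Symmetrically $R_i\subseteq S_i$ via $L_{n-i}$ and Theorem~\ref{theorem:characterization}(ii). Because $P_i\cap S_i=\emptyset$ and $U_i=L_i\sqcup R_i$, any $x\in U_i\cap P_i$ cannot lie in $R_i\subseteq S_i$, hence $x\in L_i$, giving $L_i=U_i\cap P_i$ and $R_i=U_i\cap S_i$. These expressions depend only on $C$ and the inductively fixed $U_i=\hat U_i$, so $L_i=\hat L_i$ and $R_i=\hat R_i$. For $n$ odd the index $n/2$ never arises and the induction runs over all $i<n$, giving statement~(i); in general it runs unconditionally for all $i<n/2$, which is the first assertion of the proposition.

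It remains to treat the middle index $i=n/2$ ($n$ even), where $U:=U_{n/2}=\hat U_{n/2}$ is already pinned down. If both $L_{n/2}$ and $R_{n/2}$ are non-empty, then $R_{n-i}=R_{n/2}\neq\emptyset$ and $L_{n-i}=L_{n/2}\neq\emptyset$ give $L_{n/2}\subseteq P_{n/2}$ and $R_{n/2}\subseteq S_{n/2}$ directly, and the argument above forces agreement; the same holds whenever the two representations split $U$ identically, and continuing the induction to $i>n/2$ (again $i\neq n/2$) then yields full agreement, i.e.\ statement~(i). The only escape is that one middle set is empty, say $R_{n/2}=\emptyset$, $L_{n/2}=U\neq\emptyset$, with $L_{n/2}\not\subseteq P_{n/2}$. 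Then Theorem~\ref{theorem:characterization}(i) can only hold through its exceptional branch, forcing $q=2$ and $U=\{u\}$ a single word that is neither a prefix nor a suffix of $C$. Since $\hat U_{n/2}=U=\{u\}$, the second representation places $u$ either in $\hat L_{n/2}$ (agreement, statement~(i)) or in $\hat R_{n/2}$ with $\hat L_{n/2}=\emptyset$, which is exactly statement~(ii); the mirror situation $L_{n/2}=\emptyset$, $R_{n/2}=\{u\}$ gives statement~(iii). Reading (ii),(iii) as genuine non-trivial swaps, the empty/non-empty pattern of the four middle sets makes the three cases mutually exclusive, so exactly one holds.

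The hard part will be this last, exceptional analysis at $i=n/2$: one must verify that non-uniqueness can occur only through the special $q=2$ branch of Theorem~\ref{theorem:characterization}, identify the offending middle set as a single word lying outside $P_{n/2}\cup S_{n/2}$, and match its two admissible placements precisely to the swaps in (ii) and (iii) while keeping the cases disjoint (including the degenerate empty-middle, which I would fold into statement~(i)). By contrast, the determinacy induction below and above the middle is routine once $P_i\cap S_i=\emptyset$ is established.
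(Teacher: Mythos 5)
Your induction away from the middle index is correct, and it is organized genuinely differently from the paper: you pin the partition down by the representation-independent formulas $L_i=U_i\cap P_i$, $R_i=U_i\cap S_i$, using only the disjointness $P_i\cap S_i=\emptyset$ (Proposition~\ref{proposition:z} applied to codewords alone) together with the non-exceptional branches of Theorem~\ref{theorem:characterization}, whereas the paper takes the smallest disagreement index $i$, picks $x\in L_i\cap\hat R_i$, and invokes Proposition~\ref{proposition:z} in full strength---applied to the whole auxiliary set $X$ containing all the $L_j,R_j$, not just $C$---to conclude that $x$ is neither a prefix nor a suffix in $C$. However, your middle-index analysis has a genuine gap: the claim that the ``only escape'' from agreement is $L_{\frac{n}{2}}\not\subseteq P_{\frac{n}{2}}$ is not justified by your tools. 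Concretely, consider $q=2$, $U=\{u\}$, $L_{\frac{n}{2}}=\{u\}$, $R_{\frac{n}{2}}=\emptyset$, $\hat L_{\frac{n}{2}}=\emptyset$, $\hat R_{\frac{n}{2}}=\{u\}$, with $u$ a prefix of some codeword. This is a disagreement with $L_{\frac{n}{2}}\subseteq P_{\frac{n}{2}}$, yet nothing you use excludes it: condition (ii) of Theorem~\ref{theorem:characterization} for the hat representation is satisfied through its exceptional branch ($q=2$, $\lvert\hat R_{\frac{n}{2}}\rvert=1$), and $P_{\frac{n}{2}}\cap S_{\frac{n}{2}}=\emptyset$ says nothing about $u\in\hat R_{\frac{n}{2}}$. (The configuration is in fact impossible, but seeing that requires Proposition~\ref{proposition:z} applied to the full hat system: $u\in\hat R_{\frac{n}{2}}$ is a proper suffix of the words in $(\hat L_1\hat R_{\frac{n}{2}})\subseteq\hat X$, hence can never be a proper prefix of any word in $\hat X$, in particular of a codeword---exactly the stronger use of Proposition~\ref{proposition:z} that the paper makes and that you restrict away.) The same unsupported step underlies your assertion that $u$ is ``neither a prefix nor a suffix of $C$'', and the mixed configurations---one representation splitting $U$ into two non-empty parts while the other leaves a part empty---are never treated.

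Fortunately the gap is benign for the conclusion, and the repair stays inside your framework, because the configuration you fail to exclude satisfies case (ii) of the proposition verbatim. If the representations disagree at $\frac{n}{2}$, pick $x\in L_{\frac{n}{2}}\cap\hat R_{\frac{n}{2}}$ (or the mirror image). Then $x\in P_{\frac{n}{2}}$ unless the first representation falls into the exceptional branch of condition (i) of Theorem~\ref{theorem:characterization} (direct if $R_{\frac{n}{2}}\neq\emptyset$, via condition (i) otherwise), and symmetrically $x\in S_{\frac{n}{2}}$ unless the hat representation falls into the exceptional branch of condition (ii). Since $x\in P_{\frac{n}{2}}\cap S_{\frac{n}{2}}$ is impossible, at least one representation is exceptional, forcing $q=2$ and $\lvert U\rvert=1$; either way $L_{\frac{n}{2}}=\{x\}=\hat R_{\frac{n}{2}}$ and $R_{\frac{n}{2}}=\hat L_{\frac{n}{2}}=\emptyset$, i.e.\ case (ii) of the proposition (case (iii) for the mirror image), irrespective of whether $x$ lies in $P_{\frac{n}{2}}$. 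Note also that ``exactly one'' requires $U_{\frac{n}{2}}\neq\emptyset$ (otherwise, for $q=2$, cases (i)--(iii) would all hold simultaneously); this follows by an easy induction from $L_1,R_1\neq\emptyset$, and it cannot simply be ``folded into statement (i)'' as you suggest.
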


\begin{proof}
Suppose there exists some $i$ such that $L_i \neq \hat{L}_i$ and $\forall j < i: L_j = \hat{L}_j$.
$L_j \cup R_j = \hat{L}_j \cup \hat{R}_j$ for all $j \leq i$ by definition, so $R_j = \hat{R}_j$ for all  $j < i$.
Furthermore at least one of the sets $L_i \cap \hat{R}_i$ and $R_i \cap \hat{L}_i$ is non-empty.
Without loss of generality suppose $L_i \cap \hat{R}_i \neq \emptyset$ (otherwise $R_i \cap \hat{L}_i$ is non-empty and a symmetric argument follows).
Let $x \in L_i \cap \hat{R}_i$. From Proposition~\ref{proposition:z} we know that $x$ is neither a prefix nor a suffix in $C$.
If $i \neq \frac{n}{2}$ then Theorem~\ref{theorem:characterization} implies that $C$ is not maximal, so $L_j = R_j$ for every $j$, $1 \leq j < n$.
If $i = \frac{n}{2}$ then Theorem~\ref{theorem:characterization} implies $q = 2$, $\lvert L_\frac{n}{2}\rvert = 1 = \lvert \hat{R}_\frac{n}{2} \rvert$ and $\hat{L}_\frac{n}{2} = R_\frac{n}{2} = \emptyset$, so  $L_\frac{n}{2} = \hat{R}_\frac{n}{2}$.
\end{proof}

\begin{corollary}\label{corollary:q>2}
    If $n$ is odd or $q \geq 3$, then every maximal non-overlapping code corresponds to exactly one partition $(L_i, R_i)_{i=1,\dots,n-1}$.
\end{corollary}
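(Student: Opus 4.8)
The plan is to derive the corollary directly from Proposition~\ref{proposition:maximal}, which already classifies all the ways two collections of partitions can give rise to the same maximal code. First I would dispose of existence: by the proposition that every maximal non-overlapping code is contained in $\mathcal{M}_{q,n}$, any maximal code $C$ arises from \emph{at least} one collection $(L_i,R_i)_{i=1,\dots,n-1}$. It therefore remains to prove uniqueness, i.e.\ that under the hypothesis ``$n$ odd or $q\geq 3$'' no two distinct collections can produce the same $C$.

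For the uniqueness step I would suppose $C = \bigcup_{i<n}(L_iR_{n-i}) = \bigcup_{i<n}(\hat L_i\hat R_{n-i})$ and invoke Proposition~\ref{proposition:maximal}. It guarantees $L_i=\hat L_i$ and $R_i=\hat R_i$ for every $i<\frac n2$, and that exactly one of its three alternatives (i)--(iii) holds. Alternatives (ii) and (iii) each explicitly require $q=2$, so when $q\geq 3$ they are immediately excluded and only (i) survives; combined with the equalities for $i<\frac n2$ this yields $L_i=\hat L_i$ and $R_i=\hat R_i$ for all $i$, i.e.\ the two collections coincide.

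The remaining case is $n$ odd with $q$ arbitrary. Here the key observation is that $\frac n2$ is not an integer, so the index $\frac n2$ does not occur among $1,\dots,n-1$ and the sets $L_{n/2},R_{n/2},\hat L_{n/2},\hat R_{n/2}$ appearing in alternatives (ii) and (iii) are not defined. Consequently those two alternatives cannot hold, again leaving only (i). Since the integers below $n$ split into $\{i:i<\frac n2\}$ and $\{i:i\geq\frac n2\}$, the equalities from the first part of Proposition~\ref{proposition:maximal} together with alternative (i) force agreement of the two collections at every index, proving uniqueness.

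I do not expect a genuine obstacle here: the corollary is a case-elimination layered on top of Proposition~\ref{proposition:maximal}. The only point requiring care is the odd-$n$ argument, where one must note that the ambiguity encoded in alternatives (ii)/(iii) lives entirely at the middle index $\frac n2$ and hence simply cannot arise when $n$ is odd. This is precisely what lets the single hypothesis ``$n$ odd or $q\geq 3$'' rule out both nontrivial alternatives at once and collapse the classification to the trivial case.
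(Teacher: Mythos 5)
Your proposal is correct and matches the paper's argument: the paper's proof is exactly this case elimination, observing that alternatives (ii) and (iii) of Proposition~\ref{proposition:maximal} are impossible when $n$ is odd or $q \geq 3$. Your additional remarks (existence via containment in $\mathcal{M}_{q,n}$, and the non-integrality of $\tfrac{n}{2}$ for odd $n$) merely spell out details the paper leaves implicit.
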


\begin{proof}
    The second and third case of Proposition~\ref{proposition:maximal} cannot hold for an odd $n$ nor for $q \geq 3$.
\end{proof}

\begin{example}
    \label{example:n6q3}
    Let $n=6$ and $\Sigma = \{0,1,2\}$.
    Set $L_1 = \{0\}$, $R_1 = \{1,2\}$, $L_2 = \emptyset$, $R_2 = \{01,02\}$, $L_3 = \{001\}$, $R_3 = \{002\}$, $L_4 = \{0002\}$, $R_4 = \{0011,0012\}$.
    If we set $L_5 = \{00011,00012\}$ and $R_5 = \{00101,00102,00021,00022\}$, we obtain 
    \begin{align*}
      C_1 =\{&000101,000102,000021, 000022,001002,000201,000202,000111,000112, \\
      &000121,000122  \},
    \end{align*}
     which is not maximal as $C_1 \cup \{001101\}$ is a non-overlapping code.
    Now take $L_5 = \{00101,00102\}$ and $R_5 = \{00011,00012,00021,00022\}$. Code
    \begin{align*}
        C_2 = \{&000011,00012,000021,000022,001002,000201,000202,001011,001012, \\
        &001021,001022\}
    \end{align*}
    has partitions of the same sizes as $C_1$ but is maximal due to Theorem~\ref{theorem:characterization}.
\end{example}

\section{An integer optimization problem}\label{sec:optimization}
Now that we know that $M_{q,n}$ contains all maximal non-overlapping codes and we have a formula to compute their sizes, we can clearly determine the maximum non-overlapping codes by maximizing the value $\sum_{i=1}^{n-1} \lvert L_i \rvert \lvert R_{n-i} \rvert$ over all constraints given by Construction~\ref{construction:fimmel}.
An integer optimization problem is formulated as follows (see Proposition~\ref{cor:basicoptproblem} below). We call this formulation SQN(q, n) throughout the paper.

\begin{proposition}[SQN]
\label{cor:basicoptproblem}
    
    \begin{align*}
    S(q,n) = & \max \sum_{i=1}^{n-1}  x_i y_{n-i} \\
        \textnormal{subject to }\quad & x_1 +  y_1  = q \\
        & x_i  +  y_i  = \sum_{j=1}^{i-1}  x_j  y_{i-j} \quad \forall i > 1 \\
        &  x_1 ,  y_1  > 0 \\
        & x_i,  y_i  \geq 0 \quad \forall i > 1 \\
        & x_i,  y_i  \in \mathbb{Z} \quad \forall i \geq 1.
    \end{align*}
    If $(x^*,y^*)$ is an optimal solution to the above optimization problem, then $C = \bigcup_{i=1}^{n-1} \left(L_iR_{n-i}\right)$ is a maximum non-overlapping code if it satisfies\\
        (i) $\left(L_1,R_1\right)$ is a partition of $\Sigma$ with $\lvert L_1 \rvert = x_1^*$ and \\
        (ii) for $n > i > 1: \; \left(L_i,R_i\right)$ is a partition of $C = \bigcup_{j=1}^{i-1} \left(L_jR_{i-j}\right)$ with $\lvert L_i \rvert = x_i^*$. \\
\end{proposition}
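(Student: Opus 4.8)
The plan is to show that the feasible region of SQN is in size-preserving correspondence with the codes in $\mathcal{M}_{q,n}$, so that maximizing the objective is the same as maximizing $|C|$ over $C \in \mathcal{M}_{q,n}$, which by the earlier results equals $S(q,n)$. Writing $x_i = |L_i|$ and $y_i = |R_i|$, Theorem~\ref{prop:size} turns the objective $\sum_{i=1}^{n-1} x_i y_{n-i}$ into $|C|$, and the same theorem applied at each intermediate length $i$ turns the equality constraint $x_i + y_i = \sum_{j=1}^{i-1} x_j y_{i-j}$ into the statement that $(L_i, R_i)$ partitions $\bigcup_{j=1}^{i-1}(L_j R_{i-j})$. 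So the two are genuinely the same object described in two languages, and the proof amounts to proving both inequalities between $\max \text{SQN}$ and $S(q,n)$.

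For the inequality $\max \text{SQN} \leq S(q,n)$, I would take an arbitrary feasible point $(x,y)$ and build an actual code realizing it by induction on $i$. I would start with any partition $(L_1,R_1)$ of $\Sigma$ with $|L_1| = x_1$ and $|R_1| = y_1$; this is possible because $x_1 + y_1 = q$ and $x_1, y_1 > 0$. Having chosen $(L_j,R_j)$ for $j < i$, the collection $(L_j,R_j)_{j<i}$ defines an element of $\mathcal{M}_{q,i}$, so Theorem~\ref{prop:size} (with $n$ replaced by $i$) gives $|\bigcup_{j<i}(L_j R_{i-j})| = \sum_{j<i} x_j y_{i-j} = x_i + y_i$; since $x_i, y_i \geq 0$, I can split this set into parts $L_i, R_i$ of exactly those sizes. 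The resulting $C = \bigcup_{i<n}(L_i R_{n-i})$ lies in $\mathcal{M}_{q,n}$ and hence is non-overlapping by Theorem~\ref{theorem:c6}, so $\sum_{i<n} x_i y_{n-i} = |C| \leq S(q,n)$. Taking the maximum over feasible points gives $\max \text{SQN} \leq S(q,n)$.

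For the reverse inequality $\max \text{SQN} \geq S(q,n)$, I would start from a maximum non-overlapping code $X$, which is maximal by the remark preceding the constructions, hence lies in $\mathcal{M}_{q,n}$ by the proposition stating that every maximal non-overlapping code is contained in $\mathcal{M}_{q,n}$. Its partitions yield integers $x_i = |L_i|$ and $y_i = |R_i|$ that satisfy every constraint of SQN — the partition of $\Sigma$ gives $x_1 + y_1 = q$ with $x_1, y_1 > 0$, and the partition property at each level together with Theorem~\ref{prop:size} gives the remaining equalities — and whose objective value equals $|X| = S(q,n)$. This exhibits a feasible point attaining $S(q,n)$, so $\max \text{SQN} \geq S(q,n)$. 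Combining the two inequalities yields $S(q,n) = \max \text{SQN}$, and the moreover claim follows at once: if $(x^*,y^*)$ is optimal and $C$ is built from partitions of the prescribed sizes, then $C \in \mathcal{M}_{q,n}$ is non-overlapping with $|C| = \sum_{i<n} x_i^* y_{n-i}^* = S(q,n)$, so $C$ is maximum.

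The main obstacle is the realizability induction in the first inequality: I must be sure that at each step the set $\bigcup_{j<i}(L_j R_{i-j})$ really has cardinality exactly $x_i + y_i$, so that a partition into parts of the required sizes exists. This hinges on invoking Theorem~\ref{prop:size} not at the top length $n$ but at every intermediate length $i$, which is legitimate because the truncated collection $(L_j,R_j)_{j<i}$ is itself a valid instance of Construction~\ref{construction:fimmel} of length $i$. Everything else — the translation of constraints and the matching of the objective to cardinality — is bookkeeping once this point is secured.
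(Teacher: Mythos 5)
Your proposal is correct and follows essentially the same route as the paper: the paper's own proof is a one-line citation of the definition of $S(q,n)$, Construction~\ref{construction:fimmel}, Theorem~\ref{theorem:c6}, and Theorem~\ref{prop:size} (together with the fact, established just before, that $\mathcal{M}_{q,n}$ contains every maximal code), and your two inequalities simply spell out in full the size-preserving correspondence that the paper leaves implicit. Your explicit realizability induction (applying Theorem~\ref{prop:size} at each intermediate length $i$, with the trivial case $i=2$ handled directly since Construction~\ref{construction:fimmel} formally requires $n\geq 3$) is exactly the detail the paper's terse proof glosses over, so this is an elaboration rather than a different argument.
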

\begin{proof}
    The proposition follows directly from the definition of $S(q,n)$, Construction~\ref{construction:fimmel}, Theorem~\ref{theorem:c6} and Theorem~\ref{prop:size}
    if we denote the size of the set $L_i$ by $x_i$ and the size of the set $R_i$ by $y_i$.
\end{proof}

The evaluation of the objective function for all feasible solutions to SQN requires $O\left(q^{n^2}\right)$ time and
$\Theta\left(n\right)$ space. If we also want to store all the optimal solutions, space requirement increases to $\Theta\left(n m\right)$ where $m$ denotes the number of optimal solutions.
To determine $N(q,n)$ from the optimal solutions of SQN(q, n), we have to evaluate whether any pair of solutions of \mbox{SQN(q, n)} constructs the same maximum code to prevent double counting. Proposition~\ref{proposition:nqn} shows that for most parameter values one can compute the value $N(q,n)$ directly.

\begin{proposition}\label{proposition:nqn}
     \[N(q, n) = \sum_{\substack{x,y \textnormal{ optimal}\\ \textnormal{solution of SQN(q,n)} }} \binom{q}{x_1}\cdot \prod_{i=2}^{n-1}\binom{\sum_{j=1}^{i-1} x_j y_{i-j} }{x_i} \]
if at least one of the following holds: \\
     (i) $q > 2$,  \\
     (ii) $n$ is odd, \\
     (iii) no pair of solutions $(x,y), (\hat{x}, \hat{y})$ of SQN(q,n) satisfies $x_i = \hat{x}_i$ for all $i < \frac{n}{2}$, $x_\frac{n}{2} = \hat{y}_\frac{n}{2} = 0$, $\hat{x}_\frac{n}{2} = y_\frac{n}{2} = 1$.
\end{proposition}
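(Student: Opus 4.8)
The plan is to read the right-hand side as a count of \emph{collections of partitions} and match it to the count of maximum codes. For a fixed feasible solution $(x,y)$ of SQN, the number of collections $(L_i,R_i)_{i=1,\dots,n-1}$ with $\lvert L_i\rvert = x_i$ and $\lvert R_i\rvert = y_i$ is exactly $\binom{q}{x_1}\prod_{i=2}^{n-1}\binom{\sum_{j=1}^{i-1}x_jy_{i-j}}{x_i}$: one chooses $L_1\subseteq\Sigma$ of size $x_1$ and sets $R_1=\Sigma\setminus L_1$, and then for each $i\geq 2$ one chooses $L_i$ of size $x_i$ inside $\bigcup_{j<i}(L_jR_{i-j})$ and sets $R_i$ to be the complement. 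The crucial point here is that, by Theorem~\ref{prop:size}, $\lvert\bigcup_{j<i}(L_jR_{i-j})\rvert=\sum_{j=1}^{i-1}x_jy_{i-j}$ depends only on the sizes already fixed and not on which particular sets were chosen, so the binomial factors genuinely multiply. Summing over all optimal $(x,y)$, and noting that each collection has a unique size vector, the right-hand side equals the total number of collections whose size vector is an optimal solution of SQN; call these the \emph{optimal collections}.

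Next I would show that the map sending an optimal collection to the code $C=\bigcup_{i<n}(L_iR_{n-i})$ is a well-defined surjection onto the maximum non-overlapping codes. It is well-defined because, by Theorem~\ref{prop:size}, $\lvert C\rvert=\sum_{i=1}^{n-1}x_iy_{n-i}$ equals the optimal objective value $S(q,n)$, so $C$ is a maximum code. It is surjective because every maximum non-overlapping code is maximal, hence lies in $\mathcal{M}_{q,n}$ by the earlier proposition, hence arises from some collection; that collection is feasible for SQN and attains objective $S(q,n)$, so it is optimal. Consequently the maximum codes are exactly the images of the optimal collections, and $N(q,n)$ equals the number of optimal collections precisely when this surjection is injective; any failure of injectivity is a \emph{coincidence}, i.e. a pair of distinct optimal collections with the same image.

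It then remains to prove injectivity under each hypothesis, and this is the main obstacle: I must translate the set-level description of coincidences in Proposition~\ref{proposition:maximal} into the size-level condition (iii). If two distinct optimal collections produce the same code, Proposition~\ref{proposition:maximal} forces case (ii) or (iii) there, since case (i) would make the collections identical. Thus $q=2$, $n$ is even, the collections agree for all $i<\frac{n}{2}$, and at $i=\frac{n}{2}$ one has $\lvert L_{n/2}\rvert=1$ with $R_{n/2}=\emptyset$ while the other has $L_{n/2}=\emptyset$ with $\lvert R_{n/2}\rvert=1$ (or vice versa). In particular such a coincidence always involves \emph{two different} optimal solutions, because it forces unequal sizes at $\frac{n}{2}$; a single optimal solution can never yield two coinciding collections, which is why a condition phrased over pairs of solutions suffices. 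Passing to sizes, the coincidence says precisely that the two optimal solutions satisfy $x_i=\hat x_i$ for all $i<\frac{n}{2}$ together with $x_{n/2}=\hat y_{n/2}=0$ and $\hat x_{n/2}=y_{n/2}=1$ (up to interchanging the two solutions). Hence hypothesis (i) ($q>2$) and hypothesis (ii) ($n$ odd) each rule out such a coincidence outright, and hypothesis (iii) forbids exactly this size pattern; in every case there are no coincidences, the surjection is a bijection, and $N(q,n)$ equals the right-hand side. The one detail to handle with care is that two coinciding optimal collections need not share a size vector for $i>\frac{n}{2}$, but the forbidden pattern constrains only indices up to $\frac{n}{2}$, so this does not weaken the argument; cases (i) and (ii) may alternatively be dispatched by citing Corollary~\ref{corollary:q>2}.
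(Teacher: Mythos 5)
Your proof is correct and takes essentially the same approach as the paper's: the paper likewise counts, for each optimal solution, the collections of partitions via the same product of binomials (implicitly justified by Theorem~\ref{prop:size}), and dismisses double counting by invoking Proposition~\ref{proposition:maximal} and Corollary~\ref{corollary:q>2}. Your version simply spells out what the paper leaves implicit: the well-defined surjection from optimal collections onto maximum codes, and the translation of a potential coincidence into the size pattern excluded by hypothesis (iii).
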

\begin{proof}
Let $(x,y)$ be an optimal solution of SQN.
There are $\binom{q}{x_1}$ choices for a partition of $\Sigma$ into $\left(L_1,R_1\right)$ such that $\lvert L_1\rvert = x_1$ and $\lvert R_1\rvert = y_1$.
For fixed partitions $\left(L_1,R_1\right), \dots, \left(L_{i-1}, R_{i-1}\right)$ there are $\binom{\sum_{j=1}^{i-1}\lvert L_j\rvert\lvert R_{i-j}\rvert}{x_i} = \binom{\sum_{j=1}^{i-1} x_j y_{i-j} }{x_i}$ choices for a partition of $\bigcup_{j=1}^{i-1} \left(L_j R_{i-j}\right)$ with $\lvert L_i \rvert = x_i$ and $\lvert R_i \rvert = y_i$.
No code is double counted as the non-overlapping codes corresponding to solutions of SQN(q, n) are distinct due to Proposition~\ref{proposition:maximal} and Corollary~\ref{corollary:q>2}.
\end{proof}

\subsection{Reduction of SQN}\label{sec:conditions}
The set of feasible solutions of SQN can be reduced.
We provide some equivalence transformations on $(x,y)$ that preserve the value of the objective function.
These enable us to evaluate only one feasible solution for each equivalence set.
Moreover, we will show that every maximum $C\in \mathcal{M}_{q,n}$ either satisfies the property 
\begin{equation}\label{eq:property}
\lvert L_i\rvert \lvert R_i\rvert = 0 \qquad \forall i > \frac{n}{2},
\end{equation}
or it can be mapped to a non-overlapping code that satisfies property~\eqref{eq:property} using a transformation we will define later.

\begin{proposition}\label{proposition:symmetry}
    $(x,y)$ is an optimal solution of SQN if and only if $(y,x)$ is optimal for SQN.
\end{proposition}
\begin{proof}
    Feasible solutions $(x,y)$ and $(y,x)$ have the same value of the objective functions $\sum_{i=1}^{n-1}x_iy_{n-i} = \sum_{i=1}^{n-1}y_ix_{n-i}$.
\end{proof}

\begin{proposition}\label{proposition:evenq}
    Let $((x_1,\dots, x_{n-1}), (y_1,\dots, y_{n-1}))$ be a feasible solution of SQN.
    If there exists some $i > 1$ such that $x_k = y_k$ for all $k \leq i$, then
    $((x_1,\dots, x_{n-1}), (y_1,\dots, y_{n-1}))$ is an optimal solution of SQN
    if and only if $((x_1,\dots,x_i,y_{i+1},\dots, y_{n-1}), (y_1,\dots,y_i,x_{i+1},\dots x_{n-1}))$ is an optimal solution.
\end{proposition}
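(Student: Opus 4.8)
The plan is to show that the transformed tuple is feasible and has exactly the same objective value as $(x,y)$; since $S(q,n)$ is a fixed optimum, this immediately yields that one tuple is optimal precisely when the other is. Write $(\tilde{x},\tilde{y})$ for the tuple $((x_1,\dots,x_i,y_{i+1},\dots,y_{n-1}),(y_1,\dots,y_i,x_{i+1},\dots,x_{n-1}))$. The observation that drives everything is that the hypothesis $x_k=y_k$ for all $k\leq i$ forces the two ``heads'' to coincide, so that $\tilde{x}_k=y_k$ and $\tilde{y}_k=x_k$ for \emph{every} $k$; that is, $(\tilde{x},\tilde{y})=(y,x)$. Exchanging only the tails is the full coordinate swap here, precisely because the head is symmetric, and so feasibility together with objective invariance follow at once from Proposition~\ref{proposition:symmetry}.

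To keep the argument self-contained I would verify feasibility directly rather than merely quote it. Encode a tuple by the polynomials $X(t)=\sum_{k=1}^{n-1}x_k t^k$ and $Y(t)=\sum_{k=1}^{n-1}y_k t^k$. Then the SQN constraints read exactly $[t^1](X+Y)=q$ and $[t^k](X+Y)=[t^k](XY)$ for $2\leq k\leq n-1$, while the objective $\sum_{j=1}^{n-1}x_j y_{n-j}$ is exactly $[t^n](XY)$. I would split off the common head $A=\sum_{k\leq i}x_k t^k=\sum_{k\leq i}y_k t^k$ and the tails $B=\sum_{k>i}x_k t^k$, $D=\sum_{k>i}y_k t^k$, so that $(X,Y)=(A+B,A+D)$ and $(\tilde{X},\tilde{Y})=(A+D,A+B)$. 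As unordered pairs $\{\tilde{X},\tilde{Y}\}=\{X,Y\}$, hence $\tilde{X}+\tilde{Y}=X+Y$ and $\tilde{X}\tilde{Y}=XY$. Every constraint is a coefficient condition on $X+Y$ and on $XY$, so all constraints are preserved; non-negativity and integrality survive because the entries of $\tilde{x},\tilde{y}$ are a mere rearrangement of those of $x,y$ (in particular $\tilde{x}_1=x_1>0$ and $\tilde{y}_1=y_1>0$). Finally $[t^n](\tilde{X}\tilde{Y})=[t^n](XY)$ delivers the equality of objective values.

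The one point that genuinely needs care, and which I regard as the crux, is that the symmetric head must include index $i$ itself. If one only assumed $x_k=y_k$ for $k<i$ and allowed $x_i\neq y_i$, then exchanging the tails would no longer be the full swap, and expanding the product leaves a residual $\tilde{X}\tilde{Y}-XY=(x_i-y_i)\,t^i(B-D)$ of lowest degree $2i+1$; for $i\leq (n-2)/2$ this degree is at most $n-1$, so it would violate a constraint and destroy feasibility. The hypothesis $x_k=y_k$ for all $k\leq i$ is exactly what makes this term vanish, and with it the proof is complete.
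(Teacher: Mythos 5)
Your proof is correct, and it takes a genuinely cleaner route than the paper's. Your key observation---that under the hypothesis $x_k=y_k$ for all $k\leq i$ the partial tail-swap is literally the full swap, i.e.\ $(\tilde{x},\tilde{y})=(y,x)$---reduces the proposition at once to Proposition~\ref{proposition:symmetry}. The paper does not notice this collapse: it asserts feasibility of the transformed tuple (``clearly feasible'') and then verifies equality of objective values by hand, splitting $\sum_j \hat{x}_j\hat{y}_{n-j}$ into four sums according to whether $j\leq i$ and whether $n-j\leq i$ and rewriting each term as $y_jx_{n-j}$ using the head symmetry---in effect performing, term by term, the substitution your one-line tuple identity performs globally. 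Your generating-function reformulation (constraints as the coefficient identities $[t^1](X+Y)=q$ and $[t^k](X+Y)=[t^k](XY)$, objective as $[t^n](XY)$) buys something the paper's proof lacks: an actual verification of feasibility, which the paper leaves implicit both here and in Proposition~\ref{proposition:symmetry} itself. One quibble with your closing remark: when $x_i\neq y_i$, a nonzero residual $(x_i-y_i)\,t^i(B-D)$ of low degree \emph{could} rather than \emph{would} destroy feasibility, since it may still vanish in the relevant degrees (e.g.\ if $B=D$); but this is a side comment about a counterfactual hypothesis and does not affect the validity of your argument for the statement as given.
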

\begin{proof}
    Clearly $((x_1,\dots,x_i,y_{i+1},\dots, y_{n-1}), (y_1,\dots,y_i,x_{i+1},\dots, x_{n-1}))$ is feasible.
    Let $o$ denote the objective function of $((x_1,\dots, x_{n-1}), (y_1,\dots, y_{n-1}))$ and $\hat{o}$ the objective function of $((x_1,\dots,x_i,y_{i+1},\dots, y_{n-1}), (y_1,\dots,y_i,x_{i+1},\dots, x_{n-1}))$.

    \begin{align*}
    \hat{o} = &\sum_{\substack{1 \leq j \leq i \\ n - j > i}}x_jx_{n-j} +
    \sum_{\substack{1 \leq j \leq i \\ n - j \leq i}} x_j y_{n-j}
    + \sum_{\substack{i < j < n \\ n - j > i}} y_j x_{n-j} + 
    \sum_{\substack{i < j < n \\ n - j \leq i}} y_jy_{n-j} \\
    = &\sum_{\substack{1 \leq j \leq i \\ n - j > i}}y_jx_{n-j} +
    \sum_{\substack{1 \leq j \leq i \\ n - j \leq i}} y_j x_{n-j}
    + \sum_{\substack{i < j < n \\ n - j > i}} y_j x_{n-j} + 
    \sum_{\substack{i < j < n \\ n - j \leq i}} y_jx_{n-j} \\
    = & \sum_{1 \leq j \leq i} y_jx_{n-j} + \sum_{i < j < n} y_jx_{n-j} = o. \mbox{\tag*{\qedhere}}
    \end{align*}
\end{proof}

Before providing a formula for expressing the sizes of the sets
$ L_{i+1}, R_{i+1}, \dots , L_{n-1}, R_{n-1}$
is terms of $\lvert L_1\rvert, \lvert R_1\rvert , \dots , \lvert L_i\rvert , \lvert R_i\rvert $ when property~\eqref{eq:property} holds, let us first introduce few more definitions.

\begin{definition}\label{definition:coefficients}
Let $i$ be an integer such that $n > i > \frac{n}{2}$ and $\lvert L_{k}\rvert \lvert R_{k}\rvert  = 0$ for all $k > i$.\\
    (i) Coefficients $\delta_{R,k}$ and $\delta_{L,k}$ for $k > i$ express which of the sets $R_k$ and $L_k$ is non-empty:
    \begin{align*}
        \delta_{R,k} &\coloneqq \begin{cases}
                     1 & \text{if } L_k=\emptyset \\ 0 & \text{if } R_k=\emptyset,
                \end{cases} \\
         \delta_{L,k} &\coloneqq \begin{cases}
                     1 & \text{if } R_k=\emptyset \\ 0 & \text{if } L_k=\emptyset.
\end{cases}       
    \end{align*} \\
(ii) The coefficients $p_{jk}$ satisfy the following relations:
\begin{align*}
    p_{jj} &\coloneqq 1, \\
    p_{jk} &\coloneqq \sum_{m=k}^{j-1} \left(\delta_{L,i+m} \lvert R_{j-m}\rvert  + \delta_{R,i+m}\lvert L_{j-m}\rvert \right)p_{mk}.
\end{align*}\\
(iii)    Condition $c_i$ is defined as follows. We will use its value to determine which part of partition $(L_i, R_i)$ is empty. 
\begin{align*}
        c_i \coloneqq  \sum_{j=1}^{n-1-i} \left(\delta_{L,i+j}\lvert R_{n-i-j}\rvert  + \delta_{R,i+j}\lvert L_{n-i-j}\rvert \right) \left(\sum_{l=1}^j p_{jl}(\lvert R_l\rvert  - \lvert L_l\rvert )\right)
        +\lvert R_{n-i}\rvert \\  - \lvert L_{n-i}\rvert.
    \end{align*}    
\end{definition}

\begin{proposition}\label{lemma:upper_half_size}
Let $i$ be an integer such that $n > i > \frac{n}{2}$ and let $\lvert L_{i+j}\rvert \lvert R_{i+j}\rvert  = 0$ for all $1 \leq j < n - i$.
Then \[ \lvert L_{i+j}\rvert  = \delta_{L, i+j} \sum_{l=1}^j\sum_{k=l}^i p_{jl} \lvert L_k\rvert  \lvert R_{i+l-k}\rvert  \]
\[\lvert R_{i+j}\rvert  = \delta_{R, i+j} \sum_{l=1}^j\sum_{k=l}^i p_{jl} \lvert L_k\rvert  \lvert R_{i+l-k}\rvert . \]
\end{proposition}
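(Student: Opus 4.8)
The plan is to argue by induction on $j$. The starting observation is that for every level $m$ with $1 < m < n$ the size formula of Theorem~\ref{prop:size} gives $\lvert L_m\rvert + \lvert R_m\rvert = \sum_{k=1}^{m-1}\lvert L_k\rvert\lvert R_{m-k}\rvert$. Applying this at $m = i+j$ together with the hypothesis $\lvert L_{i+j}\rvert\lvert R_{i+j}\rvert = 0$, exactly one of the two summands on the left vanishes, so the non-empty one carries the whole sum; using the indicator coefficients of Definition~\ref{definition:coefficients} this reads
\begin{align*}
\lvert L_{i+j}\rvert = \delta_{L,i+j}\, S_j, \qquad \lvert R_{i+j}\rvert = \delta_{R,i+j}\, S_j, \qquad S_j \coloneqq \sum_{k=1}^{i+j-1}\lvert L_k\rvert\lvert R_{i+j-k}\rvert .
\end{align*}
Hence it suffices to prove $S_j = \sum_{l=1}^{j}\sum_{k=l}^{i}p_{jl}\lvert L_k\rvert\lvert R_{i+l-k}\rvert$; the two displayed identities then follow at once.

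First I would record the key inequality $j < i$: since $i > \frac{n}{2}$ we have $n-i < \frac{n}{2} < i$, so every $j$ in range satisfies $j \leq n-i-1 < i$. This guarantees that all the ``small'' indices appearing below stay $\leq i$, i.e.\ in the range of free variables, and that the indicator coefficients $\delta_{\cdot,i+m}$ used for $1 \leq m \leq j-1$ are defined. For the base case $j=1$ the sum $S_1 = \sum_{k=1}^{i}\lvert L_k\rvert\lvert R_{i+1-k}\rvert$ already involves only indices $\leq i$, and since $p_{11}=1$ it is literally the claimed expression.

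For the inductive step I would split $S_j$ according to which index exceeds $i$: the ``lower--lower'' block $j \leq k \leq i$ (where also $i+j-k\leq i$) contributes $\sum_{k=j}^{i}\lvert L_k\rvert\lvert R_{i+j-k}\rvert$, which is precisely the $l=j$ term because $p_{jj}=1$; the block $k = i+m$, $1\leq m\leq j-1$, contributes $\sum_{m}\lvert L_{i+m}\rvert\lvert R_{j-m}\rvert$; and the block $k = j-m$, $1\leq m\leq j-1$, contributes $\sum_{m}\lvert L_{j-m}\rvert\lvert R_{i+m}\rvert$. Into the last two blocks I substitute the induction hypothesis for $\lvert L_{i+m}\rvert$ and $\lvert R_{i+m}\rvert$ (legal since $m<j$), which turns their sum into
\begin{align*}
\sum_{m=1}^{j-1}\Bigl(\delta_{L,i+m}\lvert R_{j-m}\rvert + \delta_{R,i+m}\lvert L_{j-m}\rvert\Bigr)\sum_{l=1}^{m}p_{ml}\sum_{k=l}^{i}\lvert L_k\rvert\lvert R_{i+l-k}\rvert .
\end{align*}

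The decisive step is to interchange the $m$- and $l$-summations: fixing $l$ and letting $m$ run from $l$ to $j-1$ produces exactly the factor $\sum_{m=l}^{j-1}\bigl(\delta_{L,i+m}\lvert R_{j-m}\rvert + \delta_{R,i+m}\lvert L_{j-m}\rvert\bigr)p_{ml}$, which is the defining recurrence of $p_{jl}$ from Definition~\ref{definition:coefficients}. The two upper blocks thus collapse to $\sum_{l=1}^{j-1}p_{jl}\sum_{k=l}^{i}\lvert L_k\rvert\lvert R_{i+l-k}\rvert$, and adding the $l=j$ term completes $S_j = \sum_{l=1}^{j}\sum_{k=l}^{i}p_{jl}\lvert L_k\rvert\lvert R_{i+l-k}\rvert$ as wanted. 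I expect the only real obstacle to be the bookkeeping of this interchange and the careful verification of the index ranges (in particular that all reindexed ``small'' arguments remain $\leq i$ so that the induction hypothesis genuinely applies); once the $p_{jl}$ recurrence is matched, the computation closes immediately.
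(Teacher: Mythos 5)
Your proof is correct and follows essentially the same route as the paper's: induction on $j$, reduction via the partition-size identity to the sum $\sum_{k=1}^{i+j-1}\lvert L_k\rvert\lvert R_{i+j-k}\rvert$, the same three-block split of that sum, substitution of the induction hypothesis, and an interchange of summations that recovers the defining recurrence of $p_{jl}$. The only cosmetic difference is that the paper phrases the split as a disjoint decomposition of the set $\bigcup_{k=1}^{i+j-1}\left(L_k R_{i+j-k}\right)$ (invoking $i > \frac{n}{2}$ for disjointness), whereas you split the numerical sum directly.
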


\begin{proof}
    We will prove the proposition by induction on $j$.
    Let $j=1$.
    Because $\lvert L_{i+1} \rvert \lvert R_{i+1}\rvert  = 0,$
    \begin{align*}
        \lvert L_{i+1}\rvert  &= \delta_{L,i+1} \sum_{k=1}^{i} \lvert L_{k}\rvert \lvert R_{i+1-k}\rvert  \\
        &=\delta_{L,i+1} \sum_{k=1}^{i}  p_{11} \lvert L_{k}\rvert \lvert R_{i+1-k}\rvert  \\
        & = \delta_{L,i+1} \sum_{l=1}^1 \sum_{k=1}^{i} p_{1l} \lvert L_{k}\rvert \lvert R_{i+1-k}\rvert ,
    \end{align*}
    and following the same procedure $\lvert R_{i+1}\rvert  = \delta_{R,i+1} \sum_{l=1}^1 \sum_{k=1}^{n} p_{1l} \lvert L_{k}\rvert \lvert R_{i+1-k}\rvert $.
    Now let us suppose that the statement holds for all positive integers smaller than $j$.
    The set $\bigcup_{k=1}^{i+j-1} \left(L_k R_{i+j-k}\right)$ can be expressed as a union of those words that
    end in $R_{i+k}$,
    words that start in $L_{i+k}$,  and
    words that start and end in substrings shorter or equal to $i$.
    Hence
    \begin{align*}
        \bigcup_{k=1}^{i+j-1} \left(L_k R_{i+j-k}\right) = &\bigcup_{k=i+1}^{i+j-1} \left(L_{i+j-k} R_k\right) \cup \bigcup_{k=i+1}^{i+j-1} \left(L_k R_{i+j-k}\right) \\
        &\cup \bigcup_{k=j}^i \left(L_{k}R_{i+j-k}\right).
    \end{align*}
    Since $i > \frac{n}{2}$, these three sets never overlap, so the size of their union is equal to the sum of their sizes.
    The size of $\bigcup_{k=j}^i \left(L_{k}R_{i+j-k}\right)$ can be expressed using the coefficients $p_{jj}$ as follows.
    \begin{align*}
        \lvert \bigcup_{k=j}^i \left(L_{k}R_{i+j-k}\right) \rvert  = \sum_{k=j}^i \lvert L_{k}\rvert \lvert R_{i+j-k}\rvert  = \sum_{k=j}^i p_{jj} \lvert L_{k}\rvert \lvert R_{i+j-k}\rvert.
    \end{align*}
    To compute the size of the sets $\bigcup_{k=i+1}^{i+j-1} \left(R_k L_{i+j-k}\right) \cup \bigcup_{k=i+1}^{i+j-1} \left(L_k R_{i+j-k}\right)$,
    we apply the induction hypothesis, then change the order of summation, introduce $\bar{k} \coloneqq k - i$, and recognise the formula for $p_{jl}$.
    \begin{align*}
        &\sum_{k=i+1}^{i+j-1} \lvert R_k\rvert  \lvert L_{i+j-k}\rvert  + \sum_{k=i+1}^{i+j-1} \lvert L_k\rvert  \lvert R_{i+j-k}\rvert  \\
        &= \sum_{k=i+1}^{i+j-1} \sum_{l=1}^{k-i} \sum_{m=l}^i p_{k-i,l} \lvert L_m\rvert \lvert R_{i+l-m}\rvert \left(\lvert L_{i+j-k}\rvert  \delta_{R,k}  + \lvert R_{i+j-k}\rvert   \delta_{L,k}\right) \\
        &= \sum_{l=1}^{j-1}\sum_{m=l}^{i}\sum_{k=i+l}^{i+j-1}p_{k-i,l} \lvert L_m\rvert \lvert R_{i+l-m}\rvert \left(\lvert L_{i+j-k}\rvert  \delta_{R,k}  + \lvert R_{i+j-k}\rvert   \delta_{L,k}\right) \\
        &= \sum_{l=1}^{j-1}\sum_{m=l}^{i} \left(\sum_{\bar{k}=l}^{j-1}p_{\bar{k},l} \left(\lvert L_{j-\bar{k}}\rvert  \delta_{R,i+\bar{k}}  + \lvert R_{j-\bar{k}}\rvert   \delta_{L,i+\bar{k}}\right)\right)\lvert L_m\rvert \lvert R_{i+l-m}\rvert  \\
        &= \sum_{l=1}^{j-1}\sum_{m=l}^{i} p_{jl} \lvert L_m\rvert \lvert R_{i+l-m}\rvert.
    \end{align*}
    We now combine the results to obtain
    \begin{align*}
        \lvert L_{i+j}\rvert  &= \delta_{L,i+j} \sum_{k=1}^{i+j-1}\lvert L_{k}\rvert \lvert R_{i+j-k}\rvert  \\
        &= \delta_{L,i+j} \left(\sum_{k=j}^i p_{jj} \lvert L_k\rvert  \lvert R_{i+j-k}\rvert   + \sum_{l=1}^{j-1}\sum_{m=l}^i p_{jl} \lvert L_m\rvert  \lvert R_{i+l-m}\rvert  \right) \\
        &= \delta_{L,i+j} \sum_{l=1}^{j}\sum_{m=l}^i p_{jl} \lvert L_m\rvert  \lvert R_{i+l-m}\rvert ,
    \end{align*}
    and equivalently
    \begin{equation*}
        \lvert R_{i+j}\rvert
        = \delta_{R,i+j} \sum_{l=1}^{j}\sum_{m=l}^i p_{jl} \lvert L_m\rvert  \lvert R_{i+l-m}\rvert.  \mbox{\tag*{\qedhere}}
    \end{equation*}
\end{proof}

\begin{theorem}\label{theorem:maximum}
    There exists a maximum non-overlapping code $C = \bigcup_{i=1}^{n-1} \left(L_i R_{n-i} \right)$
    satisfying the property $\forall i > \frac{n}{2}: \; \vert L_i\rvert \lvert R_i\rvert = 0 $. \\
    Moreover, when $i > \frac{n}{2}$ the condition $c_i$ from Definition~\ref{definition:coefficients}  satisfies:\\
    (i) $c_i \geq 0$ if  $R_i = \emptyset$ and \\
    (ii) $c_i \leq 0$ if $L_i = \emptyset$ .
\end{theorem}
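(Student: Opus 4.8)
The plan is to work with the integer program SQN from Proposition~\ref{cor:basicoptproblem}, writing $x_i = \lvert L_i\rvert$, $y_i = \lvert R_i\rvert$ and $T_i = \sum_{j=1}^{i-1} x_j y_{i-j} = x_i + y_i$ for the total at level $i$. Pairing the index $i$ with $n-i$, the objective reads $\sum_{i < n/2}(x_i y_{n-i} + x_{n-i}y_i)$ together with the self-paired term $x_{n/2}y_{n/2}$ when $n$ is even. I would regard the sizes at levels $\le \lfloor n/2\rfloor$ (including the middle level when $n$ is even) as frozen and study the objective purely as a function of the upper-half choices, proceeding by downward induction on the upper levels $i = n-1, n-2, \dots, \lfloor n/2\rfloor + 1$.

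The heart of the argument is the following claim: if property~\eqref{eq:property} already holds at every level above $i$ and the empty-side indicators $\delta_{L,k}, \delta_{R,k}$ there are held fixed, then, with all levels below $i$ frozen, the objective is an \emph{affine} function of $x_i$ whose slope is exactly the quantity $c_i$ of Definition~\ref{definition:coefficients}. To establish this I would substitute the closed forms of Proposition~\ref{lemma:upper_half_size} for the sizes $\lvert L_{i+j}\rvert, \lvert R_{i+j}\rvert$, so that each level $m > i$ contributes $\bigl(\delta_{L,m}\lvert R_{n-m}\rvert + \delta_{R,m}\lvert L_{n-m}\rvert\bigr)T_m$, and then track the dependence of each $T_m$ on $x_i$; the resulting chain-rule coefficients are precisely the $p_{jl}$, and collecting the direct term $\lvert R_{n-i}\rvert - \lvert L_{n-i}\rvert$ with the downstream terms reproduces $c_i$ verbatim. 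The step that makes this work, and the main obstacle, is verifying that no quadratic term in $x_i$ survives: a product $x_k y_{m-k}$ inside some $T_m$ could depend on $x_i$ through both factors only if $k \ge i$ and $m-k \ge i$, forcing $m \ge 2i > n$, which is impossible because $i > n/2$. This exclusion is exactly where the hypothesis $i > n/2$ enters, and matching the bookkeeping against the definitions of $p_{jl}$ and $c_i$ is the one genuinely laborious part of the proof.

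Granting the claim, both assertions follow. For existence, I would start from an arbitrary maximum code and run the downward induction: at stage $i$, property~\eqref{eq:property} already holds above $i$ (vacuously at $i=n-1$), so the objective is affine in $x_i$ and hence is not decreased by moving $x_i$ to whichever endpoint $0$ or $T_i$ is at least as good; resetting the levels above $i$ through Proposition~\ref{lemma:upper_half_size} keeps the configuration feasible and preserves the empty-side choices, so property~\eqref{eq:property} is maintained above $i$ and newly established at $i$. Since the objective never drops below $S(q,n)$, the resulting code is still maximum and satisfies property~\eqref{eq:property} at every level $i > n/2$. For the sign conditions, in that final code each upper $x_i$ sits at an endpoint, and affineness with slope $c_i$ together with optimality gives the first-order conditions: $R_i = \emptyset$ means $x_i = T_i$ is optimal, so decreasing it cannot help and $c_i \ge 0$, while $L_i = \emptyset$ means $x_i = 0$ is optimal, so increasing it cannot help and $c_i \le 0$. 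The degenerate case $T_i = 0$ (both parts empty) and the even-$n$ middle level, which the induction never touches, are dispatched directly.
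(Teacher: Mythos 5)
Your proposal is correct and takes essentially the same approach as the paper: the paper's proof likewise uses Proposition~\ref{lemma:upper_half_size} to show that, with the lower levels frozen and the upper empty-side indicators fixed, swapping level $m$ to an endpoint changes the objective by exactly $\left(\lvert \bar{L}_m\rvert - \lvert L_m\rvert\right)c_m$, i.e.\ your affine-slope claim, and then eliminates violations of property~\eqref{eq:property} from the top level downward. The only cosmetic difference is that the paper argues by contradiction at the largest violating index (forcing $c_m = 0$ there), while you run a downward induction over all upper levels and read off the sign conditions as first-order optimality conditions.
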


\begin{proof}
    Let $C = \bigcup_{i=1}^{n-1} \left(L_i R_{n-i} \right)$ be a maximum non-overlapping code satisfying the property that there exists an
    integer $j$, $\frac{n}{2} < j \leq n - 1$, such that $\lvert L_{j}\rvert \lvert R_{j}\rvert  \neq 0$.
    Let us denote the largest such integer by $m$.
    Now rearrange the formula $\lvert C\rvert  = \sum_{j=1}^{n-1} \lvert L_j \rvert \lvert R_{n-j} \rvert$ in order to apply Proposition~\ref{lemma:upper_half_size} to it.
    \begin{align*}
        \lvert C\rvert
        =&\sum_{j=m+1}^{n-1} \left(\lvert L_j\rvert \lvert R_{n-j}\rvert  + \lvert R_j\rvert \lvert L_{n-j}\rvert \right) + \sum_{j=n-m}^m \lvert L_j\rvert  \lvert R_{n-j}\rvert  \\
        =&\sum_{j=1}^{n-1-m} \left(\lvert L_{m+j}\rvert \lvert R_{n-m-j}\rvert  + \lvert R_{m+j}\rvert \lvert L_{n-m-j}\rvert \right)
        + \sum_{j=n-m}^m \lvert L_j\rvert  \lvert R_{n-j}\rvert  \\
        =& \sum_{j=1}^{n-1-m} \left( \delta_{L,m+j}\lvert R_{n-m-j}\rvert + \delta_{R,m+j}\lvert L_{n-m-j}\rvert \right) \left(\sum_{l=1}^j \sum_{k=l}^m p_{jl} \lvert L_k\rvert  \lvert R_{m+l-k}\rvert \right) \\
        &+ \sum_{j=n-m}^m \lvert L_j\rvert  \lvert R_{n-j}\rvert.
    \end{align*}
    Define 
    $\bar{C} \coloneqq \bigcup_{i=1}^{n-1} \left(\bar{L}_i \bar{R}_{n-i} \right)$ such that
    \[\forall i < m: \bar{L}_i\coloneqq L_i \text{ and } \bar{R}_i\coloneqq R_i,\]
    \[\bar{L}_{m}  \coloneqq \begin{cases}
        \emptyset & \text{if } c_m \leq 0 \\
        \bigcup_{i=1}^{m-1} \left(\bar{L}_i \bar{R}_{m-i}\right) & \text{if } c_m > 0, \\
    \end{cases}
    \]
    \[\bar{R}_{m}  \coloneqq \begin{cases}
    \bigcup_{i=1}^{m-1} \left(\bar{L}_i \bar{R}_{m-i}\right) & \text{if } c_m \leq 0 \\
        \emptyset & \text{if } c_m > 0 ,\\
    \end{cases}
    \]
    and $\forall j \in \{1, \dots, n-1-m\}$: 
    \[\bar{L}_{m+j} = \begin{cases}
        \emptyset & \text{if } L_{m+j} = \emptyset \\
        \bigcup_{i=0}^{m+j-1} (\bar{L}_i\bar{R}_{m+j-i}) & \text{if } R_{m+j} = \emptyset,
    \end{cases}\]
    \[\bar{R}_{m+j} = \begin{cases}
        \bigcup_{i=0}^{m+j-1} (\bar{L}_i\bar{R}_{m+j-i}) & \text{if } L_{m+j} = \emptyset \\
        \emptyset & \text{if } R_{m+j} = \emptyset.
    \end{cases}\]
    $\bar{C}$ is non-overlapping due to Theorem~\ref{theorem:c6}.
    Now we observe the size of the code $\bar{C}$. From Proposition~\ref{lemma:upper_half_size} we get 
    \[ \lvert \bar{L}_{m+j}\rvert  \coloneqq \delta_{L, m+j} \sum_{l=1}^j\sum_{k=l}^m p_{jl} \lvert \bar{L}_k\rvert  \lvert \bar{R}_{m+l-k}\rvert ,  \]
    \[\lvert \bar{R}_{m+j}\rvert  \coloneqq \delta_{R, m+j} \sum_{l=1}^j\sum_{k=l}^m p_{jl} \lvert \bar{L}_k\rvert  \lvert \bar{R}_{m+l-k}\rvert , \]
    and from Theorem~\ref{prop:size}
    \begin{align*}
        \lvert \bar{C}\rvert  =& \sum_{j=1}^{n-1} \lvert \bar{L}_j\rvert  \lvert \bar{R}_{n-j}\rvert
        =\sum_{j=m+1}^{n-1} \left(\lvert \bar{L}_j\rvert \lvert \bar{R}_{n-j}\rvert  + \lvert \bar{R}_j\rvert \lvert \bar{L}_{n-j}\rvert \right) + \sum_{j=n-m}^m \lvert \bar{L}_j\rvert  \lvert \bar{R}_{n-j}\rvert  \\
        =&\sum_{j=1}^{n-1-m} \left(\lvert \bar{L}_{m+j}\rvert \lvert \bar{R}_{n-m-j}\rvert  + \lvert \bar{R}_{m+j}\rvert \lvert \bar{L}_{n-m-j}\rvert \right)
        + \sum_{j=n-m}^m \lvert \bar{L}_j\rvert  \lvert \bar{R}_{n-j}\rvert  \\
        =& \sum_{j=1}^{n-1-m} \left(\delta_{L,m+j}\lvert \bar{R}_{n-m-j}\rvert + \delta_{R,m+j}\lvert \bar{L}_{n-m-j}\rvert\right)  \left(\sum_{l=1}^j \sum_{k=l}^m p_{jl} \lvert \bar{L}_k\rvert  \lvert \bar{R}_{m+l-k}\rvert \right) \\
        & + \sum_{j=n-m}^m \lvert \bar{L}_j\rvert  \lvert \bar{R}_{n-j}\rvert .
    \end{align*}
    We notice that in the above sum only $\bar{L}_i$ and $\bar{R}_i$ with $1 \leq i \leq m$ occur.
    We know that unless $i=m$, it holds that $\bar{L}_i = L_i$ and $\bar{R}_i = R_i$.
    Therefore
    \begin{align*}
        \lvert \bar{C}\rvert
        =& \sum_{j=1}^{n-1-m} \delta_{L,m+j}\lvert R_{n-m-j}\rvert \left(\sum_{l=1}^j p_{jl} \left(\lvert \bar{L}_m\rvert  - \lvert L_m\rvert \right) \lvert R_{l}\rvert + \sum_{l=1}^j  p_{jl} \lvert L_{l}\rvert \left(\lvert \bar{R}_m\rvert  - \lvert R_m\rvert \right)\right) \\
        +& \sum_{j=1}^{n-1-m} \delta_{R,m+j}\lvert L_{n-m-j}\rvert  \left(\sum_{l=1}^j  p_{jl} \left(\lvert \bar{L}_m\rvert  - \lvert L_m\rvert \right) \lvert R_{l}\rvert  + \sum_{l=1}^j  p_{jl} \lvert L_{l}\rvert \left(\lvert \bar{R}_m\rvert  - \lvert R_m\rvert \right) \right)\\
        +& \left(\lvert \bar{L}_m\rvert  - \lvert L_m\rvert \right) \lvert R_{n-m}\rvert  + \lvert L_{n-m}\rvert  \left(\lvert \bar{R}_m\rvert  - \lvert R_m\rvert \right) + \lvert C\rvert.
    \end{align*}
    We know that
    \[\lvert R_m \rvert = \sum_{i=1}^{m-1} \lvert L_i\rvert \lvert R_{m-1-i}\rvert  - \lvert L_m\rvert,\]
    \[\lvert \bar{R}_m \rvert = \sum_{i=1}^{m-1} \lvert L_i\rvert \lvert R_{m-1-i}\rvert  - \lvert \bar{L}_m\rvert,\]
    and therefore \[\lvert \bar{R}_m\rvert  - \lvert R_m\rvert  = \lvert L_m\rvert  - \lvert \bar{L}_m\rvert.\]
    \begin{align*}
        \lvert \bar{C}\rvert
        =& \left(\lvert \bar{L}_m\rvert  - \lvert L_m\rvert \right) \sum_{j=1}^{n-1-m} \left(\delta_{L,m+j}\lvert R_{n-m-j}\rvert + \delta_{R,m+j}\lvert L_{n-m-j}\rvert\right) \left(\sum_{l=1}^j p_{jl} \left(\lvert R_{l}\rvert  - \lvert L_{l}\rvert \right)\right) \\
        &+ \left(\lvert \bar{L}_m\rvert  - \lvert L_m\rvert \right) \left(\lvert R_{n-m}\rvert  - \lvert L_{n-m}\rvert \right) + \lvert C\rvert\\
        =& \left(\lvert \bar{L}_m\rvert  - \lvert L_m\rvert \right) c_m + \lvert C\rvert.
    \end{align*}
    Recall that we set \[\lvert\bar{L}_m \rvert = 0 < \lvert L_m \rvert \quad \text{ if } \; c_m  < 0,\] and \[\lvert \bar{L}_m \rvert = \sum_{i=1}^{m-1} \lvert L_i \rvert \lvert R_{m-1-i} \rvert > \lvert L_m \rvert \quad \text{if }  \; c_m > 0,\]
    so $\lvert \bar{C}\rvert  > \lvert C\rvert,$ as soon as $c_m \neq 0$ which contradicts the fact that $C$ is maximum.
    In the other case, when $c_m = 0$, we get $\lvert \bar{C}\rvert  = \lvert C\rvert $.
    Notice that in this case both $\lvert \bar{C}\rvert $ and $\lvert C\rvert $ are expressed as the same function
    in $\lvert L_1\rvert ,\lvert R_1\rvert , \dots, \lvert L_{m-1}\rvert , \lvert R_{m-1}\rvert $,
    and indeed $\lvert \bar{L}_j\rvert  \lvert \bar{R}_j\rvert  = 0$ for all $\frac{n}{2}<j\leq m-1$ if and only if
    $\lvert L_j\rvert  \lvert R_j\rvert  = 0$ for all $\frac{n}{2}<j\leq m-1$.
    The theorem therefore holds.
\end{proof}

The proof revealed that it is easy to determine all maximum non-overlapping codes if we know all solutions that satisfy Property~\eqref{eq:property}. 
The algorithm to generate them is summarized in the following corollary.

\begin{corollary}\label{cor:1}
    Let $C = \bigcup_{i=1}^{n-1} \left(L_i R_{n-i} \right)$ be a maximum non-overlapping code with $c_m = 0$ for some $m > \frac{n}{2}$.
    Define $\hat{L}_i \coloneqq L_i$ for $i < m$,
    $(\hat{L}_m, \hat{R}_m)$ a partition of $\bigcup_{i <m} (L_i R_{m-i})$,
    and for $i > m$
    \begin{align*}
        \hat{L}_i &\coloneqq \begin{cases}
        \bigcup_{j <i } (\hat{L}_j \hat{R}_{i-j}) & \text{if } c_i \geq 0 \\
        \emptyset & \text{if } c_i < 0, \end{cases} \\
        \hat{R}_i &\coloneqq \begin{cases}
         \emptyset & \text{if } c_i \geq 0 \\
        \bigcup_{j <i } (\hat{L}_j \hat{R}_{i-j}) & \text{if } c_i < 0. \end{cases} \\
    \end{align*}
    $C = \bigcup_{i=1}^{n-1} \left(\hat{L}_i \hat{R}_{n-i} \right)$ is a maximum non-overlapping code.
\end{corollary}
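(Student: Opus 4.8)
The plan is to prove two things about $\hat{C} = \bigcup_{i=1}^{n-1}(\hat L_i\hat R_{n-i})$: that it is non-overlapping, and that $|\hat C| = |C| = S(q,n)$. The first is immediate, since the sets $(\hat L_i,\hat R_i)_{i<n}$ are by construction a collection of partitions of the form demanded by Construction~\ref{construction:fimmel}, so $\hat C\in\mathcal{M}_{q,n}$ and Theorem~\ref{theorem:c6} applies. The whole substance of the corollary is therefore the cardinality identity, and I would obtain it by reusing the computation in the proof of Theorem~\ref{theorem:maximum}.

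First I would isolate the purely structural fact that the emptiness pattern dictated above level $m$ is insensitive to the partition chosen at level $m$. Inspecting Definition~\ref{definition:coefficients}, every set size feeding a condition $c_i$ with $i>m$, or one of the coefficients $p_{jl}$ it uses, sits at a level at most $n-i$; since $i>m>\frac{n}{2}$ we have $n-i<\frac{n}{2}<m$, so none of those sizes is altered by re-partitioning level $m$, nor by the induced changes to the sizes at levels $\ge m$ (which never appear in any such $c_i$). The only remaining inputs to $c_i$ are the emptiness indicators at levels strictly above $i$, which are fixed before $c_i$ is evaluated in the top-down order. Reading $i$ from $n-1$ down to $m+1$ then shows that the indicators $(\hat L_i,\hat R_i)_{i>m}$, hence the whole upper pattern, are determined by the common low-level sizes alone. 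Consequently any two codes produced by the corollary's rule share this upper pattern and differ only at level $m$, and the size identity of Theorem~\ref{theorem:maximum},
\[
|\hat C| = \bigl(|\hat L_m|-|L_m|\bigr)\,c_m + |C|,
\]
applies to each of them.

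When the upper pattern of $\hat C$ coincides with that of $C$, the argument concludes at once: $C$ obeys the sign rule of Theorem~\ref{theorem:maximum}(ii), so at every level $i>m$ with $c_i\neq 0$ the corollary's forced side agrees with $C$, the constant and the slope $c_m$ above are those of $C$ itself, and the hypothesis $c_m=0$ gives $|\hat C|=|C|$. The remaining task is to rule out, or neutralise, genuine disagreements.

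I expect the tie levels to be the main obstacle: at a level $i>m$ with $c_i=0$ the sign rule selects no side, so the corollary's fixed choice ($\hat L_i$ nonempty) may differ from $C$, and such a flip perturbs the indicators feeding the lower conditions, so the constant $|C|$ and the slope $c_m$ above need not be preserved verbatim. I would resolve this by the same affine principle applied one level at a time: for any fixed pattern above a level $i$, the cardinality is an affine function of $|\hat L_i|$ with slope exactly $c_i$, so flipping the choice at a level where $c_i=0$ --- which moves $|\hat L_i|$ between $0$ and the full union size --- leaves the cardinality unchanged. Processing the tie levels from the top downward keeps everything above the current level fixed, so each flip is size-neutral; telescoping these modifications together with the size-neutral re-partition at level $m$ carries $C$ to $\hat C$ without changing the cardinality, giving $|\hat C|=|C|=S(q,n)$. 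Making this bookkeeping fully rigorous --- in particular verifying that a flip at a tie level keeps the lower conditions at the values for which the affine identity was set up --- is the delicate point of the proof.
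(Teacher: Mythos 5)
Your architecture is the right one, and it is essentially the only justification the paper itself offers: Corollary~\ref{cor:1} is stated without a separate proof, as a summary of what the proof of Theorem~\ref{theorem:maximum} ``revealed'', namely that once the sizes at levels below $m$ and the emptiness pattern above $m$ are fixed, the cardinality is affine in $\lvert L_m\rvert$ with slope $c_m$. Your first steps are correct: non-overlappingness of $\hat C$ is immediate from Construction~\ref{construction:fimmel} and Theorem~\ref{theorem:c6}; each $c_i$ with $i>m$ depends only on sizes at levels at most $n-i<\frac{n}{2}<m$ plus the indicators above $i$, so the sign-rule pattern is determined top-down by the shared low-level data; and when the pattern of $\hat C$ coincides with that of $C$, the hypothesis $c_m=0$ finishes the proof.

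The gap is exactly the step you flag as ``delicate'', and it is a genuine missing piece rather than bookkeeping: after flipping $C$ at a tie level $i>m$ (where $c_i=0$), you need to know that $c_m$ \emph{recomputed with the new pattern} is still $0$, since $c_m$ does depend on the indicators at levels above $m$; otherwise the final re-partition at level $m$ is not size-neutral. Maximality of the flipped code rescues you only when $0<\lvert L_m\rvert<\sum_{k<m}\lvert L_k\rvert\lvert R_{m-k}\rvert$, because then perturbing $\lvert L_m\rvert$ in both directions forces the new slope to vanish; but if $C$ has $L_m$ or $R_m$ empty, maximality gives only a one-sided inequality on the new $c_m$, and your telescoping argument stalls. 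The missing lemma is: for $\frac{n}{2}<j<i$, flipping the side chosen at level $i$ changes $c_j$ by exactly $(\lvert R_{i-j}\rvert-\lvert L_{i-j}\rvert)\,c_i$, hence not at all at a tie level. It follows from the same affine identities you already use: with all other data fixed, the code size is affine in $\ell_j=\lvert L_j\rvert$ with slope $c_j(\delta_i)$ for each choice $\delta_i$ of side at level $i$, while comparing the two choices at fixed $\ell_j$ gives a difference $U_i(\ell_j)\,c_i$, where $U_i(\ell_j)=\ell_j\left(\lvert R_{i-j}\rvert-\lvert L_{i-j}\rvert\right)+\text{const}$ and $c_i$ is independent of $\ell_j$; matching coefficients of $\ell_j$ yields the lemma. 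Inserting this closes your proof --- all conditions $c_j$, in particular $c_m=0$, are invariant under the whole flip sequence --- and in fact makes rigorous a tie-level subtlety that the paper's own presentation glosses over.
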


\section{Exact formulas for four-letter codes}\label{formula:4}
\begin{theorem}
Let $q \geq 3$.
The number of distinct maximal four-letter $q$-ary non-overlapping codes equals \\
 \begin{align*}
 \sum_{m=1}^{q-1} \binom{q}{m} \left( 2(2^{q-m} - 1)^{m(q-m)} + (2^m + 2^{q-m})^{m(q-m)} - 2^{m^2(q-m)} - 2^{m(q-m)^2} \right).
 \end{align*}
 
\end{theorem}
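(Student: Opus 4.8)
The plan is to count collections of partitions and transfer the count to codes via Corollary~\ref{corollary:q>2}. Since $q \geq 3$, that corollary tells us every maximal non-overlapping code arises from \emph{exactly one} collection $(L_i,R_i)_{i=1,2,3}$ satisfying Construction~\ref{construction:fimmel}, so it suffices to count those collections that produce a maximal code. First I would fix the partition $(L_1,R_1)$ of $\Sigma$ with $\lvert L_1\rvert = m$ and $\lvert R_1\rvert = q-m$; there are $\binom{q}{m}$ such choices, and by relabelling $\Sigma$ the number of admissible completions $(L_2,R_2,L_3,R_3)$ depends only on the pair $(m,q-m)$. Writing this number as $f(m,q-m)$, the total becomes $\sum_{m=1}^{q-1}\binom{q}{m}f(m,q-m)$, so the whole problem reduces to evaluating $f(a,b)$ with $a+b=q$.

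The central step is to specialize Theorem~\ref{theorem:characterization} to $n=4$ and $q\geq 3$. Because $q\geq 3$, conditions (iii) and (iv) are vacuous. For (i) and (ii) the relevant indices are $i\in\{1,2,3\}$, and I would verify that only two of them ever constrain. The case $i=3$ is vacuous since $R_1,L_1\neq\emptyset$. The case $i=1$ is automatic: whenever $R_3=\emptyset$ (resp.\ $L_3=\emptyset$) we have $L_3=Z_3$ (resp.\ $R_3=Z_3$) for $Z_3\coloneqq(L_1R_2)\cup(L_2R_1)$, and by Corollary~\ref{remark:l1r1} this set already contains a word starting (resp.\ ending) with every letter of $L_1$ (resp.\ $R_1$). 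Hence the only genuine constraints come from $i=2$: if $R_2=\emptyset$ then every two-letter word of $L_2=(L_1R_1)$ must be a prefix of some word of $L_3$, and symmetrically if $L_2=\emptyset$ then every word of $R_2=(L_1R_1)$ must be a suffix of some word of $R_3$.

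Next I would compute $f(a,b)$ by splitting on $(L_2,R_2)$, using that $(L_1R_2)$ and $(L_2R_1)$ are disjoint (their words have patterns $L_1L_1R_1$ and $L_1R_1R_1$ by Corollary~\ref{remark:l1r1}). If $R_2=\emptyset$, then $Z_3=(L_2R_1)$ consists of the words $uvw$ with $uv\in(L_1R_1)$ and $w\in R_1$; grouping by $uv$ gives $ab$ groups of size $b$, each needing at least one member in $L_3$, for $(2^b-1)^{ab}$ completions, and symmetrically $L_2=\emptyset$ gives $(2^a-1)^{ab}$. If both $L_2,R_2\neq\emptyset$ there is no constraint, and for $\lvert L_2\rvert=s$ there are $\binom{ab}{s}$ choices of $L_2$ and $2^{\lvert Z_3\rvert}=2^{a(ab-s)+sb}=(2^a)^{ab-s}(2^b)^s$ choices of $(L_3,R_3)$; summing over $1\leq s\leq ab-1$ and recognising a binomial expansion gives $(2^a+2^b)^{ab}-2^{a^2b}-2^{ab^2}$. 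Adding the three cases yields $f(a,b)=(2^a-1)^{ab}+(2^b-1)^{ab}+(2^a+2^b)^{ab}-2^{a^2b}-2^{ab^2}$.

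Finally I would substitute $a=m$, $b=q-m$ into $\sum_{m=1}^{q-1}\binom{q}{m}f(m,q-m)$ and symmetrize: using $\binom{q}{m}=\binom{q}{q-m}$ and the index change $m\mapsto q-m$ one has $\sum_m\binom{q}{m}(2^m-1)^{m(q-m)}=\sum_m\binom{q}{m}(2^{q-m}-1)^{m(q-m)}$, so the two correlation terms combine into $2(2^{q-m}-1)^{m(q-m)}$, reproducing the stated closed form. I expect the main obstacle to be the reduction in the second paragraph—rigorously showing the $i=1$ conditions are automatic and that in the unconstrained case genuinely no four-letter word can be appended—together with confirming the disjoint grouping of $Z_3$ that makes the per-group counts independent.
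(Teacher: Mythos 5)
Your proposal is correct and takes essentially the same approach as the paper: reduce to counting collections of partitions via Corollary~\ref{corollary:q>2}, specialize Theorem~\ref{theorem:characterization} to $n=4$ with $q\geq 3$, split into the cases $L_2=\emptyset$, $R_2=\emptyset$, and both non-empty, count each case by grouping $Z_3=(L_1R_2)\cup(L_2R_1)$ and applying the binomial theorem, and sum over $m$. Your write-up is, if anything, slightly more careful than the paper's (explicit disjointness of $(L_1R_2)$ and $(L_2R_1)$, explicit verification that the $i=1$ and $i=3$ conditions are automatic, and the $m\mapsto q-m$ symmetrization that reconciles $(2^m-1)^{m(q-m)}+(2^{q-m}-1)^{m(q-m)}$ with the stated term $2(2^{q-m}-1)^{m(q-m)}$).
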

\begin{proof}
    By Corollary~\ref{corollary:q>2} it is sufficient to count the number of distinct partitions $(L_1, R_1), (L_2, R_2), (L_3, R_3)$ that satisfy the requirements of Theorem~\ref{theorem:characterization}.
    Sets $L_1$ and $R_1$ cannot be empty by definition.
    If $L_2$ is empty, then for every $y\in R_2$ there exists $x \in L_1$ such that $xy \in R_3$.
    $R_3$ is therefore non-empty, but $L_3$ can be empty as every $y \in R_1$ occurs as a suffix in $R_2 = (L_1R_1)$.
    Therefore there are
    \begin{align*}
        \sum_{m=1}^{q-1} \binom{q}{m} \prod_{j=1}^{m(q-m)} \left(\sum_{i=1}^{q-m} \binom{q-m}{i}  \right)  &=
        \sum_{m=1}^{q-1} \binom{q}{m} \prod_{j=1}^{m(q-m)} \left(2^{q-m} - 1\right)  \\
        &=\sum_{m=1}^{q-1} \binom{q}{m} \left(2^{q-m} - 1\right)^{m(q-m)}
    \end{align*}
    maximal non-overlapping codes with empty $L_2$. There are also as many maximal codes when $R_2$ is empty due to a symmetric observation.\\
    Suppose $L_2$ and $R_2$ are both non-empty. If $L_3$ and $R_3$ are non-empty, then the code is clearly maximal.
    If $L_3$ is empty, then every $x \in R_1$ is a suffix in $(L_2R_1) \subseteq R_3$, so the code is maximal.
    If $R_3$ is empty, then every $x \in L_1$ is a prefix in $(L_2R_1) \subseteq L_3$, so the code is maximal.
    Therefore there are
    \begin{align*}
        \sum_{m=1}^{q-1} \binom{q}{m}& \sum_{i=1}^{q(m-q)-1}  \binom{m(q-m)}{i} \sum_{j=0}^{m^2(q-m) + i(q-2m)} \binom{m^2(q-m) + i(q-2m)}{j} = \\
        &= \sum_{m=1}^{q-1} \binom{q}{m} \sum_{i=1}^{q(m-q)-1}  \binom{m(q-m)}{i} 2^{m^2(q-m) + i(q-2m)} \\
        &= \sum_{m=1}^{q-1} 2^{m^2(q-m)}\binom{q}{m} \sum_{i=1}^{q(m-q)-1}  \binom{m(q-m)}{i} 2^{(q-2m)i} \\
        &= \sum_{m=1}^{q-1} 2^{m^2(q-m)}\binom{q}{m} \left((1 + 2^{q-2m})^{m(q-m)} - 1 - 2^{m(q-m)(q-2m)}\right)
    \end{align*}
    maximal non-overlapping codes with $\lvert L_2\rvert \lvert R_2\rvert \neq 0$.
\end{proof}

Blackburn~\cite{Blackburn:2015} computed $S(q,2)$ and $S(q,3)$ by proving that Construction~\ref{construction:blackburn}
with $k = 1$ and $k = 2$ respectively gives us codes of maximum sizes.
Using the results from the previous section, we further prove that this construction with $k = 3$ always yields the optimum
for four-letter codes.
Before showing the result, we will provide Lemma~\ref{lemma:nhalves} and Proposition~\ref{proposition:blackburn_optimalsize}
regarding the sizes of codes generated by Construction~\ref{construction:blackburn}.

\begin{lemma}
    \label{lemma:nhalves}
    Let $q, n, m \leq \frac{n}{2}$ and $2 \leq j \leq n$ be positive integers, and
    $f$ the following function
    \begin{align*}
        f(j)= (n-m-1) \left(\frac{(n-1)q - m}{n}\right)^{n-j+1}\prod_{i=2}^{j-2}\frac{n-i}{i} \\+ \frac{n-q-m}{n}\sum_{i=0}^{n-j}\binom{n-1}{i}\left(\frac{(n-1)q-m}{n}\right)^i.
    \end{align*}
    Then $f(j) \geq f(j+1)$.
\end{lemma}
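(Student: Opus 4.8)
The plan is to compute the difference $f(j)-f(j+1)$ explicitly, factor out a common positive power of $A:=\frac{(n-1)q-m}{n}$, and collapse the remaining mixed-sign expression into a clean factored form whose nonnegativity is an elementary inequality. Throughout I will use that $A>0$ in the regime of interest (for alphabet sizes $q\geq 2$ one has $(n-1)q-m\geq 2(n-1)-\tfrac{n}{2}>0$; the degenerate case $A=0$ makes the relevant terms vanish). The overall strategy is purely algebraic, with no need for induction.

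First I would split $f$ into its two summands and difference each. The second summand telescopes: subtracting $\sum_{i=0}^{n-j-1}$ from $\sum_{i=0}^{n-j}$ leaves only $\frac{n-q-m}{n}\binom{n-1}{n-j}A^{n-j}$. For the first summand I would pull out $A^{n-j}$ and use $\prod_{i=2}^{j-1}\frac{n-i}{i}=\big(\prod_{i=2}^{j-2}\frac{n-i}{i}\big)\frac{n-j+1}{j-1}$, obtaining $f(j)-f(j+1)=A^{n-j}g(j)$ with $g(j)=(n-m-1)\big(\prod_{i=2}^{j-2}\frac{n-i}{i}\big)\big(A-\frac{n-j+1}{j-1}\big)+\frac{n-q-m}{n}\binom{n-1}{n-j}$. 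Since $A^{n-j}>0$, it suffices to prove $g(j)\geq 0$.

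The crux is simplifying $g(j)$ for $j\geq 3$. Here I would rewrite the product as $\prod_{i=2}^{j-2}\frac{n-i}{i}=\frac{1}{n-j+1}\binom{n-2}{j-2}$ and use $\binom{n-1}{n-j}=\binom{n-1}{j-1}=\frac{n-1}{j-1}\binom{n-2}{j-2}$ to factor out $\binom{n-2}{j-2}>0$. After substituting $A=\frac{(n-1)q-m}{n}$, the two constant terms $-\frac{n-m-1}{j-1}$ and $\frac{(n-q-m)(n-1)}{n(j-1)}$ cancel almost entirely: their combined numerator equals exactly $m-q(n-1)=-nA$, so $g(j)$ collapses to $\binom{n-2}{j-2}\,A\big(\frac{n-m-1}{n-j+1}-\frac{1}{j-1}\big)$. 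This cancellation is the main obstacle, since without it the naive sign analysis is intractable: both $n-q-m$ and $A-\frac{n-j+1}{j-1}$ change sign across the parameter range. Once it is found, $g(j)\geq 0$ is equivalent to $(n-m-1)(j-1)\geq n-j+1$, i.e. $j(n-m)\geq 2n-m$, which for $j\geq 3$ follows from $3(n-m)\geq 2n-m\iff m\leq\frac{n}{2}$, precisely the hypothesis.

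Finally I would handle the boundary case $j=2$ separately, since the product identity above fails there (the empty product is $1$, not $\frac{1}{n-1}$). A direct computation gives $f(2)-f(3)=A^{n-2}\big[(n-m-1)(A-1)+\frac{(n-q-m)(n-1)}{n}\big]$, and the same numerator cancellation reduces the bracket to $A(n-m-2)$, so $f(2)-f(3)=A^{n-1}(n-m-2)\geq 0$ because $m\leq\frac{n}{2}$ forces $n-m-2\geq 0$ for $n\geq 3$. Combining the two regimes establishes $f(j)\geq f(j+1)$ throughout.
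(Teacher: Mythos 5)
Your proof is correct, and it rests on the same two pillars as the paper's argument: telescoping the binomial sum so that $f(j)-f(j+1)$ is governed by the coefficient of $A^{n-j}$ (where $A=\frac{(n-1)q-m}{n}$), and the cancellation identity $\frac{(n-1)q-m}{n}+\frac{(n-q-m)(n-1)}{n}=n-m-1$ — your observation that the combined numerator equals $m-q(n-1)=-nA$ is exactly this identity — followed by the elementary inequality $(n-m-1)(j-1)\geq n-j+1$, which follows from $m\leq\frac{n}{2}$ and $j\geq 3$. The execution differs in a way that matters, though. The paper bounds the coefficient of $A^{n-j}$ in $f(j)$ below by the corresponding coefficient $(n-m-1)\prod_{i=2}^{j-1}\frac{n-i}{i}$ of $f(j+1)$ term by term; as written that comparison contains off-by-one slips (the expansion of $\binom{n-1}{n-j}$ should carry the factor $\frac{(n-1)(n-j+1)}{j-1}$, not $\frac{(n-1)(n-j)}{j}$, and the bound should be against $\frac{n-j+1}{j-1}$, not $\frac{n-j}{j}$), and with the paper's ratios the comparison of the second terms only goes the right way when $n-q-m\leq 0$ — a condition that is not among the lemma's hypotheses and is verified only later, inside the proof of Proposition~\ref{proposition:blackburn_optimalsize}, where the lemma is invoked. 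Your exact factorization $f(j)-f(j+1)=\binom{n-2}{j-2}A^{n-j+1}\bigl(\frac{n-m-1}{n-j+1}-\frac{1}{j-1}\bigr)$ makes the $(n-q-m)$-term cancel identically, so the lemma is proved under its stated hypotheses with no sign analysis at all. You also isolate $j=2$, where both the product identity and the coefficient bound (which would require $n-m-1\geq n-1$) fail, and compute $f(2)-f(3)=(n-m-2)A^{n-1}\geq 0$ directly; the paper never addresses this case and indeed only applies the lemma for $j\geq 3$. One caveat to record explicitly: your $j=2$ step needs $n\geq 3$ — for $n=2$, $m=1$, $q\geq 2$ the claimed inequality is genuinely false — so that restriction, implicit in the paper as well, should be added to the statement or noted in the proof.
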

\begin{proof}
    The coefficient in front of $\left(\frac{(n-1)q-m}{n}\right)^{n-j}$ in $f(j)$ is
    $\left(\frac{(n-m-1)((n-1)q - m)}{n} + \frac{(n-q-m)(n-1)(n-j)}{nj}\right)\prod_{i=2}^{j-2}\frac{n-i}{i}$.
    Because $m \leq \frac{n}{2}$, it holds that $n - m -1 \geq \frac{n - 2}{2} \geq \frac{n - j}{j}$,
    so the coefficient in front of $\left(\frac{(n-1)q-m}{n}\right)^{n-j}$ in $f(j)$ is greater or equal to
    $\left(\frac{(n-1)q - m}{n} + \frac{(n-q-m)(n-1)}{n}\right)\prod_{i=2}^{j-1}\frac{n-i}{i}$.
    We now compute the sum
    \[\frac{(n-1)q - m}{n} + \frac{(n-q-m)(n-1)}{n} =  n - m - 1,\]
    and see that the statement follows.
\end{proof}

\begin{proposition}\label{proposition:blackburn_optimalsize}
Let $C$ be the largest $q$-ary non-overlapping code of length $n$ obtained by Construction~\ref{construction:blackburn} with $k = n - 1$.
Then the value of the parameter $l$ equals:
\begin{align*}
    l = \begin{cases}
        \frac{(n-1)q}{n} & \text{if $n$ divides $q$} \\
        \left\lceil{\frac{(n-1)q}{n}}\right\rceil & \text{if } (n-1)q  \mod{n} =  n - 1  \\
        \left\lfloor{\frac{(n-1)q}{n}}\right\rfloor & \text{if } (n - 1)q \mod{n} \leq \frac{n}{2}.
    \end{cases}
\end{align*}
\end{proposition}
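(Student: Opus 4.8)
The plan is to reduce the statement to a one–variable integer optimization and then resolve a single two–point comparison. For a fixed partition with $|I| = l$, Construction~\ref{construction:blackburn} with $k = n-1$ places the subword constraint on the segment $c_{k+2}\cdots c_{n-1}$, which is empty; hence that constraint is vacuous and the largest code is obtained by taking $S = I^{n-1}$. By the size formula cited before the statement, its cardinality is then $g(l) \coloneqq l^{n-1}(q-l)$. So the whole problem is to find the integer $l \in \{1,\dots,q-1\}$ maximizing $g$, and I would first record that $g$ is monotone in $S$ so that no smaller $S$ can do better.

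Next I would treat $g$ as a smooth function on $(0,q)$. Differentiating gives $g'(l) = l^{n-2}\bigl((n-1)q - nl\bigr)$, so $g$ has a unique interior critical point at $l^* = \tfrac{(n-1)q}{n}$, is increasing on $(0,l^*)$ and decreasing on $(l^*,q)$. Unimodality forces the integer maximizer to be either $\lfloor l^*\rfloor$ or $\lceil l^*\rceil$. If $n \mid q$, then $l^*$ is itself an integer equal to $\tfrac{(n-1)q}{n}$, which settles the first case immediately.

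When $n \nmid q$ I would write $q = mn + r$ with $1 \le r \le n-1$, so that $\lfloor l^*\rfloor = q-m-1$ and $\lceil l^*\rceil = q-m$; since $(n-1)q \equiv -r \equiv n-r \pmod n$, the residue $(n-1)q \bmod n = n-r$ exactly records the fractional part $\tfrac{n-r}{n}$ of $l^*$. The remaining task is to sign the single difference $g(q-m)-g(q-m-1) = (q-m)^{n-1}m - (q-m-1)^{n-1}(m+1)$. I would expand the two powers by the binomial theorem, producing precisely the sums $\sum_i \binom{n-1}{i}(\cdot)^i$ that appear in Lemma~\ref{lemma:nhalves}, and then invoke the monotonicity $f(j) \ge f(j+1)$ to telescope the resulting expression down to a boundary term whose sign is controlled by $r$. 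For $r = 1$ the fractional part $\tfrac{n-1}{n}$ is near $1$ and the boundary term comes out positive, forcing the maximizer to be $\lceil l^*\rceil$ (the second case); for $r \ge \tfrac n2$ the fractional part is at most $\tfrac12$ and the term is nonpositive, forcing $\lfloor l^*\rfloor$ (the third case).

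The crux, and the main obstacle, is exactly this two–point comparison: because $g$ is asymmetric about $l^*$, one cannot simply round $l^*$ to the nearest integer in general, and the sign of $g(q-m)-g(q-m-1)$ depends genuinely on $r$ and $n$ rather than merely on whether the fractional part exceeds $\tfrac12$. Lemma~\ref{lemma:nhalves}, whose hypothesis $m \le \tfrac n2$ matches the regime $n-r \le \tfrac n2$ of the third case, is precisely the device that converts the unwieldy binomial difference into a monotone telescoping sum and thereby pins the sign down. I expect the residue classes $2 \le r < \tfrac n2$ to resist such a clean treatment, since there the boundary term is not uniformly signed by the Lemma, which is why those cases are left out of the statement.
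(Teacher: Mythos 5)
Your reduction and your treatment of the first and third cases follow the paper's proof: maximize $f(l)=(q-l)l^{n-1}$ over integers, use $f'(l)=l^{n-2}\bigl((n-1)q-nl\bigr)$ to get unimodality so the optimum is at $\bigl\lfloor\tfrac{(n-1)q}{n}\bigr\rfloor$ or $\bigl\lceil\tfrac{(n-1)q}{n}\bigr\rceil$; then, writing $F$ and $C$ for the values at the floor and at the ceiling, expand $F-C$ by the binomial theorem and, when the residue $m=(n-1)q\bmod n$ satisfies $m\le\tfrac n2$, telescope with Lemma~\ref{lemma:nhalves} to obtain $F-C\ge(n-m-1)\prod_{i=2}^{n-1}\tfrac{n-i}{i}>0$. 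Up to notation (your $r$ is $n-m$), that part of your plan is exactly the paper's argument.

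The genuine gap is your second case, $r=1$, i.e.\ residue $m=n-1$. You claim the same telescoping produces a positive ``boundary term'' that forces the ceiling; this step fails twice over. First, Lemma~\ref{lemma:nhalves} assumes $m\le\tfrac n2$ (as you yourself note, its hypothesis matches only the third case), and its conclusion $f(j)\ge f(j+1)$ is actually \emph{false} when $m=n-1$: the leading term of each $f(j)$ carries the factor $n-m-1=0$, leaving only the negative sums, which increase rather than decrease (e.g.\ $n=4$, $q=5$ gives $f(2)=-37<f(3)=-10<f(4)=-1$). Second, even where the lemma does apply, telescoping yields a \emph{lower} bound on $F-C$; a positive bound of that kind can only certify that the floor wins, never the ceiling. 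To prove the ceiling wins one needs an upper bound showing $F-C<0$, which no chain of inequalities $f(j)\ge f(j+1)$ can supply. The paper closes this case by a short separate computation that your proposal lacks: for $m=n-1$ the floor equals $A=\tfrac{(n-1)(q-1)}{n}$, the coefficient $\tfrac{n-q-m}{n}$ becomes $-\tfrac{q-1}{n}$, and
\begin{align*}
F-C \;=\; A^{n-1}-\tfrac{q-1}{n}\sum_{i=0}^{n-2}\tbinom{n-1}{i}A^i
\;=\; -\tfrac{q-1}{n}\sum_{i=0}^{n-3}\tbinom{n-1}{i}A^i \;<\;0,
\end{align*}
because the $i=n-2$ term $\tfrac{(q-1)(n-1)}{n}A^{n-2}$ cancels $A^{n-1}$ exactly. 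A smaller omission: the paper also verifies that $\tfrac{n-q-m}{n}\le 0$ and disposes of the degenerate situation $\bigl\lceil\tfrac{(n-1)q}{n}\bigr\rceil=q$, both of which your sketch assumes silently.
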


\begin{remark}
    Note that we skip the case $\frac{n}{2} < (n - 1)q \mod{n} < n - 1 $, as we do not need it throughout our paper.
    Although one would wish that in that case $l = \left\lceil{\frac{(n-1)q}{n}}\right\rceil$, this does not always hold.
    If we take for example $n=11$ and $q=16$, then the code with parameter $l=14$ contains 576 650 390 625
    codewords, but the code with parameter $l=15$ contains 578 509 309 952 codewords.
\end{remark}

\begin{proof}
    The largest $q$-ary non-overlapping code of length $n$ obtained by Construction~\ref{construction:blackburn} with $k = n - 1$
    has size $\max_{l \in \{1,\dots, q-1\}} f(l),$ where $f(l) = (q - l) l^{n-1}$.
    First we observe the first derivative of $f$, $f'(l) = l^{n-2}((n-1)q - nl)$ and notice that $f(l)$ is strictly
    increasing on the interval $\left[1, \frac{(n-1)q}{n}\right)$ and strictly decreasing on the interval $\left(\frac{(n-1)q}{n}, q-1\right]$.
    If $n$ divides $q$, the maximum $\max_{l \in \{1,\dots, q-1\}} f(l)$ is achieved at $l=\frac{(n-1)q}{n}$.

    Now, let us suppose $q$ is not a multiplier of $n$.
    The maximum $\max_{l \in \{1,\dots, q-1\}} f(l)$ is achieved either at $l = \left\lfloor{\frac{(n-1)q}{n}}\right\rfloor$ or $l = \left\lceil{\frac{(n-1)q}{n}}\right\rceil = \left\lfloor{\frac{(n-1)q}{n}}\right\rfloor + 1$.
    We denote the values of $f$ in these two points with $F = f\left(\left\lfloor{\frac{(n-1)q}{n}}\right\rfloor\right)$ and $C = f\left(\left\lceil{\frac{(n-1)q}{n}}\right\rceil\right) = f\left(\left\lfloor{\frac{(n-1)q}{n}}\right\rfloor + 1\right)$.
    Using the binomial theorem we obtain
    \begin{align*}
        F - C = \sum_{i=0}^{n-1}\binom{n-1}{i}\left\lfloor{\frac{(n-1)q}{n}}\right\rfloor^i
        - \left(q- \left\lfloor{\frac{(n-1)q}{n}}\right\rfloor\right)\sum_{i=0}^{n-2}\binom{n-1}{i}\left\lfloor{\frac{(n-1)q}{n}}\right\rfloor^i.
    \end{align*}
    Now let us introduce $m$, such that $(n-1)q \equiv m \pmod{n}$, to rewrite the difference
    \begin{align*}
        F - C =& \left(\frac{(n-1)q - m}{n}\right)^{n-1}
        + \frac{n - q- m}{n} \sum_{i=0}^{n-2}\binom{n-1}{i} \left(\frac{(n-1)q - m}{n}\right)^i \\
        =& (n-k-1)\left(\frac{(n-1)q - m}{n}\right)^{n-2}
        +\frac{n - q -m}{n} \sum_{i=0}^{n-3} \binom{n-1}{i} \left(\frac{(n-1)q - m}{n}\right)^i.
    \end{align*}

    First let us explain that the coefficient $\frac{n - q- m}{n}$ is non-positive.
    Suppose for contradiction that $q < n - m$.
    Then $\frac{(n-1)q}{n} < n - m + 1 + \frac{m}{n}$.
    Since $(n-1)q \equiv m \pmod{n}$, it follows $\frac{(n-1)q}{n} \leq n - m + 1 + \frac{m}{n} - n = \frac{m}{n} - m - 1 < 0$
    but $\frac{(n-1)q}{n} < 0$ cannot hold for $n > 1$ and $q > 1$.
    We also see that $\frac{n - q- m}{n} = 0$ if and only if $\left\lceil{\frac{(n-1)q}{n}}\right\rceil = q$.
    In this case, $F - C > 0$ and therefore $\max_{i \in \{1,\dots, q-1\}} f(i) = F$.
    We immediately see that for $m = n-1$, $F < C$ since $q > 1$ always holds.
    To get a lower bound on $F - C$ when $m \leq \frac{n}{2}$ we apply lemma~\ref{lemma:nhalves} for $j=3, \dots, n$ consecutively.
    We obtain $F - C \geq (n - m -1) \prod_{i=2}^{n-1}\frac{n-i}{i} > 0$.
\end{proof}

Now we can compute the sizes of maximum four-letter non-overlapping codes.
The proof of Theorem~\ref{theorem:fourletter} reveals that they can indeed be generated using Construction~\ref{construction:blackburn}.

\begin{theorem}\label{theorem:fourletter}
    \[S(q,4) = \left[{\frac{3q}{4}}\right]^3 \left(q - \left[{\frac{3q}{4}}\right]\right),\]
    where $\left[x\right]$  is the rounding of $x$ to the closest natural number
    that rounds half down.
\end{theorem}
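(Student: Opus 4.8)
The plan is to sandwich $S(q,4)$ between the size of Blackburn's best length-$4$ code and the optimal value of the integer program SQN$(q,4)$, and to check the two agree. For the lower bound, Construction~\ref{construction:blackburn} with $k=n-1=3$ and $|I|=l$ yields a non-overlapping code of size $l^{3}(q-l)$, and Proposition~\ref{proposition:blackburn_optimalsize} (with $n=4$) pins the maximizing $l$ to $[3q/4]$: its three cases are exactly the residues of $q$ modulo $4$ under which $3q/4$ rounds half-down to $\lfloor 3q/4\rfloor$, to $\lceil 3q/4\rceil$, or is already an integer. Hence $S(q,4)\ge [3q/4]^{3}\bigl(q-[3q/4]\bigr)$.

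For the reverse inequality I would solve SQN$(q,4)$, whose objective is $x_1y_3+x_2y_2+x_3y_1$. By Theorem~\ref{theorem:maximum} an optimal solution may be taken with $x_3y_3=0$, and by Proposition~\ref{proposition:symmetry} I may assume $x_3=0$, whence the constraints force $y_3=x_1y_2+x_2y_1$. Setting $a:=x_1$, $b:=q-a$ and $s:=x_2$ (so $y_2=ab-s$ and $0\le s\le ab$), the objective collapses after a short substitution to
\begin{equation*}
\Phi(a,s)=a^{3}(q-a)+as(2q-3a)-s^{2},
\end{equation*}
a concave parabola in $s$. I would then split on the sign of $2q-3a$.

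If $a\ge 2q/3$ then $2q-3a\le 0$, so both remaining terms are non-positive and $\Phi(a,s)\le a^{3}(q-a)\le\max_{1\le l\le q-1}(q-l)l^{3}=[3q/4]^{3}\bigl(q-[3q/4]\bigr)$ by Proposition~\ref{proposition:blackburn_optimalsize}; this already matches the target. If $a<2q/3$, completing the square gives $\Phi(a,s)\le a^{3}(q-a)+\tfrac14 a^{2}(2q-3a)^{2}=:U(a)$, and a one-line derivative check shows $U$ is strictly increasing in $a$, with $U(a)\uparrow \tfrac{8}{81}q^{4}$ as $a\to 2q/3$. It then suffices to establish $\tfrac{8}{81}q^{4}\le[3q/4]^{3}\bigl(q-[3q/4]\bigr)$.

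I expect this final comparison to be the main obstacle, precisely because it is tight: at $q=6$ both sides equal $128$, so there is no room to be wasteful. I would verify it by writing $[3q/4]=(3q+\varepsilon)/4$ with $\varepsilon\in\{-2,-1,0,1\}$ according to $q\bmod 4$, reducing each class to a polynomial inequality in $q$ such as $81(3q-2)^{3}(q+2)\ge 2048\,q^{4}$ for $q\equiv 2\pmod 4$, and confirming it from some small $q$ onward. The very smallest alphabets need individual attention: for $q=2$ the crude bound $\tfrac{8}{81}q^{4}\approx 1.58$ exceeds the true value $1$, so there one must retain the integrality of $s$ rather than pass to the continuous relaxation $U(a)$. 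Once the comparison is secured, the maximum of $\Phi$ is attained only when $a\ge 2q/3$, $s=0$ and $a=[3q/4]$, which certifies $S(q,4)=[3q/4]^{3}\bigl(q-[3q/4]\bigr)$ and shows the optimum is realised by Blackburn's construction.
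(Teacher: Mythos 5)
Your proposal follows essentially the same route as the paper's proof: both reduce SQN$(q,4)$ via Theorem~\ref{theorem:maximum} and symmetry to a quadratic in $|L_2|$, split according to whether the boundary (Blackburn) point or the interior vertex maximizes it, bound the interior case by $\tfrac{8}{81}q^4$, and then verify $\tfrac{8}{81}q^4 \le \left[\tfrac{3q}{4}\right]^3\left(q-\left[\tfrac{3q}{4}\right]\right)$ through the same residue-class polynomial inequalities tied to Proposition~\ref{proposition:blackburn_optimalsize}, with the smallest alphabets ($q=2$, where the continuous relaxation fails, and $q=3$) treated by integrality. The plan is sound; your monotonicity argument for $U(a)$ is merely a slightly cleaner packaging of the paper's bound for the interior case.
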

\begin{proof}
    Let $C$ be a maximum code that satisfies property~\eqref{eq:property}.
    Without loss of generality we can assume $\lvert R_1\rvert  \geq \lvert L_1\rvert $ (otherwise we can switch L's and R's and obtain a code of the same size),
    meaning that $R_3 = \emptyset $ and $\lvert L_3\rvert  = \lvert L_1\rvert  \lvert R_2\rvert  + \lvert L_2\rvert  \lvert R_1\rvert $.
    We also know that $\lvert R_2\rvert  = \lvert L_1\rvert  \lvert R_1\rvert  - \lvert L_2\rvert $,
    so the size of $C$ equals
    \begin{align*}
        \lvert C\rvert  =& \lvert L_1\rvert  \lvert R_3\rvert  + \lvert L_2\rvert  \lvert R_2\rvert  + \lvert L_3\rvert  \lvert R_1\rvert  \\
        =& \lvert L_2\rvert  \lvert R_2\rvert   + \lvert L_1\rvert  \lvert R_2\rvert  \lvert R_1\rvert  + \lvert L_2\rvert  \lvert R_1\rvert ^2  \\
        =& - \lvert L_2\rvert ^ 2 + \lvert L_2\rvert \lvert  R_1 \rvert ^2 + \lvert L_1\rvert  ^2 \lvert R_1\rvert ^2.
    \end{align*}

    Using the derivative test, we know that for fixed sets $L_1$ and $R_1$ with $\lvert R_1\rvert  \leq 2\lvert L_1\rvert $ the above function achieves its maximum
    when $\lvert L_2\rvert  = \frac{\lvert R_1\rvert  ^2}{2}$.
    A feasible partition $(L_1, R_1)$ with $\lvert R_1\rvert \leq 2\lvert L_1\rvert$ exists for every $q \geq 2$ (take $\lvert R_1\rvert = \lceil \frac{q}{2} \rceil$), so $\lvert C \rvert  \leq \frac{\lvert R_1\rvert ^4}{4} + \lvert R_1\rvert ^2 \lvert L_1\rvert ^2 \leq 8 \lvert L_1\rvert ^4 \leq \frac{8q^4}{3^4}$.
    If $\lvert R_1\rvert  > 2 \lvert L_1\rvert $, the maximum code size is achieved when $\lvert L_2\rvert  = \lvert L_1\rvert  \lvert R_1\rvert $.
    Then $C$ is constructed using Construction~\ref{construction:blackburn} with $k = 3$.
    A feasible partition $(L_1, R_1)$ with $\lvert R_1\rvert > 2\lvert L_1\rvert$ exists for every $q > 3$ (take $\lvert L_1 \rvert = 1$). 

    If $q = 2$ then all feasible partitions $(L_1,R_1)$ satisfy $\lvert L_1 \rvert = \lvert R_1 \rvert = 1$. Therefore $\lvert L_2 \rvert \in \{0,1\}$ and $\lvert L_2 \rvert^2 = \lvert L_2\rvert$, so
    \begin{align*}
        S(2,4) &= - \lvert L_2 \rvert^2 + \lvert L_2\rvert + 1 
        = 1 \\
        &= 1^3 \left(2-1\right) \\
        &= \left[\frac{3\cdot 2}{4}\right]^3\left(2 - \left[\frac{3\cdot 2}{4}\right]\right).
    \end{align*}
    If $q = 3$ then for all feasible partitions $(L_1,R_1)$ such that $\lvert R_1 \rvert \geq \lvert L_1 \rvert$, $\lvert L_1 \rvert = 1$ and $\lvert R_1 \rvert = 2$ holds. Therefore $\lvert R_1 \rvert = 2 \lvert L_1 \rvert$.    
    A feasible partition $L_2 = \left(L_1R_1\right)$ with $\lvert L_2 \rvert = \frac{\lvert R_1\rvert^2}{2}$ exists, so
    \begin{align*}
        S(3,4) &= 8 \lvert L_1 \rvert = 8 \\
        &= 2^3 \left(3 - 2 \right) \\
        &= \left[\frac{3\cdot 3}{4}\right]^3\left(3 - \left[\frac{3\cdot 3}{4}\right]\right).
    \end{align*}

    We are left to prove that $\frac{8q^4}{3^4} \leq \max_{i \in \{1, \dots, q-1\}} (q - i) i^3$ for $q \geq 4$.
    We use Proposition~\ref{proposition:blackburn_optimalsize} to divide the problem into four subproblems.\\
        (i) If $4 \mid q$, we already know that Blackburn's construction yields the optimum.
        Still we can check that $\frac{8q^4}{3^4} < \frac{27q^4}{4^4}$ or equivalently $2^{11} = 2048 < 2187 = 3^7$. \\
        (ii) If $ 3q \equiv 1 \pmod{4}$, we have to check that $\frac{\left(q + 1\right)\left(3q-1\right)^3}{4^4} > \frac{8q^4}{3^4}$
        or equivalently $\frac{139}{2048}q^4 + \frac{81}{2048}\left(-18q^2 + 8q - 1\right) > 0$.
        The two real roots of this polynomial are $q \approx - 3.4482$ and $q = 3$.
        A strong inequality therefore holds for $q \geq 4$. \\
        (iii) If $ 3q \equiv 2 \pmod{4}$, we have to check that $\frac{\left(q + 2\right)\left(3q-2\right)^3}{4^4} > \frac{8q^4}{3^4}$
        or equivalently $\frac{139}{2048}q^4 + \frac{81}{2048}\left(-18q^2 + 32q - 4\right) > 0$.
        The two real roots of this polynomial are $q \approx -3.9235$ and $q \approx 0.13528$, so the inequality holds.\\
        (iv) If $ 3q \equiv 3 \pmod{4}$,  $\frac{\left(q - 1\right)\left(3q+1\right)^3}{4^4} > \frac{8q^4}{3^4}$
        or equivalently $\frac{139}{2048}q^4 - \frac{81}{2048}\left(18q^2 + 8q + 1\right) > 0$.
        The two real roots of this polynomial are $q = -3$ and $q \approx 3.4482$, so the inequality holds for $q \geq 4$.
\end{proof}

\begin{theorem}\label{theorem:fourletter_n}
    \[N(q,4) = \begin{cases} 2 \binom{q}{\left[{\frac{3q}{4}}\right]} & q \geq 3 \\ 6 & q = 2. \end{cases}\]
\end{theorem}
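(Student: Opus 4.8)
The plan is to split into $q\ge 3$ and $q=2$, since Corollary~\ref{corollary:q>2} (each maximal code corresponds to a unique partition tuple) is available only when $q\ge 3$ or $n$ is odd. For $q\ge 3$ the strategy is to enumerate \emph{all} optimal solutions $(x,y)$ of SQN$(q,4)$ and then read $N(q,4)$ off directly from the counting formula of Proposition~\ref{proposition:nqn}; because that formula already accounts for the partition-to-code correspondence, no separate de-duplication of codes is needed. Throughout I write $t=\left[\tfrac{3q}{4}\right]$.

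First I would reuse the size computation from the proof of Theorem~\ref{theorem:fourletter}. Setting $a=x_1$, $b=y_1$, $c=x_2$, $e=x_3$ and using the SQN constraints, the objective rearranges to
\[
\lvert C\rvert = a^3b-a^2c+2abc-c^2+e\,(b-a).
\]
The first step is to show that at any optimum $a\neq b$: the equality $a=b$ forces $q$ even with $a=b=\tfrac q2$, which lies in the regime $\lvert R_1\rvert\le 2\lvert L_1\rvert$ whose value is at most $\tfrac{8q^4}{81}$, and the proof of Theorem~\ref{theorem:fourletter} gives $\tfrac{8q^4}{81}<S(q,4)$ for $q\ge 4$, while $a=b$ is impossible for odd $q$, in particular $q=3$. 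Once $a\neq b$, the coefficient $b-a$ of $e$ is nonzero, so $e$ is forced to an extreme feasible value; this is exactly property~\eqref{eq:property}, giving $x_3y_3=0$ (with $R_3=\emptyset$ when $b>a$ and $L_3=\emptyset$ when $a<b$). This sharpens Theorem~\ref{theorem:maximum}: for $q\ge 3$ every maximum code satisfies the property, with no freedom left in $e$.

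On the branch $b>a$ the objective collapses to $-c^2+cb^2+a^2b^2$, which is strictly concave in $c$ on $[0,ab]$; when $b\ge 2a$ its maximum is attained uniquely at $c=ab$ (so $L_2=(L_1R_1)$, $R_2=\emptyset$) with value $(q-b)b^3$, whereas for $a<b<2a$ the value stays below $\tfrac{8q^4}{81}<S(q,4)$. Hence the optimum forces $c=ab$, and $b$ is the unique maximiser of $(q-b)b^3$, namely $b=t$, by Proposition~\ref{proposition:blackburn_optimalsize} with $n=4$ (whose proof yields a strict inequality $F\neq C$, hence uniqueness; the boundary $q=3$ lands on $b=2a$ where $c=ab$ is still the unique maximiser). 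Together with the symmetric branch $a>b$ supplied by Proposition~\ref{proposition:symmetry}, this produces exactly two optimal size-tuples, distinct already in the value of $x_2$. Feeding each into Proposition~\ref{proposition:nqn}, every binomial after $\binom{q}{x_1}=\binom{q}{t}$ is of the form $\binom{N}{N}$ or $\binom{N}{0}$ and equals $1$, so each tuple contributes $\binom{q}{t}$ and $N(q,4)=2\binom{q}{t}$ for all $q\ge 3$.

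For $q=2$ the formula degenerates and is handled directly: since $S(2,4)=1$, every maximum code is a singleton $\{w\}$ with $w$ a self-non-overlapping, i.e.\ unbordered, word of length $4$. Counting binary words of length $4$ with $w_1\neq w_4$ and not $(w_1=w_3\ \wedge\ w_2=w_4)$ gives $8-2=6$, so $N(2,4)=6$. I expect the main obstacle to be the $q\ge 3$ enumeration of all optimal solutions: the delicate points are proving that $a=b$ and the interior-$c$ regime are strictly suboptimal (so that precisely the two Blackburn tuples survive) and establishing uniqueness of the maximiser $b=\left[\tfrac{3q}{4}\right]$, including the tie-breaking at half-integers that the rounding convention encodes.
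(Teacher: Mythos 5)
Your proof is correct and follows essentially the same route as the paper: identify the two optimal SQN$(q,4)$ size-tuples (the branch $\lvert R_1\rvert>\lvert L_1\rvert$ with $L_2=(L_1R_1)$, $R_2=R_3=\emptyset$, and its mirror under Proposition~\ref{proposition:symmetry}), observe that the later partitions are then forced, and count $\binom{q}{\left[3q/4\right]}$ choices of $(L_1,R_1)$ per branch. The differences are cosmetic rather than structural: the paper obtains the uniqueness of the optimal tuples by citing its proof of Theorem~\ref{theorem:fourletter} and counts codes directly, whereas you route the count through Proposition~\ref{proposition:nqn} and spell out the strictness arguments that the paper leaves implicit (no optimum with $x_1=y_1$, strict suboptimality of the regime $a<b<2a$, linearity in $e$ forcing property~\eqref{eq:property} for \emph{all} optima, and uniqueness of $b=\left[3q/4\right]$ via Proposition~\ref{proposition:blackburn_optimalsize}) --- details that are genuinely needed and worth having on record; for $q=2$ you count unbordered binary words of length $4$ directly ($8-2=6$) instead of enumerating the eight partition collections and identifying the two duplicated codes, which is equivalent and sidesteps the non-injectivity of the partition-to-code map. (One typo: in your branch statement ``$L_3=\emptyset$ when $a<b$'' should read $a>b$.)
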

\begin{proof}
    The proof of the Theorem~\ref{theorem:fourletter} showed that for $q \geq 3$ all maximum non-overlapping codes with $\lvert R_1 \rvert \geq \lvert L_1 \rvert$ are given by partitions $\left(L_1, R_1 \right)$ with $\lvert L_1 \rvert = \left[\frac{3q}{4}\right]$.
    There are $\binom{q}{\left[{\frac{3q}{4}}\right]}$ such partitions.
    Since $\lvert R_1 \rvert > \lvert L_1 \rvert$, all maximum non-overlapping codes satisfy Property~\ref{eq:property}.
    For each partition $(L_1,R_1)$ partitions $(L_2,R_2), (L_3,R_3)$ are uniquely determined.
    From Proposition~\ref{proposition:symmetry} we know that there are also $\binom{q}{\left[{\frac{3q}{4}}\right]}$ maximum non-overlapping codes with $\lvert R_1 \rvert < \lvert L_1 \rvert$.

    If $q = 2$ then every four-letter non-overlapping code is maximum as we know from Theorem~\ref{theorem:fourletter} that $S(2,4) = 1$. Hence $\lvert L_1 \rvert = \lvert R_1 \rvert = 1$ and $\lvert L_i \rvert + \lvert R_i \rvert = 1$ for $i \in \{2,3\}$. Although there are 8 choices for the selection of partitions, we obtain only 6 distinct codes $\{0111\}$, $\{0001\}$, $\{0011\}$, $\{1000\}$, $\{1110\}$, and $\{1100\}$. Each of the codes $\{0011\}$ and $\{1100\}$ is given by two collections of partitions.
\end{proof}

Unfortunately Construction~\ref{construction:blackburn} fails to generate a maximum five-letter non-overlapping code, as we will see later.

\section{Comparison to other constructions}\label{results}
At the moment we are not able to provide any simple formula to compute the size of a maximum
non-overlapping code with a larger codeword length.
However, we solved SQN for $5 \leq n \leq 30$ when $q = 2$, and $5 \leq n \leq 16$ when $3 \leq q \leq 6$ using a search algorithm that incorporates the observations from previous sections (see Appendix~\ref{appendix:algorithm}).
For each pair of parameter values $q$ and $n$, a single optimal solution of SQN was found up to the transformations of Propositions~\ref{proposition:symmetry}, \ref{proposition:evenq} and Corollary \ref{cor:1}.
We computed the full set of optimal solutions of SQN by applying those propositions, and $N(q,n)$ from Proposition~\ref{proposition:nqn}.
The results are presented in Tables~\ref{tab:max_sqn} and \ref{tab:max_binary_sqn}.

We checked that for $q=2, n \leq 16$ the computed values of $S(q,n)$ agree with the values obtained by Chee et al~\cite{Chee:2013}, and for $n > 16$ the determined values are not smaller than the largest code obtained by Constructions ~\ref{construction:levenshtein} and \ref{construction:bilotta}. For $16 < n \leq 30$ the gap between $S(q,n)$ and the size of the largest code given by these two constructions is given in Table~\ref{tab:max_binary_sqn}. Note that for this set of parameters Construction~\ref{construction:levenshtein} always yields larger codes than Construction~\ref{construction:bilotta}~\cite{Chee:2013}.

\begin{table}[h!tb]
    \begin{center}
	\begin{threeparttable}
        \begin{minipage}{\textwidth}
            \caption{$S(q,n)$ and $N(q,n)$ for $2 \leq q \leq 6$ and $3 \leq n \leq 16$.}
            \label{tab:max_sqn}
\begin{tabular}{ >{\raggedleft}p{1em}  >{\raggedleft}p{2em} >{\raggedleft}p{4em} >{\raggedleft}p{4em}>{\raggedleft} p{4em}>{\raggedleft} p{7em}>{\raggedleft\arraybackslash} p{7.8em}}
    \hline
    & & \multicolumn{5}{c}{$q$} \\\cline{3-7}
    $n$  &  &  2 &  3 & 4 & 5 & 6\\\hline
    3 & $S(q,n)$ & 1 & 4 & 9 & 18 & 32 \\
      & $N(q,n)$ &  4 & 6 &8 & 20 & 30\\\hline
    4 & $S(q,n)$ & 1 & 8 & 27 & 64 & 128 \\
      & $N(q,n)$ &  6 & 6 & 8 & 10 & 30\\\hline
    5 & $S(q,n)$ & 2 & 17 & 81 & 256 & 625 \\
    & $N(q,n)$ &  8 & 12& 8  & 10 & 12 \\\hline
    6 & $S(q,n)$ & 3 & 41 & 251 & 1024 & 3125 \\
    & $N(q,n)$ &  16 & 12& 24 & 10 & 12 \\\hline
    7 & $S(q,n)$ & 5 & 99 & 829 & 4181 & 15625\\
    & $N(q,n)$ &  48 & 12& 24 & 40 & 12 \\\hline
    8 & $S(q,n)$& 8 & 247 & 2753 & 17711 & 79244\\
    & $N(q,n)$ &  288 & 36 & 24 & 40 & 60\\\hline
    9 & $S(q,n)$& 14 & 656 & 9805 & 76816 & 411481\\
    & $N(q,n)$ & 1132 & 6& 24  & 60 & 60\\\hline
    10 & $S(q,n)$& 24 & 1792 & 34921 & 341792 & 2188243 \\
     & $N(q,n)$ & $2^{15}$  & 6& 24  & 60 & 120\\\hline
    11 & $S(q,n)$ & 44 & 4896 & 124373 & 1520800 & 11755857\\
     & $N(q,n)$ & $2^{26}$ & 6& 24  & 60 & 120\\\hline
    12 & $S(q,n)$ & 81 & 13376 & 446496 & 6817031 & 63281718 \\
     & $N(q,n)$ & $2^{46}$ & 6& 120  & 360 & 2040 \\\hline
    13 & $S(q,n)$ & 149 & 36544 & 1619604 & 31438129 & 350255809 \\
     & $N(q,n)$ & $2^{83}$ & 6& 120 & 40 & 120 \\\hline
    14 & $S(q,n)$ & 274 & 99840 & 5941181 & 146053729 & 19404900978 \\
     & $N(q,n)$ & $2^{151}$ & 6 & 240 & 40 & 120 \\\hline
    15 & $S(q,n)$ & 504 & 274384 & 21917583 & 678529303 & * $\geq 10755272317$ \\
     & $N(q,n)$ & $2^{276}$ & 24 & 240 & 40 & * \\\hline
    16 & $S(q,n)$& 927 & 759847 & 82990089 & * $\geq 3152278399$ & * $\geq 59811113295$ \\
     & $N(q,n)$ & $2^{506}$ & 24 & 8 & * & * \\
    \hline
    \end{tabular}
    \begin{tablenotes}\footnotesize
            \item[*]{The (exact) value was not determined due to the long expected runtime of the algorithm.}
    \end{tablenotes}
        \end{minipage}
        \end{threeparttable}
        \end{center}
\end{table}

\begin{table}[htb]
    \begin{center}
    	\begin{threeparttable}
        \begin{minipage}{\textwidth}
            \caption{$S(2,n)$ and $N(2,n)$ for $17 \leq n \leq 30$. The last column equals the difference between $S(2,n)$ and the largest non-overlapping code given by Construction~\ref{construction:levenshtein}.}
            \label{tab:max_binary_sqn}        
\begin{tabular}{ >{\raggedleft}p{2em}  >{\raggedleft}p{.25\textwidth} >{\raggedleft}p{.25\textwidth} > {\raggedleft\arraybackslash} p{.25\textwidth}}
    \hline
    $n$  & $S(2,n)$ & $N(2,n)$ & gap\\\hline
    17 & 1705 & $2^{930}$ & 0\\
    18 & 3160 & $2^{1677}$ & 24 \\
    19 & 5969 & $2^{3163}$ & 201\\
    20 & 11272 & $2^{5974}$ & 601\\
    21 & 21287 & $2^{11277}$ & 718\\
    22 & 40202 & $2^{21292}$ & 554\\
    23 & 76424 & $2^{39653}$ & 0\\
    24 & 147312 & $2^{76429}$ & 0\\
    25 & 283953 & $2^{147317}$ & 0\\
    26 & 547337 & $2^{283958}$  & 0\\
    27 & 1055026 & $2^{547342}$ & 0\\
    28 & 2033628 & $2^{1055031}$ & 0\\
    29 & 3919944 & $2^{2033633}$ & 0\\
    30 & *$\geq$ 7555935 & * & $\geq$ 0\\
    \hline
    \end{tabular}
    \begin{tablenotes}\footnotesize
            \item[*]{The (exact) value was not determined due to the long expected runtime of the algorithm.}
    \end{tablenotes}
        \end{minipage}
        \end{threeparttable}
        \end{center}
\end{table}

For $3 \leq q \leq 6$ and $5 \leq n \leq 16$, we compared our values with the optimal values of Construction~\ref{construction:levenshtein}, Wang and Wang's~\cite{Wang:2021} modification of Construction~\ref{construction:blackburn},
Wang and Wang's~\cite{Wang:2021} generalization of Construction~\ref{construction:bilotta}, and Construction~\ref{construction:barcucci} that have been included as a supplement to~\cite{Wang:2021}.
The gap between the optimal values of these constructions and the newly determined values of $S(q,n)$ are given in Table~\ref{tab:comparison}.
Our implementation of the algorithm always returned a value for $S(q,n)$ that is larger or equal to the size of the largest code given by any of those constructions.
Whenever we got a strictly better solution for some $q$ and $n$, we also checked that none of the maximum codes can be obtained using Construction~\ref{construction:blackburn}. This can be done by observing the set of optimal solutions of SQN and Proposition~\ref{proposition:blackburntowang}.

\begin{proposition}\label{proposition:blackburntowang}
    Let  $C = \bigcup_{i=1}^{n-1} \left(L_i R_{n-i} \right)$ be a maximum non-overlapping code given by Construction~\ref{construction:blackburn}
    for some value of $k$.
    If $R_{k+1} = \emptyset$, then $S = I^k = L_1^k$ and $C$ can be obtained by Wang and Wang's modification.
\end{proposition}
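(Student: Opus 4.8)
The plan is to reconcile the two descriptions of $C$: the intrinsic decomposition $\bigcup_{i}(L_iR_{n-i})$ supplied by $C$ being maximum (Proposition~\ref{cor:basicoptproblem}, with the sets essentially determined by $C$ via Corollary~\ref{corollary:q>2} and Proposition~\ref{proposition:maximal}), and the extrinsic data $(I,J,S,k)$ of Construction~\ref{construction:blackburn}. First I would record the cheap containments. Every codeword begins with the first letter of a word of $S\subseteq I^k$ and ends in a letter of $J$ (since $c_n\in J$), so $L_1\subseteq I$ and, because $L_1\cap J=\emptyset$, also $J\subseteq R_1$. Moreover the length-$(k+1)$ prefixes of codewords are exactly $(SJ)$, as $c_1\cdots c_k\in S$ while $c_{k+1}$ ranges over all of $J$.

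The core is to pin down the lower sets. I would show by induction, once $R_1=J$ is known, that for $2\le j\le k$ we have $L_j=\emptyset$ and $R_j=I^{j-1}J$: any length-$j$ prefix with $j\le k$ lies in $I^j$ and hence ends in a letter of $I=L_1$, so by Corollary~\ref{remark:l1r1}(ii) it cannot belong to $\bigcup_{i<j}(L_iR_{j-i})$; thus $L_j=\emptyset$ and $R_j=(L_1R_{j-1})=I\cdot I^{j-2}J=I^{j-1}J$. Feeding this into level $k+1$ gives $\bigcup_{i\le k}(L_iR_{k+1-i})=(L_1R_k)=I^kJ$, whence $L_{k+1}=(SJ)$ and $R_{k+1}=((I^k\setminus S)\,J)$. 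The hypothesis $R_{k+1}=\emptyset$ then forces $I^k\setminus S=\emptyset$, i.e.\ $S=I^k$, so that $C=C_{I^k,J}(n)$ is precisely Wang and Wang's modification. As a final sanity check I would match $\lvert C\rvert$ computed from Theorem~\ref{prop:size} against their cardinality formula.

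The step I expect to be the real obstacle is establishing $L_1=I$ (equivalently $R_1=J$); without it the induction collapses, because a letter $a\in I\setminus L_1$ would sit in $R_1\cap I$ and let length-$j$ prefixes ending in $a$ populate the sets $L_j$. Here I would try to derive a contradiction from $R_{k+1}=\emptyset$: if some $s\in S$ has $s\in L_k$, then $s\,a\in(L_kR_1)\subseteq\bigcup_{i\le k}(L_iR_{k+1-i})$, so $R_{k+1}=\emptyset$ would force $sa$ to be a codeword prefix, which is impossible since its $(k+1)$-st letter $a$ lies in $I$, disjoint from $J$. The delicate part is guaranteeing a usable $s$ (i.e.\ that the relevant lower set is nonempty) and disposing of the configurations in which maximality leaves $R_1$ underdetermined; note that Theorem~\ref{theorem:characterization}(ii) constrains $R_1$ only when $L_{n-1}=\emptyset$, so maximality by itself does not give $L_1=I$ and the hypothesis $R_{k+1}=\emptyset$ must be used essentially. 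I would therefore combine the characterization of Theorem~\ref{theorem:characterization} applied to $a\in R_1$ with the level-$(k+1)$ identity to rule out $a\in I\setminus L_1$ in every remaining case.
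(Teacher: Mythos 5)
Your overall route is the same skeleton as the paper's: identify the given decomposition with the prefix-based one (via Proposition~\ref{proposition:maximal} and Corollary~\ref{corollary:q>2}), show $L_2=\dots=L_k=\emptyset$ so that $\bigcup_{i\le k}(L_iR_{k+1-i})=(L_1^kR_1)$, and then exploit $R_{k+1}=\emptyset$ together with the fact that the length-$(k+1)$ codeword prefixes are exactly $(SJ)$. The one structural difference is the order of the two key facts: the paper asserts the vanishing of $L_2,\dots,L_k$ first and only afterwards extracts $R_1\subseteq J$ (hence $L_1=I$) from $(L_1^kR_1)=L_{k+1}\subseteq(SJ)$, whereas your induction cannot even start without $L_1=I$. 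You correctly identify this as the crux --- and it is precisely the step you do not prove.

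The gap is concrete. Your contradiction argument requires a word $s\in S\cap L_k$, i.e.\ $L_k\neq\emptyset$; given a letter $a\in I\cap R_1$ it only rules out the case $L_k\neq\emptyset$. The remaining case $L_k=\emptyset$ is not a fringe case but the main one: in every instance satisfying the proposition's hypotheses one in fact has $L_k=\emptyset$ (once the conclusion holds, $Z_k=(I^{k-1}J)$ contains no word of $S$), so all the substance sits in the case you leave open. Your fallback, Theorem~\ref{theorem:characterization}(ii), is --- as you yourself note --- only applicable when $L_{n-1}=\emptyset$, so it cannot supply the missing step in general. What closes the gap is a short propagation induction that your proposal never formulates: if $a\in I\cap R_1$, then for every $2\le m\le k+1$ the set $Z_m\coloneqq\bigcup_{i<m}(L_iR_{m-i})$ contains a word ending in $a$ (base case: $xa\in(L_1R_1)$ for any $x\in L_1$; inductive step: a witness $u\in Z_m$ lies either in $R_m$, and then $xu\in(L_1R_m)\subseteq Z_{m+1}$, or in $L_m$, and then $ua\in(L_mR_1)\subseteq Z_{m+1}$, using $a\in R_1$). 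For $m=k+1$ the hypothesis $R_{k+1}=\emptyset$ makes every word of $Z_{k+1}$ a codeword prefix, hence a word of $(SJ)$ ending in a letter of $J$, contradicting $a\in I$. This yields $I\subseteq L_1$, i.e.\ $L_1=I$ and $R_1=J$, after which your induction and your final step ($R_{k+1}=((I^k\setminus S)J)=\emptyset$ forces $S=I^k$) are correct; the same propagation fact is also what rigorously justifies the paper's own first assertion, which is stated there without proof.
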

\begin{proof}
    Since $S \subseteq I^k$, $L_2=\cdots=L_k = \emptyset$ and $L_{k+1} =  \bigcup_{i=1}^{k} \left(L_i R_{k+1-i} \right) = \left(L_1 R_k \right)$.
    Similarly, $R_k = \left(L_1^{k-1} R_1 \right)$, so $L_{k+1} = \left(L_1^{k} R_1 \right)$ and $S = L_1^k$.
\end{proof}

\begin{sidewaystable}\footnotesize
    \begin{center}
        \begin{minipage}{\textwidth}
            \caption{Gap between $S(q,n)$ and the largest non-overlapping codes given by Construction~\ref{construction:levenshtein} (C1), Wang and Wang's generalization of Construction~\ref{construction:bilotta} (W2), Wang and Wang's modification of Construction~\ref{construction:blackburn} (W4), and Construction~\ref{construction:barcucci} (C5). For each parameter value $3 \leq q \leq 6$ and $5 \leq n \leq 16$ the first column states the gap between the code sizes, and the second column provides the list of constructions that yield the largest code.}
            \label{tab:comparison}
\begin{tabular}{ >{\raggedleft}p{1em} >{\raggedleft}p{6em}>{\raggedleft}p{6.5em}>{\raggedleft}p{.2em}>{\raggedleft}p{6em}>{\raggedleft}p{6em}>{\raggedleft}p{.2em}>{\raggedleft}p{6em}>{\raggedleft} p{6em}>{\raggedleft}p{.2em}>{\raggedleft} p{7em}>{\raggedleft\arraybackslash} p{6em}}
    \hline
    &  \multicolumn{11}{c}{$q$} \\\cline{2-12}
     & \multicolumn{2}{c}{3} & & \multicolumn{2}{c}{4} & & \multicolumn{2}{c}{5} & & \multicolumn{2}{c}{6}\\\cline{2-3}\cline{5-6}\cline{8-9}\cline{11-12}
    $n$ & gap & constructions && gap & constructions && gap & constructions && gap & constructions \\
    \hline
    5 & 1 & C1,W2,W4,C5 && 0 & C1,W4 && 0 & C1,W4 && 0 & C1,W4 \\
    6 & 5 & C5 && 8 &  C1,W4 && 0 & C1,W4 && 0 & C1,W4 \\
    7 & 11 & C1,W4 && 100 & C1,W4 && 85 & C1,W4 && 0 & C1,W4 \\
    8 & 7 & C1,W4 && 419 & C5 && 1327 &C1,W4 && 1119 & C1,W4 \\
    9 & 0 & C1,W4 && 1937 & C5 && 11280 & C1,W4 && 20856 & C1,W4 \\
    10 & 0 & C1,W4  && 6976  & C1,W4 && 62826 & W4 && 235118 & C1,W4 \\
    11 & 0 & C1,W4 && 18425 & C1,W4 && 300940 & W4 && 172849 & W4 \\
    12 & 0 & C1,W4 && 44817 & C1,W4 &&  1483667 & W4 && 8493622 & W4 \\
    13 & 0 & C1,W4 && 96723 & C1,W4 && 7922993 & C1,W4 && 50887361 & W4 \\
    14 & 0 & C1, W4 && 167501 & C1, W4 && 32512609 & C1, W4 && 305122418 & W4 \\
    15 & 1616 & C1, W4 && 27900 & C1, W4 && 130304279 & C1, W4 &&  $\geq$  1817210513 & W4 \\
    16 & 14631 & C1, W4 &&  0 & C1, W4 &&  $\geq$ 505213823 & C1, W4 & & $\geq$ 10972637519 & W4 \\
    \hline
    \end{tabular}
        \end{minipage}
        \end{center}
\end{sidewaystable}

In spite of the reduction of the optimization problem, finding all solutions to SQN still requires years of CPU time for some parameter values.
We therefore analysed the sets of optimal solutions to find some common properties.
The results lead to the following conjecture.

\begin{conjecture}\label{conjecture1}
    For every $q \geq 2$ and $n \geq 3$ there exists a maximum non-overlapping  code $C = \bigcup_{i < n} \left(L_i R_{n-i} \right) \in \mathcal{M}_{q,n}$ that satisfies $R_i = \emptyset$ for every $i > \frac{n}{q+1} + 1$.
\end{conjecture}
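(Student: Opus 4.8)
The plan is to work in a generating-function reformulation of SQN and to extend the boundary analysis of Theorem~\ref{theorem:maximum} (where one of $L_i,R_i$ is forced empty) from the range $i>\frac{n}{2}$ down to the range $i>\frac{n}{q+1}+1$, while simultaneously fixing the orientation so that it is always $R_i$ that vanishes. First I would encode a feasible solution $(x,y)$ by the power series $A(t)=\sum_{i\ge 1}x_it^i$ and $B(t)=\sum_{i\ge 1}y_it^i$. The SQN constraints $x_1+y_1=q$ and $x_i+y_i=\sum_{j<i}x_jy_{i-j}$ are equivalent to the single identity $A+B=qt+AB$, i.e. $(1-A)(1-B)=1-qt$. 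Writing $U\coloneqq 1-A$ and $V\coloneqq 1-B$, the objective becomes $\lvert C\rvert=\sum_{i=1}^{n-1}x_iy_{n-i}=[t^n]AB=-\bigl([t^n]U+[t^n]V\bigr)$ for $n\ge 2$, and the feasibility/partition constraints translate into $U(0)=V(0)=1$ together with all higher coefficients of $U$ and $V$ being nonpositive integers. In this language the statement to prove is that some optimal pair admits $V$ (after a possible swap $U\leftrightarrow V$, justified by Proposition~\ref{proposition:symmetry}) that is a \emph{polynomial} of degree at most $\frac{n}{q+1}+1$, since $y_i=0$ for $i>\deg V$.

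Next I would set up the exchange step. For $i>\frac{n}{2}$ it is already supplied by Theorem~\ref{theorem:maximum} and Corollary~\ref{cor:1}: the sign of the quantity $c_i$ from Definition~\ref{definition:coefficients} dictates which of $L_i,R_i$ can be emptied at no loss, and I would read $c_i$ as (the negative of) the total change of $\lvert C\rvert$ when one unit of mass is moved from $L_i$ to $R_i$. The new work is the range $\frac{n}{q+1}+1<i\le \frac{n}{2}$, where I would prove the analogous claim: if such an optimal solution has $y_i>0$, then transferring the mass of $R_i$ into $L_i$ and propagating the forced changes into all higher sets — exactly as the sets $\bar L,\bar R$ are built in the proof of Theorem~\ref{theorem:maximum} — does not decrease $\lvert C\rvert$; iterating gives $R_i=\emptyset$ for every $i>\frac{n}{q+1}+1$. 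Combined with the orientation freedom this produces a maximum code strengthening property~\eqref{eq:property} to the claimed one-sided, lower threshold.

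The two genuine difficulties are the following, and the second is the main obstacle. The clean propagation formula of Proposition~\ref{lemma:upper_half_size}, on which the sign analysis of $c_i$ rests, holds only for $i>\frac{n}{2}$, because exactly there the three families into which $\bigcup_k(L_kR_{i+j-k})$ decomposes are disjoint; for $i\le\frac{n}{2}$ a length-$i$ suffix block can itself reappear inside a longer $L$-word, so the perturbed size is no longer an affine function of $\lvert\bar L_i\rvert-\lvert L_i\rvert$, and the exchange must be controlled through the nonlinear feedback of $y_i$ on the levels $n-i>i$. I would attack this by estimating the $L$-sizes via the dominant root $\rho$ of the order-$d$ recurrence they satisfy once $V$ has degree $d$: feasibility forces $\rho\le q$ and $\lvert C\rvert\approx\kappa\,\rho^{\,n-1}(q-\rho)$, so the gain from retaining a block of length $i$ scales with the number $\approx n/i$ of times it can tile a codeword of length $n$, weighed against a multiplicative penalty governed by $q$. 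The breakeven of these two effects is expected to occur precisely at $(i-1)(q+1)=n$, which is the stated threshold; turning this heuristic into a rigorous, constant-tight inequality — rather than an asymptotic $O(n/q)$ bound — and discharging the integrality of the optimum (the growth-rate estimates are naturally real-valued) is where I expect the bulk of the effort to lie.
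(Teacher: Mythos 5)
You are attempting to prove something the paper itself does not prove: the statement is labelled a \emph{conjecture}, and the paper's only support for it is empirical --- inspection of the optimal solutions of SQN computed for small $q$ and $n$. So there is no proof in the paper to compare against, and your proposal does not close the gap either: by your own admission it is a research plan whose decisive step is missing. You correctly identify why the paper's machinery stops at $\frac{n}{2}$: Proposition~\ref{lemma:upper_half_size} and the sign analysis of $c_i$ depend on the three families in the decomposition of $\bigcup_k (L_k R_{i+j-k})$ being disjoint, which holds precisely because a word of length $n$ cannot contain two disjoint blocks of length $>\frac{n}{2}$. Below that threshold a codeword can contain many copies of short blocks, the effect of transferring mass from $R_i$ to $L_i$ propagates nonlinearly through all levels $j>i$, and no affine exchange identity of the form $\lvert \bar{C}\rvert = (\lvert \bar{L}_m\rvert - \lvert L_m\rvert)c_m + \lvert C\rvert$ is available. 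Your dominant-root heuristic ($\lvert C\rvert \approx \kappa\,\rho^{\,n-1}(q-\rho)$, breakeven at $(i-1)(q+1)=n$) is a plausible explanation of \emph{why} the threshold should sit where the conjecture puts it, but it is an asymptotic, real-valued argument, and you concede that making it exact and integral is ``the bulk of the effort.'' That bulk is the entire problem.

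There is also a second gap you gloss over, concerning orientation, and it already appears in the range $i>\frac{n}{2}$ that you claim is ``supplied'' by Theorem~\ref{theorem:maximum} and Corollary~\ref{cor:1}. Those results give property~\eqref{eq:property}, i.e.\ $\lvert L_i\rvert\lvert R_i\rvert = 0$ for each $i>\frac{n}{2}$, with \emph{which} side is empty dictated level-by-level by the sign of $c_i$; they do not give $R_i=\emptyset$ uniformly. Proposition~\ref{proposition:symmetry} swaps $L_i\leftrightarrow R_i$ at \emph{all} levels simultaneously, so it cannot repair a maximum solution in which some levels above the threshold have $L_i=\emptyset$ and others have $R_i=\emptyset$; to fix the orientation as the conjecture demands you would additionally need to show that at some optimum the conditions $c_i$ can be chosen with consistent sign, which is not established anywhere. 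Finally, your generating-function reformulation $(1-A)(1-B)=1-qt$ with the objective $-\bigl([t^n]U+[t^n]V\bigr)$ is correct (provided you extend the recurrence one step to degree $n$, so that the identity holds modulo $t^{n+1}$) and is a tidy restatement in the spirit of Wang and Wang's approach, but it is pure bookkeeping: it neither supplies the sub-$\frac{n}{2}$ exchange inequality nor the orientation argument, which are the two things a proof of Conjecture~\ref{conjecture1} would actually require.
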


\begin{remark}
    If Conjecture~\ref{conjecture1} holds, then Blackburn's conjecture holds for $q \geq n$.
\end{remark}

We reduced the set of feasible solutions accordingly to find a large non-overlapping code for the parameter values, for which we could not solve the original optimization problem.
As shown in Table~\ref{tab:comparison} for all three parameter values a code was obtained that is much larger than the hitherto known ones. In the binary case, $n=30$, a code with exactly the same size as the one given by Construction~\ref{construction:levenshtein} was found.

\section*{Declarations}

\subsection*{Funding}

The research was partially supported by the scientific-research program P2-0359 and by the basic research project J1-50024, both financed by the Slovenian Research and Innovation Agency, and by the infrastructure program ELIXIR-SI RI-SI-2 financed by the
European Regional Development Fund and by the Ministry of Education, Science and Sport of Republic of Slovenia.

\subsection*{Competing interests}
The authors have no competing interests to declare that are relevant to the content of this article.

\subsection*{Availability of data and materials}
The set of optimal solutions of SQN that were determined in this study is available at: https://github.com/magdevska/nono-codes.

\subsection*{Code availability}
All code is available for download at: https://github.com/magdevska/nono-codes.

\appendix
\section{The algorithm for SQN}\label{appendix:algorithm}
Here we provide a pseudo-code of an algorithm that searches for a maximum non-overlapping code.
Procedure \textsf{compute\_size} is due to Theorem~\ref{prop:size}.
We obtain procedures \textsf{compute\_conditions} and \textsf{compute\_parameters} from Definition~\ref{definition:coefficients}. Theorem~\ref{theorem:maximum} raises procedure \textsf{determine\_upper\_half}.
Procedure \textsf{branch\_at\_level} combines Construction~\ref{construction:fimmel}, Propositions~\ref{proposition:symmetry} and \ref{proposition:evenq}, and Theorem~\ref{theorem:maximum}.
The solutions are compared with procedure \textsf{update\_maximum}. Procedure \textsf{solve\_sqn} initializes the optimal solution and starts the branching algorithm.

In order to obtain the results in a reasonable amount of time, our implementation uses multi-threading.
A depth of parallelization $d$ is given as parameter and should be selected depending on the available number of CPU cores and $q$.
We divide the branching stage into subproblems as follows.
If $i$ is smaller than $d$, then the branching procedure creates a new thread for each feasible value of $x_i$. After the child threads are joined, the parent thread selects the maximum values of its children $s^*$ and the corresponding solution set $C^*$.
If $i$ equals $d$, then the branching stage runs sequentially.

\begin{algorithm}
\caption{Non-overlapping code size}
    \begin{algorithmic}[1]
        \Require {$\{x_{1}, \dots, x_{i}\}$, $\{y_{1}, \dots, y_{i}\}$}
        \Procedure{compute\_size} {$i$}
        \State $s \leftarrow 0$
        \For{$ j \leftarrow 1\dots i- 1$}
            \State $s \leftarrow s + x_{j}y_{i-j}$
        \EndFor
        \EndProcedure
    \end{algorithmic}
\end{algorithm}

\begin{algorithm}
\caption{Computing the conditions $c_i$}
\begin{algorithmic}[1]
\Require {$\{x_{1}, \dots, x_{n-i}\}$, $\{y_{1}, \dots, y_{n-i}\}$, $\{\delta_{i+1}, \dots, \delta_{n-1}\}$, $\{p_1, \dots,p_{n-1-i}\}$, $i >  \frac{n}{2}$}
\Procedure {compute\_conditions} {$i$}
    \State $p_i \leftarrow $ compute\_parameters($i$)
    \State $c_i \leftarrow y_{n-i}  - x_{n-i}$
    \For {$j \leftarrow 1 \dots n-1-i$}
        \State $v \leftarrow 0$
        \For {$l \leftarrow 1 \dots j$}
            \State $v \leftarrow v + p_{lj}( y_l  - x_l )$
        \EndFor
        \State $c_i \leftarrow c_i + \left(\delta_{i+j} y_{n-i-j}  + (1-\delta_{i+j}) x_{n-i-j} \right) v$ 
    \EndFor
    \If {$c_i \geq 0$}
        \State $\delta_i \leftarrow 1$
    \Else
        \State $\delta_i \leftarrow 0$
    \EndIf
\EndProcedure
\Statex
\Procedure{compute\_parameters} {$i$}
    \State $p_{ii} \leftarrow 1$
    \For {$j \leftarrow i-1 \dots 1$}
        \State  $p_{ij} \leftarrow 0$
        \For {$k \leftarrow j\dots i-1$}
            \State  $p_{ij} \leftarrow p_{ij} + (\delta_{i+k} y_{i-k} + (1-\delta_{i+k}) x_{i-k})p_{ik}$
        \EndFor
    \EndFor
\EndProcedure
\end{algorithmic}
\end{algorithm}

\begin{algorithm}\label{algorithm:upperhalf}
    \caption{Determine $x_i$, $y_i$ for $i > \frac{n}{2}$}
    \begin{algorithmic}[1]
        \Require{$n$, $\{\delta_{\lfloor \frac{n}{2}\rfloor+1}, \dots, \delta_{n-1}\}$}
        \Procedure{determine\_upper\_half} {}
        \For{$i \leftarrow \lfloor \frac{n}{2}\rfloor+1 \dots n-1 $}
            \If {$\delta_i = 1$}
                \State $x_i \leftarrow $ compute\_size(i)
                \State $y_i \leftarrow 0$
            \Else
                \State $x_i \leftarrow 0 $
                \State $y_i \leftarrow $ compute\_size(i)
            \EndIf
        \EndFor
        \EndProcedure
    \end{algorithmic}
\end{algorithm}

\begin{algorithm}
\caption{Branch}
\begin{algorithmic}[1]
    \Require{$n \geq 3$, $q \geq 2$, $\{x_1,\dots,x_{i-1}\}$, $\{y_1,\dots,y_{i-1}\}$}
    \Procedure{Branch\_at\_level}{$i$}
        \If {$i = 1$}
            \For {$x_i \leftarrow 1\dots\lfloor\frac{n}{2}\rfloor$}
                \State $y_i \leftarrow q - x_i$
                \State compute\_conditions($n-1$)
                \State branch\_at\_level($i+1$)
            \EndFor
        \ElsIf {$i \leq \frac{n}{2}$}
            \State $s \leftarrow $ compute\_size($i$)
            \If {$x_1 = y_1, \dots, x_{i-1} = y_{i-1}$}
                \State max\_size $\leftarrow \lfloor\frac{s}{2}\rfloor$
            \Else
                \State max\_size $\leftarrow s$
            \EndIf
            \For{$x_i \leftarrow 0\dots$ max\_size}
                \State $y_i \leftarrow s - x_i$
                \State compute\_conditions($n-i$)
                \State branch\_at\_level($i+1$)
            \EndFor
        \Else
            \State determine\_upper\_half()
            \State update\_maximum()
        \EndIf
    \EndProcedure
\end{algorithmic}
\end{algorithm}

\begin{algorithm}
\caption{Update current maximum if necessary}
\begin{algorithmic}[1]
\Require{$S^*, C^*, n $}
\Procedure{update\_maximum}{$x,y$}
    \State $s \leftarrow $ compute\_size($n$)
    \If{$s = s^*$}
        \State $C^* \leftarrow C \cup (x,y)$
    \ElsIf{$s > s^*$}
        \State $C^* \leftarrow (x,y)$
        \State $s^* \leftarrow s$
    \EndIf
\EndProcedure
\end{algorithmic}
\end{algorithm}

\begin{algorithm}
\begin{algorithmic}[1]
\Procedure{solve\_sqn}{$q,n$}
    \State $s^* \leftarrow 0$
    \State branch\_at\_level(1)
\EndProcedure
\end{algorithmic}    
\end{algorithm}

\end{document}